\title{Generalized weights of convolutional codes}
\date{}
\author{Elisa Gorla and Flavio Salizzoni}
\theoremstyle{definition}
\newtheorem{theorem}{Theorem}[section]
\newtheorem{proposition}[theorem]{Proposition}
\newtheorem{lemma}[theorem]{Lemma}
\newtheorem{definition}[theorem]{Definition}
\newtheorem{example}[theorem]{Example}
\newtheorem{remark}[theorem]{Remark}
\newtheorem{corollary}[theorem]{Corollary}
\newtheorem{question}[theorem]{Question}
\newcommand{\F}{\mathbb F}
\newcommand{\Cc}{\mathcal C}
\newcommand{\Aa}{\mathcal A}
\newcommand{\rk}{\mathrm{rk}}
\newcommand{\wt}{\mathrm{wt}}
\newcommand{\maxwt}{\mathrm{maxwt}}
\newcommand{\df}{d_{\mathrm{free}}}
\newcommand{\rev}{\mathrm{rev}}
\newcommand{\supp}{\mathrm{supp}}
\begin{document}

\maketitle
	
\begin{abstract}
In 1997 Rosenthal and York defined generalized Hamming weights for convolutional codes, by regarding a convolutional code as an infinite dimensional linear code endowed with the Hamming metric. In this paper, we propose a new definition of generalized weights of convolutional codes, that takes into account the underlying module structure of the code. We derive the basic properties of our generalized weights and discuss the relation with the previous definition. We establish upper bounds on the weight hierarchy of MDS and MDP codes and show that that, depending on the code parameters, some or all of the generalized weights of MDS codes are determined by the length, rank, and internal degree of the code. We also prove an anticode bound for convolutional codes and define optimal anticodes as the codes which meet the anticode bound. Finally, we classify optimal anticodes and compute their weight hierarchy. 
\end{abstract}

\section{Introduction}

Invariants play an important role in coding theory, both from a theoretical and a practical point of view. They capture qualitative  properties of codes and allow us to quantify the performance of a code with respect to error correction. They also play a role in the classification of codes, as they provide us with an effective way to distinguish non-equivalent codes. 

Generalized weights are among the most studied invariants of codes. Helleseth, Kl\o ve, and Mykkeltveit in~\cite{HKM} propose the first definition of generalized Hamming weights of linear block codes. The $r^{th}$-generalized Hamming weight of a linear block code $\Cc$ is the smallest cardinality of a support of an $r$-dimensional subcode of $\Cc$. Generalized weights are further studied in~\cite{Wei} by Wei, who proves that they characterize the code performance in the wire-tap channel of type II. In fact, Wei shows in~\cite{Wei} that they allow us to measure how much information is gained by an adversary with a given number of taps. The interest in generalized Hamming weights also follows from their connection with the complexity of the minimal trellis diagram, as shown by Forney in~\cite{For}. 

Several definitions of generalized weights in the rank metric have also been proposed. In~\cite{OS} Oggier and Sboui define generalized weights for vector rank-metric codes. In~\cite{KMU} Kurihara, Matsumoto, and Uyematsu introduce relative generalized weights in the same context. In~\cite{Rav16}, Ravagnani proposes a definition of generalized weights for a rank-metric code based on optimal anticodes. In~\cite{MPM} Martinez-Pe\~nas and Matsumoto give different, but related, definitions for generalized weights and relative generalized weights of rank-metric codes. In~\cite{CGLLMS}, the authors propose a definition of weights for sum-rank metric codes that simultaneously extends both the Hamming and the rank metric case.

In~\cite{For}, Forney suggests that generalized Hamming weights may be extended to convolutional codes. Motivated by this observation, Rosenthal and York in~\cite{RY} introduce a notion of generalized Hamming weights for convolutional codes. Given a convolutional code $\Cc$, they forget about its module structure and regard it as an infinite dimensional vector space endowed with the Hamming metric. Their definition of generalized Hamming weights is the natural extension of the usual definition to infinite dimensional codes. Along the same lines, Cardell, Firer, and Napp in a series of papers~\cite{CFN17,CFN19,CFN20} introduce and study a new class of generalized weights for convolutional codes, based on the column distance instead of the Hamming distance. 

The aim of this paper is to introduce a new family of generalized weights for convolutional codes, which takes into account their module structure. Since a convolutional code is an $\F_q[x]$-submodule of $\F_q[x]^n$, it is natural to consider the set of its $\F_q[x]$-submodules instead of that of its $\F_q$-linear subspaces. This point of view was adopted in the past in order to define generalized weights for codes over rings. 
In~\cite{HS}, Horimoto and Shiromoto use submodules to define generalized weights of linear codes over finite chain rings. More recently, in~\cite{GR22} Gorla and Ravagnani give a definition of generalized weights for codes over finite principal ideal rings and for a large class of support functions, by looking at supports of submodules.

After defining generalized weights of convolutional codes, we investigate their basic properties and prove a number of results on generalized weights and related concepts. In particular, we define and study optimal convolutional anticodes.

The paper is organized as follows. In Section 2 we fix the notation and recall some algebraic background and useful facts about convolutional codes. In Section 3 we define the weight of a convolutional code and use this concept to define generalized weights for convolutional codes. We establish their basic properties and prove that our generalized weights are a natural extension of the generalized Hamming weights of linear block codes. We also discuss their relation with the generalized Hamming weights of convolutional codes defined in~\cite{RY}. In Section 4 we show that the computation of our generalized weights can be simplified by considering only submodules with certain properties. For instance, we prove that the generalized weights are realized by submodules generated by codewords of minimal support. In Section 5 we establish some upper bounds for the generalized weights of Maximum Distance Separable (MDS) and Maximum Distance Profile (MDP) codes. Our bounds imply that, depending on the code parameters, some or all of the generalized weights of MDS codes are determined by the code length, rank, and internal degree. Finally, in Section 6, we define the maximum weight of a convolutional code and prove an anticode bound. We define optimal convolutional anticodes as the codes which meet the anticode bound, we classify optimal anticodes, and compute their generalized weights.

\section{Preliminaries}

Let $\F_q$ denote the field with $q$ elements and let $\F_q[x]$ be the ring of univariate polynomials with coefficients in $\F_q$. For $\delta \geq 0$, we denote by $\F_q[x]_{\leq\delta}$ the set of polynomials of degree at most $\delta$. For every positive integers $n$ we write $\F_q[x]^n$ for the direct sum of $n$ copies of $\F_q[x]$. Since $\F_q$ is a field, we have that $\F_q[x]$ is a principal ideal domain. Therefore,
every submodule of $\F_q[x]^n$ is a free $\F_q[x]$-module of finite rank $k\leq n$ that admits a finite basis of cardinality $k$. We recall that given a submodule $M$ of $\F_q[x]^n$
a set $B\subseteq M$ is a \textbf{basis} for $M$ if $B$ generates $M$ and $B$ is \textbf{$\F_q[x]$-linearly independent}, that is, for every subset $\{b_{1},b_{2},\ldots ,b_{n}\}$ of $B$, $r_{1}b_{1}+r_{2}b_{2}+\cdots +r_{n}b_{n}=0$ implies that $r_{1}=r_{2}=\cdots =r_{n}=0$. From here on, we will only work with submodules of $\F_q[x]^n$, so it will be always possible to fix a basis. For $U\subseteq\F_q[x]^n$ a subset, we denote by $\langle U\rangle_{\F_q[x]}=\langle u\mid u\in U\rangle_{\F_q[x]}$ the $\F_q[x]$-module generated by the elements of $U$.

An $(n,k)$ \textbf{convolutional code} $\Cc$ is an $\F_q[x]$-submodule of $\F_q[x]^n$ of rank $k$. We always assume that $\Cc\neq 0$. An element $c(x)\in \Cc$ is an n-tuple $(p_1(x),\dots,p_n(x))$, where
$$p_j(x)=a_{j,0}+a_{j,1}x+\dots+a_{j,s_j}x^{s_j},$$ for all $j\in\{1,\dots, n\}$. Equivalently, we can express $c(x)$ with a more compact notation as
$$c(x)=\sum_{t=0}^{\deg(c(x))}c[t]x^t,$$
where $\deg(c(x))=\max_{j}\deg(p_j(x))$ and $c[t]=(a_{1,t},\dots,a_{n,t})\in\F_q^n$ for all $t$. The {\bf $j^{th}$ truncation} of $c(x)$ is
\begin{equation*}
    c_{[0,j]}(x)=\sum_{t=0}^{j}c[t]x^t.
\end{equation*}
One can associate to $\Cc=\langle c_1,\dots,c_k\rangle_{\F_q[x]}$ the linear block code $$\Cc[0]=\langle c_1[0],\dots,c_k[0]\rangle_{\F_q}.$$
Notice that $\Cc[0]$ does not depend on the choice of a system of generators for $\Cc$ and $\dim(\Cc[0])\leq k$. 

The (Hamming) weight $\wt_H(c)$ of $c\in \F_q^n$ is the number of non zero components of $c$. The \textbf{weight} of an element $c(x)\in\Cc$ is given by		
$$\wt(c(x))=\sum_{k=0}^{\deg(c(x))}\wt_H(c[t]).$$
Let $c(x)=(p_1(x),\dots,p_n(x))$. Then 
$$\supp(c(x))=\{(j,k): a_{j,k}\neq 0\}$$
is the \textbf{support} of $c(x)$. We have $$|\supp(c(x))|=\wt(c(x)).$$ Let $U\subseteq\Cc$ be a subset of $\Cc$. The \textbf{support} of $U$ is $$\supp(U)=\bigcup_{c(x)\in U}\supp(c(x)).$$
If $U=\langle c_1(x),\ldots,c_h(x)\rangle_{\F_q}$ is an $\F_q$-linear space, then it is easy to show that $$\supp(U)=\bigcup_{i=1}^h\supp(c_i(x)),$$
see e.g.~\cite[discussion after Definition 2.8]{GR22}.

If $C=(c[0]\dots c[\deg(c)])\in\F_q^{n\times(\deg(c)+1)}$ is the matrix with columns $c[0],\dots,c[\deg(c)]$, then $\supp(c(x))$ simply corresponds to the support of $C$, i.e., the set of positions of the non zero entries of $C$. Notice that, in this notation, the columns of $C$ are indexed starting from $0$ instead of $1$.

Let $\Cc_1,\Cc_2\in\F_q[x]^n$ be convolutional codes. We call $\Cc_1$ and $\Cc_2$ \textbf{isometric}
if there exists a weight-preserving $\F_q[x]$-isomorphism $\phi:\Cc_1\rightarrow\Cc_2$, that is, $\phi$ is a homomorphism of $\F_q[x]$-modules and $\wt(c(x))=\wt(\phi(c(x)))$ for all $c(x)\in\Cc_1$, see~\cite{Glu}.

A \textbf{generator matrix} of $\Cc$ is a matrix $G(x)$ with entries in $\F_q[x]$ whose rows form a basis of $\Cc$. We denote by $\delta$ the {\bf internal degree} of a convolutional code $\Cc$, i.e., the maximum degree of a full size minor of $G(x)$. It can be shown that $\delta$ is independent of the choice of $G(x)$.
A convolutional code $\Cc\subseteq\F_q[x]^n$ of rank $k$ and degree $\delta$ is an $(n, k,\delta)$ convolutional code. Let $G(x)=(p_{i,j}(x))_{i,j}$ be a generator matrix of $\Cc$ and let $\delta_i=\max_{j=1}^n\deg(p_{i,j}(x))$. Up to a row permutation, we may assume that $\delta_1\geq\delta_2\geq\dots\geq\delta_k$. We say that $G(x)$ is \textbf{row-reduced} if $\delta=\sum_{i=1}^k\delta_i$. Every code $\Cc$ has a row-reduced generator matrix, see e.g.~\cite{Kal}. 

A convolutional code $\Cc$ is \textbf{noncatastrophic} if it has a left-prime generator matrix $G(x)$, i.e., a generator matrix $G(x)$ with the property that if $G(x)=H(x)G'(x)$, then $H(x)$ is unimodular. If $\Cc$ has no left-prime generator matrix, we say that $\Cc$ is {\bf catastrophic}. The following theorem provides some useful  characterizations of being left-prime.

\begin{theorem}[{\cite[Theorem 10.2.6]{LPR}}]\label{theorem:leftprime}
Let $k\leq n$ and let $G(x)\in \F_q[x]^{k\times n}$. 
The following are equivalent:
\begin{enumerate}
\item $G(x)$ is left-prime.
\item The Smith form of $G(x)$ is $(\mathrm{Id}_{k\times k} 0_{k\times (n-k)})$.
\item $G(x)$ admits a right $n\times k$ inverse with entries in $\F_q[x]$.
\end{enumerate}
\end{theorem}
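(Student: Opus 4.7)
The natural framework is the Smith normal form over the principal ideal domain $\F_q[x]$: assuming $G(x)$ has full row rank (which each of (1)--(3) can be seen to force), there exist unimodular $U(x)\in\F_q[x]^{k\times k}$ and $V(x)\in\F_q[x]^{n\times n}$, together with nonzero invariant factors $s_1(x)\mid s_2(x)\mid\cdots\mid s_k(x)$, such that
\[
U(x)\,G(x)\,V(x) = \bigl(\mathrm{diag}(s_1(x),\ldots,s_k(x))\ \ 0_{k\times(n-k)}\bigr).
\]
My plan is to establish the equivalences cyclically as $(3)\Rightarrow(1)\Rightarrow(2)\Rightarrow(3)$, which confines the genuine use of the Smith form to the middle step.

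The implication $(3)\Rightarrow(1)$ is the shortest: given a right inverse $R(x)$ of $G(x)$ and a factorization $G(x)=H(x)G'(x)$ with $H(x)\in\F_q[x]^{k\times k}$, one has $H(x)\bigl(G'(x)R(x)\bigr)=G(x)R(x)=\mathrm{Id}_k$, so $H(x)$ admits a right inverse; since $\F_q[x]$ is commutative, taking determinants forces $\det(H(x))$ to be a unit, hence $H(x)$ is unimodular. For $(2)\Rightarrow(3)$, starting from $U(x)G(x)V(x)=(\mathrm{Id}_k\ \ 0)$, the matrix
\[
R(x) := V(x)\,\binom{\mathrm{Id}_k}{0_{(n-k)\times k}}\,U(x)
\]
is a right inverse of $G(x)$ by direct computation.

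The substantive direction is $(1)\Rightarrow(2)$. From the Smith form I would produce an explicit factorization of $G(x)$ that concentrates the invariant factors in the square left factor: set
\[
H(x) = U(x)^{-1}\,\mathrm{diag}(s_1(x),\ldots,s_k(x)), \qquad G'(x) = (\mathrm{Id}_k\ \ 0)\,V(x)^{-1},
\]
so that $H(x)G'(x)=G(x)$. Left-primality forces $H(x)$ to be unimodular, and since $\det(H(x))=\det(U(x)^{-1})\,\prod_{i=1}^k s_i(x)$ with $\det(U(x)^{-1})$ already a unit, this forces each $s_i(x)$ to be a unit. Normalizing the units into $U(x)$ then puts the Smith form in the required shape $(\mathrm{Id}_k\ \ 0)$.

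The only real obstacle is spotting the factorization used for $(1)\Rightarrow(2)$: once one sees that the Smith form itself manufactures a decomposition $G=HG'$ whose left factor records exactly the invariant factors, left-primality translates directly into unimodularity of that factor, and the theorem follows from standard manipulations with unimodular matrices. No delicate estimate or combinatorial argument is needed.
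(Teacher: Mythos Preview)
The paper does not prove this theorem; it is quoted without proof from an external reference (\cite[Theorem~10.2.6]{LPR}), so there is no in-paper argument to compare your proposal against.

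That said, your proof is correct and is essentially the standard one. The cycle $(3)\Rightarrow(1)\Rightarrow(2)\Rightarrow(3)$ works exactly as you describe: the Smith normal form over the PID $\F_q[x]$ provides both the right inverse in $(2)\Rightarrow(3)$ and the factorization $G=HG'$ with $H=U^{-1}\mathrm{diag}(s_1,\ldots,s_k)$ needed for $(1)\Rightarrow(2)$, and the determinant argument for $(3)\Rightarrow(1)$ is immediate. Your parenthetical remark that each condition forces full row rank is also correct, though you might spell it out: if the Smith form had a zero invariant factor, one could multiply on the left by $U^{-1}\mathrm{diag}(1,\ldots,1,x)U$ without changing $G$, violating left-primality; and conditions (2) and (3) visibly require rank $k$.
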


The \textbf{free distance} of a convolutional code $\Cc$ is defined as
$$\df(\Cc)=\min\{\wt(c(x))\mid c(x)\in\Cc\setminus\{0\}\}.$$
In~\cite{RS} Smarandache and Rosenthal established an analogue of the Singleton bound for convolutional codes.

\begin{theorem}[Singleton bound]\label{singletonbound}
Let $\Cc$ be an $(n,k,\delta)$ convolutional code. Then
$$\df(\Cc)\leq(n-k)\left(\left\lfloor \frac{\delta}{k}\right\rfloor+1\right)+\delta +1.$$
\end{theorem}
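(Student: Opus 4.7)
The strategy is to reduce the statement to the classical Singleton bound for linear block codes, applied to a suitable truncation of $\Cc$. I would fix a row-reduced generator matrix $G(x)$ of $\Cc$ with row degrees $\delta_1\geq\cdots\geq\delta_k\geq 0$ summing to $\delta$, and set $j:=\lfloor\delta/k\rfloor$. Since $\delta_k$ is the smallest row degree, the averaging inequality $\delta_k\leq\delta/k$ gives $\delta_k\leq j$, so at least one generator fits within degree $j$.

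Next, I would look at the $\F_q$-linear subspace $V$ of codewords of $\Cc$ of degree at most $j$. The coefficient map $c(x)\mapsto(c[0],\dots,c[j])\in\F_q^{n(j+1)}$ is an injective, weight-preserving $\F_q$-linear embedding, so $V$ is identified with a linear block code $B\subseteq\F_q^{n(j+1)}$ whose minimum Hamming distance is at least $\df(\Cc)$. The predictable-degree property of a row-reduced generator matrix implies that $c(x)=\sum_i u_i(x)g_i(x)$ has degree $\leq j$ precisely when $u_i=0$ for those $i$ with $\delta_i>j$ and $\deg u_i\leq j-\delta_i$ for the remaining $i$; hence
$$\dim_{\F_q}B \;=\; \sum_{i\,:\,\delta_i\leq j}(j-\delta_i+1) \;=:\; D.$$
Applying the block Singleton bound to $B$ then gives $\df(\Cc)\leq n(j+1)-D+1$.

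It remains to show $D\geq k(j+1)-\delta$, from which $\df(\Cc)\leq(n-k)(j+1)+\delta+1$ follows by direct substitution. Letting $t$ be the threshold index with $\delta_i\geq j+1$ for $i<t$ and $\delta_i\leq j$ for $i\geq t$, each ``missing'' row contributes at least $j+1$ to $\delta$, so $\sum_{i<t}\delta_i\geq(t-1)(j+1)$, and a brief rearrangement yields the required inequality. The main subtlety is precisely this last step: when some row degrees exceed $\lfloor\delta/k\rfloor$, the dimension of $B$ is strictly smaller than the naive estimate $k(j+1)-\delta$, and one must use an averaging argument to see that the ``lost'' rows contribute enough to $\delta$ to compensate. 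Apart from this point, the rest of the argument is a standard application of the block Singleton bound to the weight-preserving truncation of $\Cc$.
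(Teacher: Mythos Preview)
The paper does not prove this theorem; it is quoted from \cite{RS} as background. Your argument is essentially the standard one from that reference: embed the degree-$\leq j$ slice of $\Cc$ (with $j=\lfloor\delta/k\rfloor$) isometrically into $\F_q^{n(j+1)}$ via a row-reduced generator matrix and the predictable-degree property, then apply the block Singleton bound. The logical steps are correct, and the key inequality $D\geq k(j+1)-\delta$ follows exactly as you indicate.

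One comment on your closing paragraph: the description of the ``subtlety'' is backwards. When some $\delta_i>j$, the dimension $D=\sum_{\delta_i\leq j}(j+1-\delta_i)$ is obtained from $\sum_{i=1}^{k}(j+1-\delta_i)=k(j+1)-\delta$ by \emph{omitting nonpositive} terms, so $D$ is at least $k(j+1)-\delta$, not smaller. Thus there is nothing to ``compensate'' for; the inequality you need is immediate (and your threshold argument is just a reformulation of this observation). The proof is fine, but the narrative around it should be corrected.
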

Similarly to the case of block codes endowed with the Hamming metric, a code that meets the Singleton bound is called \textbf{Maximum Distance Separable} (MDS). Notice that linear block codes are exactly the convolutional codes with internal degree $\delta=0$. Coherently, the Singleton bound from Theorem~\ref{singletonbound} coincides with the usual Singleton bound for $\delta=0$. In particular, a linear block code is MDS if an only if it is MDS when regarded as a convolutional code with internal degree $0$.

The $j^{th}$ \textbf{column distance} of a convolutional code $\Cc$ is defined as
\begin{equation*}
d_j^c(\Cc)=\min\{\wt(c_{[0,j]}(x)):c(x)\in\Cc\text{ and }c[0]\neq0\}.
\end{equation*}
In~\cite{LRS}, the authors proved the following bound.

\begin{theorem}
Let $\Cc$ be an $(n,k,\delta)$ convolutional code. Then
$$d_j^c(\Cc)\leq(n-k)(j+1) +1,$$
for all $j\in\mathbb{N}_0$.
\end{theorem}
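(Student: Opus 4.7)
The plan is to build a single codeword of $\Cc$ witnessing the bound: an information sequence $u(x) = u_0 + u_1 x + \cdots + u_j x^j \in \F_q[x]^k$ such that the corresponding codeword $c(x) = u(x)G(x)$ satisfies $c[0] \neq 0$ and $\wt(c_{[0,j]}(x)) \leq (n-k)(j+1) + 1$. I will construct the blocks $u_t$ one at a time in a greedy way, combining the Singleton bound on the initial block of $\Cc$ with the covering-radius bound on each subsequent block.

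First I would fix a generator matrix $G(x) = \sum_{i \geq 0} G_i x^i$ of $\Cc$ for which $G_0 \in \F_q^{k \times n}$ has full row rank $k$ (for instance a left-prime generator matrix, whose right inverse specializes at $x=0$ to a right inverse of $G_0$, by Theorem~\ref{theorem:leftprime}), so that $\Cc[0]$ is an $[n,k]$ linear block code. Applying the Singleton bound to $\Cc[0]$, I select a nonzero $u_0 \in \F_q^k$ with $\wt(u_0 G_0) = d(\Cc[0]) \leq n - k + 1$, fixing $c[0] \neq 0$ of weight at most $n - k + 1$. For $t = 1, \ldots, j$, having already fixed $u_0, \ldots, u_{t-1}$, the block
$$c[t] = u_t G_0 + \sum_{i=0}^{t-1} u_i G_{t-i}$$
ranges over a coset of $\Cc[0]$ in $\F_q^n$ as $u_t$ varies over $\F_q^k$. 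Since the covering radius of any $[n,k]$ linear block code is at most $n-k$ (any information set of size $k$ leaves $n-k$ free positions for a coset representative), I can pick $u_t$ so that $\wt(c[t]) \leq n - k$. Summing the block weights yields $\wt(c_{[0,j]}(x)) \leq (n-k+1) + j(n-k) = (n-k)(j+1) + 1$, and since $c[0] \neq 0$ this codeword is admissible in the definition of $d_j^c(\Cc)$.

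The delicate point is the initial choice of a generator matrix with $G_0$ of full row rank: without it, the image $\{u_t G_0 : u_t \in \F_q^k\}$ is a proper subspace of $\Cc[0]$ and the covering-radius estimate degrades. For noncatastrophic codes this comes for free from Theorem~\ref{theorem:leftprime}, which is the natural setting in which column distances are studied in~\cite{LRS}. The remaining inputs are classical block-code inequalities: Singleton contributes the head term $n-k+1$, and the bound $\rho(\Cc[0]) \leq n-k$ on the covering radius contributes each of the $j$ tail terms $n-k$, so no further combinatorial obstruction arises.
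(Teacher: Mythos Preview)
The paper does not prove this bound; it quotes it from~\cite{LRS}. Your argument is correct under the hypothesis $\dim\Cc[0]=k$ (equivalently, $G_0$ has full row rank for some, hence every, generator matrix), and you correctly observe that this holds for noncatastrophic codes via Theorem~\ref{theorem:leftprime}. Some such hypothesis is genuinely needed: the catastrophic code $\Cc=\langle(1,1),(x,0)\rangle_{\F_q[x]}\subseteq\F_q[x]^2$ has $n=k=2$ but $d_0^c(\Cc)=2>1$, so the inequality as stated fails in general; in~\cite{LRS} a delay-free encoder is part of the standing setup.

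One small correction to your final paragraph: when $G_0$ lacks full row rank, the image $\{u_tG_0:u_t\in\F_q^k\}$ is not a proper subspace of $\Cc[0]$---it always equals $\Cc[0]$, since $\Cc[0]$ is precisely the row space of $G_0$ for any generator matrix. What actually degrades is that $\dim\Cc[0]<k$, so both the Singleton bound and the covering-radius bound for $\Cc[0]$ yield $n-\dim\Cc[0]$ in place of $n-k$. Your greedy block-by-block construction (Singleton on block~$0$, the redundancy bound on the covering radius for blocks $1,\ldots,j$) is a clean route and makes the role of $\Cc[0]$ explicit; the usual presentation instead applies a single Singleton-type count to the $k(j{+}1)\times n(j{+}1)$ truncated sliding generator matrix $G_j^c$ and then argues that a minimum-weight witness can be taken with $u_0\neq0$.
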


A code with $k\neq n$ that achieves this bound for $j=0,\dots,L=\left\lfloor \frac{\delta}{k}\right\rfloor+\left\lfloor \frac{\delta}{n-k}\right\rfloor$ is called \textbf{Maximum Distance Profile} (MDP). 
Finally, $\Cc$ is \textbf{strongly Maximum Distance Separable} (sMDS) if $\df(\Cc)=d^c_{M}(\Cc)$, where $M=\left\lfloor \frac{\delta}{k}\right\rfloor+\left\lceil \frac{\delta}{n-k}\right\rceil$. Strongly MDS convolutional codes are a family of MDS convolutional codes and were introduced in \cite{LRS}.

Let $\Cc$ be an $(n,k,\delta)$ convolutional code. The \textbf{dual} code $\Cc^{\perp}$ is defined as
\begin{equation*}
    \Cc^{\perp}=\{d(x)\in\F_q[x]^n\mid d(x)c(x)^{T}=0\text{ for all }c(x)\in\Cc\}.
\end{equation*}
It is well known that, if $\Cc$ is noncatastrophic, then $\Cc^{\perp}$ is an $(n,n-k,\delta)$ convolutional code and  $(\Cc^{\perp})^{\perp}=\Cc$.

\section{Generalized weights}

In this section, we propose a new definition of generalized weights for convolutional codes and we establish some of their basic properties. We discuss their connection with the generalized weights defined in~\cite{RY} and the generalized weights in the Hamming metric. Finally, we prove by means of an example that these generalized weights do not satisfy Wei duality.
	
We start by giving a characterization of noncatastrophicity, which will be useful in the sequel.

\begin{proposition}\label{proposition:noncatastrophic}
Let $\Cc\subseteq\F_q[x]^n$ be an $(n,k,\delta)$ convolutional code. Then, $\Cc$ is noncatastrophic if and only if for every $r(x)\in\F_q[x]\setminus\{0\}$ we have that
\begin{equation*}
r(x)c(x)\in\Cc\text{ implies }c(x)\in\Cc.
\end{equation*}
\end{proposition}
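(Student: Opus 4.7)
My plan is to prove the two implications separately, using Theorem~\ref{theorem:leftprime} as the bridge between the algebraic and the module-theoretic descriptions. The condition in the statement says exactly that the quotient module $\F_q[x]^n/\Cc$ is torsion-free, so the task is to identify this with the existence of a left-prime generator matrix.

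For the implication ($\Rightarrow$), I would start with a left-prime generator matrix $G(x)$ and use Theorem~\ref{theorem:leftprime}(3) to obtain a right inverse $H(x)\in\F_q[x]^{n\times k}$ so that $G(x)H(x)=\mathrm{Id}_{k}$. Suppose $r(x)c(x)\in\Cc$ with $r(x)\neq 0$, so that $r(x)c(x)=u(x)G(x)$ for some $u(x)\in\F_q[x]^k$. Right-multiplying by $H(x)$ yields
\begin{equation*}
r(x)\,c(x)H(x) \;=\; u(x)G(x)H(x) \;=\; u(x).
\end{equation*}
Since $c(x)H(x)\in\F_q[x]^k$, every entry of $u(x)$ is divisible by $r(x)$, so $v(x):=u(x)/r(x)\in\F_q[x]^k$. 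Then $r(x)c(x)=r(x)v(x)G(x)$, and dividing by $r(x)$ (which is legitimate in $\F_q[x]^n$ since $\F_q[x]$ is a domain) gives $c(x)=v(x)G(x)\in\Cc$.

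For ($\Leftarrow$), I would argue by contrapositive. Assume $\Cc$ is catastrophic and fix any generator matrix $G(x)$. By Theorem~\ref{theorem:leftprime}(2), its Smith normal form differs from $(\mathrm{Id}_{k\times k}\;\;0_{k\times(n-k)})$, so there exist unimodular $U(x)\in\F_q[x]^{k\times k}$ and $V(x)\in\F_q[x]^{n\times n}$ with
\begin{equation*}
U(x)G(x)V(x) \;=\; \bigl(\mathrm{diag}(d_1(x),\ldots,d_k(x))\;\;\; 0\bigr),
\end{equation*}
and at least one invariant factor, say $d_i(x)$, is not a unit of $\F_q[x]$. Let $w_1(x),\ldots,w_n(x)$ denote the rows of $V(x)^{-1}$; these form an $\F_q[x]$-basis of $\F_q[x]^n$, and multiplying the equation above by $V(x)^{-1}$ on the right shows that the rows of $U(x)G(x)$, namely $d_1(x)w_1(x),\ldots,d_k(x)w_k(x)$, form a basis of $\Cc$. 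In particular $d_i(x)w_i(x)\in\Cc$, but $w_i(x)\notin\Cc$: if it were, we could write $w_i(x)=\sum_{j=1}^k a_j(x)d_j(x)w_j(x)$, and the linear independence of $w_1(x),\ldots,w_n(x)$ over $\F_q[x]$ would force $a_i(x)d_i(x)=1$, contradicting the fact that $d_i(x)$ is not a unit. Taking $r(x)=d_i(x)$ and $c(x)=w_i(x)$ produces the required counterexample to the module-theoretic condition.

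The only real obstacle is bookkeeping: one must use the equivalent characterizations from Theorem~\ref{theorem:leftprime} correctly (right inverse in one direction, Smith form in the other) and be careful that the rows of $V(x)^{-1}$ are genuinely an $\F_q[x]$-basis of $\F_q[x]^n$, which relies on $V(x)$ being unimodular so that $V(x)^{-1}\in\F_q[x]^{n\times n}$ as well.
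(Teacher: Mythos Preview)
Your proof is correct and follows essentially the same approach as the paper: both directions rely on Theorem~\ref{theorem:leftprime}, using the right inverse for $(\Rightarrow)$ and the Smith normal form for the contrapositive of $(\Leftarrow)$. The only cosmetic differences are notational (the paper writes $SDT=G$ and works with the first row of $T$, whereas you write $UGV=D$ and work with an arbitrary row $w_i$ of $V^{-1}$), but the underlying argument is identical.
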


\begin{proof}
Suppose that $\Cc$ is noncatastrophic and let $G(x)$ be a generator matrix for $\Cc$. Then, by Theorem~\ref{theorem:leftprime} we have that there exists $H\in\F_q[x]^{n\times k}$ such that $GH=\mathrm{Id}$. If $r(x)c(x)\in\Cc$, then there exists $u(x)\in\F_q(x)^k$ such that $u(x)G(x)=r(x)c(x)$. Therefore
$$u(x)=u(x)G(x)H(x)=r(x)c(x)H(x).$$
This implies that $r(x)$ divides each entry of $u(x)$, hence $c(x)\in\Cc$.
	    
In order to prove the converse, we suppose that $\Cc$ is catastrophic and we prove that there exists $r(x)\in\F_q[x]\setminus\{0\}$ and $c(x)\in\F_q[x]^n\setminus \Cc$ such that $r(x)c(x)\in\Cc$. Since $\Cc$ is catastrophic, by Theorem \ref{proposition:noncatastrophic} we may assume up to an isometry that the Smith form of $G(x)$ is $D(x)=(\tilde D\, 0_{k\times (n-k)})$, where $\tilde D$ is a diagonal matrix with entries $d_1(x),\dots,d_k(x)\in\F_q[x]$ such that $\deg(d_1(x))\geq 1$. Let $S,T$ be two invertible matrices such that $SDT=G$. We have that
\begin{equation*}
(1,0,\dots,0)S^{-1}G=(1,0,\dots,0)S^{-1}SDT=(d_1(x),0,\dots,0)T\in\Cc.
\end{equation*}
In order to conclude, it suffices to show that $(1,0,\dots,0)T\notin\Cc$. Suppose by contradiction that there exists $u(x)\in\F_q[x]^k$ such that $u(x)G=(1,0,\dots,0)T$. Let $\tilde u(x)=(\tilde u_1(x),\dots,\tilde u_k(x))=u(x)S$. Since $T$ is invertible, then $\tilde u(x)D=(1,0,\dots,0)$, hence $\tilde u_1(x)d_1(x)=1$. This contradicts the assumption that $\deg(d_1(x))\geq1$. Therefore $(d_1(x),0,\dots,0)T\in\Cc$, while $(1,0,\dots,0)T\notin\Cc$.
\end{proof}

In the next proposition, we collect some facts on $(n,k,0)$ convolutional codes, that we will use throughout the paper.

\begin{proposition}\label{prop:C[0]}
Let $\Cc\subseteq\F_q[x]^n$ be an $(n,k,0)$ convolutional code. Then
$\Cc=\langle\Cc[0]\rangle_{\F_q[x]}$ and $\Cc[0]\subseteq\F_q^n$ is a linear block code with $\dim(\Cc[0])=k$ and minimum distance $d_{\min}(\Cc[0])=\df(\Cc)$. Moreover $\Cc^{\perp}=\langle\Cc[0]^{\perp}\rangle_{\F_q[x]}$.
\end{proposition}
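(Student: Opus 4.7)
The plan is to reduce everything to the existence of a constant generator matrix for $\Cc$, and then derive the four claims in sequence. First, I would pick a row-reduced generator matrix $G(x)$ of $\Cc$, which exists by the result of Kalman cited in the preliminaries. Because $G(x)$ is row-reduced, its internal degree equals $\sum_{i=1}^k \delta_i$, and $\delta=0$ forces every row degree $\delta_i$ to be $0$. Hence $G(x)=G\in\F_q^{k\times n}$, and its rows, being $\F_q[x]$-linearly independent, are a fortiori $\F_q$-linearly independent. Since $\Cc[0]$ is spanned by the constant parts of the rows of any generator matrix, the rows of $G$ form a basis of $\Cc[0]$ over $\F_q$; this shows $\dim(\Cc[0])=k$. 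The identity $\Cc=\langle\Cc[0]\rangle_{\F_q[x]}$ is now immediate: the inclusion $\supseteq$ follows from $\Cc[0]\subseteq\Cc$, and $\subseteq$ holds because every element of $\Cc$ is an $\F_q[x]$-combination of the rows of $G$, which lie in $\Cc[0]$.

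For the free distance, I would use the two natural bounds. On one hand, any $c\in\Cc[0]$ is a constant polynomial in $\Cc$ with $\wt(c)=\wt_H(c)$, so $\df(\Cc)\le d_{\min}(\Cc[0])$. On the other hand, for a non-zero $c(x)=\sum_t c[t]x^t\in\Cc$, writing $c(x)=u(x)G$ with $u(x)=\sum_t u[t]x^t$ gives $c[t]=u[t]G\in\Cc[0]$ for every $t$; choosing any $t^\ast$ with $c[t^\ast]\ne 0$ yields $\wt(c(x))\ge \wt_H(c[t^\ast])\ge d_{\min}(\Cc[0])$, hence $\df(\Cc)\ge d_{\min}(\Cc[0])$.

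Finally, for the dual I would prove both inclusions by elementary manipulations. For $\supseteq$: any $d\in\Cc[0]^\perp$ is orthogonal to every row of $G$, so for $c(x)=u(x)G\in\Cc$ we have $d\cdot c(x)^T=(dG^T)u(x)^T=0$, giving $\Cc[0]^\perp\subseteq\Cc^\perp$, and the $\F_q[x]$-submodule generated by $\Cc[0]^\perp$ therefore also lies in $\Cc^\perp$. For $\subseteq$: let $d(x)=\sum_t d[t]x^t\in\Cc^\perp$. For every $c\in\Cc[0]\subseteq\Cc$ we have $0=d(x)c^T=\sum_t(d[t]c^T)x^t$, so each $d[t]c^T=0$ for every $c\in\Cc[0]$, i.e., $d[t]\in\Cc[0]^\perp$; thus $d(x)\in\langle\Cc[0]^\perp\rangle_{\F_q[x]}$.

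There is no serious obstacle; the only delicate point is justifying that the degree-$0$ hypothesis yields a constant generator matrix, which is exactly what the row-reduced form provides. Everything else is a direct verification using $c(x)=u(x)G$ and comparing coefficients term by term.
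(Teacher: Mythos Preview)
Your proof is correct and follows essentially the same approach as the paper: both hinge on the fact that $\delta=0$ forces a constant generator matrix, from which all four claims follow by direct verification. Your write-up is actually more explicit than the paper's---you justify the constant generator matrix via the row-reduced form and carefully pick a nonzero coefficient $c[t^\ast]$ for the free-distance inequality, whereas the paper simply asserts $\Cc[0]=\Cc\cap\F_q^n$ and uses $c[0]$ (which could in principle vanish); similarly, you spell out both inclusions for the dual where the paper just says it ``follows from the definition.''
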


\begin{proof}
Since $\Cc$ is an $(n,k,0)$ convolutional code, then $\Cc[0]=\Cc\cap\F_q^n$ and $\Cc=\langle\Cc[0]\rangle_{\F_q[x]}$. Since $\Cc[0]\subseteq\F_q^n$, then $$\dim(\Cc[0])=\rk(\langle\Cc[0]\rangle_{\F_q[x]})=\rk(\Cc).$$
Since $\Cc[0]\hookrightarrow\Cc$ is an $\F_q$-linear isometry, $d_{\min}(\Cc[0])\geq\df(\Cc)$. However, for any $c(x)\in\Cc$, $c[0]\in\Cc[0]$ and $\wt(c[0])\leq\wt(c(x))$, showing that $d_{\min}(\Cc[0])\leq\df(\Cc)$. Finally, the equality $\Cc^{\perp}=\langle\Cc[0]^{\perp}\rangle_{\F_q[x]}$ follows from the definition of dual code.
\end{proof}

In order to define the generalized weights of a convolutional code, we first wish to define a notion of weight of a code. This replaces the usual notion of cardinality of the support of a code.

\begin{definition}\label{defn:subcodewt}
Let $\Cc\subseteq\F_q[x]^n$ be a convolutional code of rank $k\leq n$. The {\bf weight} of $\Cc$ is $$\wt(\Cc)=\min\{|\supp(\langle c_1(x),\ldots,c_k(x)\rangle_{\F_q})| : \Cc=\langle c_1(x),\ldots,c_k(x)\rangle_{\F_q[x]}\}.$$
\end{definition}

Notice that the set $\{\supp(\langle c_1(x),\ldots,c_k(x)\rangle_{\F_q}) : \Cc=\langle c_1(x),\ldots,c_k(x)\rangle_{\F_q[x]}\}$ may not have a minimum with respect to inclusion, as the next example shows.

\begin{example}
Let $\Cc=\langle (1+x^2,0,1),(1,1,0)\rangle\subseteq\F_2[x]^3$. Then $\Cc$ is a noncatastrophic $(3,2,2)$ binary code and $$\begin{pmatrix} 1+x^2 & 0 & 1 \\ 1 & 1 & 0 \end{pmatrix} \mbox{ and } \begin{pmatrix} 1 & x^2 & 1 \\ 1 & 1 & 0 \end{pmatrix}$$ are two row-reduced generator matrices for $\Cc$ whose $\F_q$-rowspaces have incomparable supports. We claim that these supports are minimal in the set $$\{\supp(c_1(x))\,\cup\,\supp(c_2(x)):\Cc=\langle c_1(x),c_2(x)\rangle_{\F_2[x]}\}.$$ In fact, if $c_1(x),c_2(x)\in\F_2[x]$ are generators of $\Cc$, then $$\supp(c_1(x))\,\cup\,\supp(c_2(x))\supseteq\{(1,0),(2,0),(3,0)\}.$$ Moreover, we have that $\supp(c_1(x))\,\cup\,\supp(c_2(x))\neq\{(1,0),(2,0),(3,0)\}$, otherwise $\langle c_1(x),c_2(x)\rangle_{\F_2}$ would be a two-dimensional binary linear block code of length 3 and the largest dimension of a binary linear block code inside $\Cc$ is one. 
\end{example}

We are now ready to define the generalized weights of a convolutional code.

\begin{definition}\label{defgenw}
Let $\Cc$ be an $(n,k,\delta)$ convolutional code. For $1\leq r\leq k$, the r-th \textbf{generalized weight} of $C$ is
\begin{equation*}
d_r(\Cc)=\min\{\wt(\mathcal{D})\mid \mathcal{D}\subseteq\Cc\mbox{ is a subcode of } \rk(\mathcal{D})\geq r\}.
\end{equation*}
We say that $\mathcal{D}\subseteq\Cc$ {\bf realizes} the $r$-th generalized weight of $\Cc$ if $\rk(\mathcal{D})=r$ and $d_r(\Cc)=\wt(\mathcal{D})$.
\end{definition}

In the next lemma we provide several equivalent formulations of Definition~\ref{defgenw}. In particular, we show that $d_r(\Cc)$ is the minimum weight of a subcode of $\Cc$ of rank $r$.

\begin{lemma}
Let $\Cc$ be a convolutional code. Let $\mathcal{V}$ denote the set of $\F_q$-linear subspaces of $\Cc$. Then
\begin{enumerate}
\item $d_r(\Cc)=\min\{|\supp(U)|: U\subseteq\Cc\text{ and } \rk(\langle U\rangle_{\F_q[x]})\geq r\}$,
\item $d_r(\Cc)=\min\{|\supp(U)|: U\subseteq\Cc\text{ and }\rk(\langle U\rangle_{\F_q[x]})= r\}$,
\item $d_r(\Cc)=\min\{|\supp(U)|: U\in\mathcal{V}\text{ and } \rk(\langle U\rangle_{\F_q[x]})\geq r\}$,
\item $d_r(\Cc)=\min\{|\supp(U)|: U\in\mathcal{V}\text{ and }\rk(\langle U\rangle_{\F_q[x]})= r\}$,
\item $d_r(\Cc)=\min\{\wt(\mathcal{D})\mid \mathcal{D}\subseteq\Cc\mbox{ is a subcode of } \rk(\mathcal{D})=r\}$,
\end{enumerate}
for $1\leq r\leq k$.
\end{lemma}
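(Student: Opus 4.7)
Writing $d = d_r(\Cc)$ and denoting the five right-hand sides by $m_1,\ldots,m_5$, the plan is to close a short cycle of inequalities rather than prove all ten equalities one by one. Because every $\F_q$-subspace is a subset and because $\rk=r$ is stronger than $\rk\geq r$, the index sets nest as $A_4\subseteq A_2\subseteq A_1$ and $A_4\subseteq A_3\subseteq A_1$, which forces $m_1\leq m_3\leq m_4$ and $m_1\leq m_2\leq m_4$; similarly the set of submodules of rank exactly $r$ is contained in the set of submodules of rank at least $r$, giving $d\leq m_5$. It therefore suffices to prove the three inequalities $m_4\leq d$, $m_5\leq d$, and $d\leq m_1$ in order to collapse everything to a single value.

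For $m_4\leq d$ and $m_5\leq d$, I would pick a submodule $\mathcal{D}\subseteq\Cc$ of rank $s\geq r$ that realizes $d$, together with a basis $c_1,\ldots,c_s$ of $\mathcal{D}$ such that $\wt(\mathcal{D})=|\supp(\langle c_1,\ldots,c_s\rangle_{\F_q})|=d$. Setting $\mathcal{D}'=\langle c_1,\ldots,c_r\rangle_{\F_q[x]}$ produces a rank-$r$ submodule whose basis $c_1,\ldots,c_r$ gives $\wt(\mathcal{D}')\leq|\supp(\langle c_1,\ldots,c_r\rangle_{\F_q})|\leq d$, proving $m_5\leq d$, while the same truncation viewed as the $\F_q$-subspace $U=\langle c_1,\ldots,c_r\rangle_{\F_q}\in A_4$ has $|\supp(U)|\leq d$, proving $m_4\leq d$.

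For the reverse inequality $d\leq m_1$, take an arbitrary $U\in A_1$ with finite support. Because $\langle U\rangle_{\F_q[x]}$ has rank at least $r$, the $\F_q(x)$-span of $U$ inside $\F_q(x)^n$ has dimension at least $r$, so $U$ itself must contain $r$ elements $u_1,\ldots,u_r$ that are $\F_q[x]$-linearly independent. The submodule $\mathcal{D}'=\langle u_1,\ldots,u_r\rangle_{\F_q[x]}$ then has rank exactly $r$, and the definition of $\wt$ together with the identity $\supp(\langle u_1,\ldots,u_r\rangle_{\F_q})=\bigcup_i\supp(u_i)\subseteq\supp(U)$ recalled in Section 2 yields $d\leq\wt(\mathcal{D}')\leq|\supp(U)|$; minimizing over $U$ gives $d\leq m_1$. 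The only real subtlety is this last extraction step: it is precisely what lets us replace the very permissive condition ``$U$ is a subset with $\rk(\langle U\rangle_{\F_q[x]})\geq r$'' by an honest basis of a rank-$r$ submodule while only shrinking the support, and hence it is what forces the five apparently different minima to coincide.
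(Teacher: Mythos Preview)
Your proof is correct and follows essentially the same approach as the paper: both arguments set up a cycle of inequalities among $d,m_1,\ldots,m_5$ using the obvious nestings of the index sets, and both close the cycle via the same extraction step, namely pulling $r$ $\F_q[x]$-linearly independent elements out of an arbitrary $U$ with $\rk(\langle U\rangle_{\F_q[x]})\geq r$ to produce a rank-$r$ submodule of no larger support. The organization differs only cosmetically (the paper identifies $m_1=m_3$ and $m_2=m_4$ via the support identity and then proves $m_3\geq m_5$, whereas you nest $A_4\subseteq A_2\subseteq A_1$ etc.\ and prove $m_4\leq d$, $m_5\leq d$, $d\leq m_1$), but the substantive step is identical.
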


\begin{proof}
We already observed that the support of a linear space is equal to the union of the supports of a system of generators. Therefore, for any set $U$, the support of $U$ is the same as the support of the $\F_q$-linear space generated by $U$. This proves that definitions 1. and 3. are equivalent and 2. and 4. are equivalent. Moreover, by comparing the sets over which we minimize, one sees that the minimum in 4. is greater than or equal to that in 3., the minimum in 5. is greater than or equal to the $r$-th generalized weight of $\Cc$, the minimum in 5. is greater than or equal to that in 4., and the $r$-th generalized weight of $\Cc$ is greater than or equal to the minimum in 3. In order to prove that all numbers coincide, it suffices to show that the minimum in 3. is greater than or equal to that in 5.

Let $U\subseteq \Cc$ be an $\F_q$-linear subspace such that $\rk(\langle U\rangle_{\F_q[x]})\geq r$. Then $\dim(U)\geq r$ and there exists $U'\subseteq U$ an $\F_q$-linear subspace such that $\dim(U')=\rk(\langle U'\rangle_{\F_q[x]})=r$. 
We conclude, since $U'\subseteq U$ implies that $\supp(U')\subseteq \supp(U)$.
\end{proof}
	
\begin{remark}\label{rmk:noncatsubcodes}
Consider a noncatastrophic code $\Cc$. One may also define the generalized weights as
\begin{equation*}
\begin{split}	
\tilde d_r(\Cc)=&\min\{\wt(\mathcal{D})\mid \mathcal{D}\subseteq\Cc\text{ is a noncatastrophic subcode of }\rk(\mathcal{D})=r\}
\end{split}
\end{equation*}
for $1\leq r\leq k$. Notice that this definition is not equivalent to Definition~\ref{defgenw}. Indeed, it may happen that $\tilde d_1(\Cc)\neq \df(\Cc)$, while $d_1(\Cc)=\df(\Cc)$ for every $\Cc$, as shown in Proposition~\ref{propproperties}. For example, let $\Cc=\langle (1,1+x+x^2+x^3)\rangle_{\F_2[x]}\subseteq\F_2[x]^2$. Then $\df(\Cc)=\wt(1+x,1+x^4)=4$ and every element of minimum weight generates a catastrophic code. On the other side, $\wt(1,1+x+x^2+x^3)=5$ hence $\tilde d_1(\Cc)=5$. Since we want the first generalized weight to be equal to $\df(\Cc)$, we will not discuss this definition further.
\end{remark}
	
In the next proposition, we establish some basic properties of the generalized weights. In particular, we prove that they are strictly increasing and that the minimum distance coincides with the first generalized weight. Moreover, we provide an upper bound on each generalized weight. 

\begin{proposition}\label{propproperties}
Let $\Cc\subseteq \mathcal{D}\subseteq \F_q[x]^n$ be convolutional codes, let $k=\rk(\Cc)$. Then
\begin{enumerate}
\item $d_1(\Cc)=\df(\Cc)$.
\item $d_r(\Cc)<d_{r+1}(\Cc)$ for all $1\leq r\leq k-1$.
\item $d_r(\mathcal{D})\leq d_{r}(\Cc)$ for all $1\leq r\leq k$.
\item $d_r(\Cc)\leq n(\delta_1+1)-k+r$ for all $1\leq r\leq k$.
\item $d_k(\Cc)\leq\wt(\Cc)$.
\end{enumerate}
\end{proposition}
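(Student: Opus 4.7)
The plan is to handle items (1), (3), (5) directly from the definitions and to treat items (2) and (4) by a common hyperplane reduction argument on $\F_q$-linear subspaces of $\Cc$, using the equivalent descriptions of $d_r(\Cc)$ proved in the preceding lemma. For item (1), any rank-$1$ subcode of $\Cc$ has the form $\langle c(x)\rangle_{\F_q[x]}$ with $c(x)\neq 0$, and its $\F_q[x]$-generating sets are precisely the singletons $\{\alpha c(x)\}$ with $\alpha\in\F_q\setminus\{0\}$; hence $\wt(\langle c(x)\rangle_{\F_q[x]})=\wt(c(x))$, and minimizing over $c(x)\neq 0$ gives $d_1(\Cc)=\df(\Cc)$. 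Item (3) is immediate because every subcode of $\Cc$ is also a subcode of $\mathcal{D}$, so the minimum of $\wt$ over the smaller family is at least the minimum over the larger family. For item (5), $\Cc$ is itself a rank-$k$ subcode of $\Cc$, so $d_k(\Cc)\leq\wt(\Cc)$.

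The key tool for the remaining parts is the following hyperplane reduction. Let $V\subseteq\Cc$ be an $\F_q$-linear subspace of finite support with $\rk(\langle V\rangle_{\F_q[x]})\geq s\geq 1$. Since $V\neq 0$, pick $(i,j)\in\supp(V)$ and set $V'=\{v\in V : v_{i,j}=0\}$, the kernel of a nonzero $\F_q$-linear form on $V$. Then $\dim_{\F_q}V'=\dim_{\F_q}V-1$, and $|\supp(V')|\leq|\supp(V)|-1$ because $(i,j)\notin\supp(V')$. For any $v_0\in V\setminus V'$ one has $V=V'+\F_q v_0$, hence $\langle V\rangle_{\F_q[x]}=\langle V'\rangle_{\F_q[x]}+\F_q[x]\cdot v_0$, which forces $\rk(\langle V'\rangle_{\F_q[x]})\geq s-1$. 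Applying this once to a subspace $V$ with $\rk(\langle V\rangle_{\F_q[x]})\geq r+1$ and $|\supp(V)|=d_{r+1}(\Cc)$, available via formulation~(3) of the preceding lemma, yields $d_r(\Cc)\leq|\supp(V')|<d_{r+1}(\Cc)$, which is item (2).

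For item (4), I would start from $V=\langle g_1(x),\ldots,g_k(x)\rangle_{\F_q}$, where $g_1,\ldots,g_k$ are the rows of a row-reduced generator matrix of $\Cc$ with degrees $\delta_1\geq\cdots\geq\delta_k$. Then $\langle V\rangle_{\F_q[x]}=\Cc$ has rank $k$, and $\supp(V)=\bigcup_{i=1}^k\supp(g_i)\subseteq\{1,\ldots,n\}\times\{0,\ldots,\delta_1\}$, giving $|\supp(V)|\leq n(\delta_1+1)$. Iterating the hyperplane reduction $k-r$ times produces an $\F_q$-subspace with $\F_q[x]$-rank at least $r$ and support of size at most $n(\delta_1+1)-(k-r)=n(\delta_1+1)-k+r$, whereupon formulation~(3) of the preceding lemma yields the claimed bound on $d_r(\Cc)$. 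The only nontrivial point throughout is the sum decomposition $\langle V\rangle_{\F_q[x]}=\langle V'\rangle_{\F_q[x]}+\F_q[x]\cdot v_0$, which guarantees that each hyperplane step drops the $\F_q[x]$-rank by at most one; once that is in place, the remaining estimates reduce to direct bookkeeping.
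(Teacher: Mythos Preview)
Your proof is correct and follows essentially the same approach as the paper. The paper disposes of items 1, 3, 5 as immediate from the definitions, proves item 2 by taking a basis $u_1,\dots,u_{r+1}$ of a realizing subcode and adding $\F_q$-multiples of $u_1$ to the others so that $\supp(u_1)\not\subseteq\supp(\langle u_2,\dots,u_{r+1}\rangle_{\F_q})$, and obtains item 4 by combining items 2 and 5 with the bound $\wt(\Cc)\leq n(\delta_1+1)$ coming from a row-reduced generator matrix. Your hyperplane-kernel step on the $\F_q$-subspace $V$ is just the coordinate-free phrasing of the paper's Gaussian elimination on the basis (both kill one fixed coordinate and show the $\F_q[x]$-rank drops by at most one), and your explicit iteration for item 4 unwinds exactly what ``combine 2 and 5'' means; so the two arguments differ only in packaging.
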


\begin{proof}
1., 3., and 5. follow directly from the definition, while 4. follows by combining 2. and 5.
		
For 2., let $\mathcal{U}=\langle u_1,\dots,u_{r+1}\rangle_{\F_q[x]}$ be a subcode of $\Cc$ that realizes $d_{r+1}(\Cc)$. After adding suitable multiples of $u_1$ to the other generators, we may suppose that $\supp(u_1)\not\subseteq\supp(\langle u_2,\dots,u_{r+1}\rangle_{\F_q})$. Since $u_2,\dots,u_{r+1}$ are still $\F_q[x]$-linearly independent, we conclude.
\end{proof}

\begin{remark}
Notice that, unlike what happens for linear block codes, one does not always have that $d_k(\Cc)=\wt(\Cc)$. For example, for the code $\Cc$ of Remark~\ref{rmk:noncatsubcodes}, one has $d_1(\Cc)=4<5=\wt(\Cc)$.
\end{remark}

\begin{remark}
In spite of their simplicity, for all these bounds there exists codes that meet them. In particular, in Proposition~\ref{prop:MDSbounds} we prove that MDS codes meet bound 4. and in Proposition~\ref{proposition:weightsoptimal1} we prove that the generalized weights of optimal anticodes increase by one at each step.
\end{remark}

Generalized weights are invariant under isometries of convolutional codes. As a consequence, they are also invariant under strong isometries as defined in~\cite{Glu}.

\begin{proposition}\label{proposition:isometric}
If $\Cc_1$ and $\Cc_2$ are isometric convolutional codes, then they have the same weight and generalized weights.
\end{proposition}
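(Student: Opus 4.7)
The plan is to reduce the entire statement to a single support-preservation claim: if $\phi\colon\Cc_1\to\Cc_2$ is an $\F_q[x]$-linear isometry and $U\subseteq\Cc_1$ is a finite-dimensional $\F_q$-subspace, then $\phi(U)\subseteq\Cc_2$ is again a finite-dimensional $\F_q$-subspace with $|\supp(\phi(U))|=|\supp(U)|$ and $\rk(\langle\phi(U)\rangle_{\F_q[x]})=\rk(\langle U\rangle_{\F_q[x]})$. The rank equality is immediate, since $\langle\phi(U)\rangle_{\F_q[x]}=\phi(\langle U\rangle_{\F_q[x]})$ and $\phi$ restricts to an $\F_q[x]$-isomorphism of free modules of the same rank.

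The main obstacle is the support equality, since the hypothesis only gives preservation of the weight of single codewords and not a priori of unions of supports. To handle it, I would embed $U$ into $\F_q^N$ for a suitable finite $N$ (possible because $U$ is a finite-dimensional $\F_q$-space whose elements have bounded degree) and invoke the standard counting identity
$$\sum_{u\in U}\wt(u)=\frac{q-1}{q}\,|\supp(U)|\cdot|U|,$$
which is proved by counting pairs $(u,i)$ with the $i$-th coordinate of $u$ nonzero: for each $i\in\supp(U)$, coordinate-evaluation is a surjective $\F_q$-linear map $U\to\F_q$, so $(q-1)|U|/q$ elements of $U$ are nonzero at position $i$. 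Since $\phi$ is an $\F_q$-linear bijection $U\to\phi(U)$ preserving $\wt$, the left-hand side is the same for $U$ and $\phi(U)$, and since $|U|=|\phi(U)|$ the right-hand sides force $|\supp(U)|=|\supp(\phi(U))|$.

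Once this support-preservation lemma is in place, the two assertions of the proposition follow by unwinding Definition~\ref{defn:subcodewt} and characterization~4 of the preceding lemma. For $\wt$, a tuple $(c_1,\dots,c_k)$ generates $\Cc_1$ as an $\F_q[x]$-module if and only if $(\phi(c_1),\dots,\phi(c_k))$ generates $\Cc_2$, and $\phi$ maps the $\F_q$-span $\langle c_1,\dots,c_k\rangle_{\F_q}$ onto $\langle\phi(c_1),\dots,\phi(c_k)\rangle_{\F_q}$, which has the same support size by the lemma; taking minima yields $\wt(\Cc_1)=\wt(\Cc_2)$. For $d_r$, if an $\F_q$-subspace $U\subseteq\Cc_1$ with $\rk(\langle U\rangle_{\F_q[x]})=r$ realizes $d_r(\Cc_1)$, then $\phi(U)\subseteq\Cc_2$ has the same support size and $\F_q[x]$-span of the same rank, giving $d_r(\Cc_2)\leq d_r(\Cc_1)$; applying the same argument to $\phi^{-1}$ produces the reverse inequality.
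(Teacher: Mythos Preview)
Your proof is correct and follows essentially the same route as the paper. The paper observes that the restriction of $\phi$ to a finite-dimensional $\F_q$-span is an $\F_q$-linear Hamming isometry of block codes and invokes as a known fact that such isometries preserve the support cardinality; you supply an explicit proof of exactly this fact via the counting identity $\sum_{u\in U}\wt(u)=\frac{q-1}{q}\,|\supp(U)|\,|U|$, and then unwind the definitions in the same way. The only organizational difference is that the paper handles $d_r$ at the level of subcodes $\mathcal{D}$ (using $\wt(\mathcal{D})=\wt(\phi(\mathcal{D}))$, which it has just established), whereas you work directly with the $\F_q$-subspace characterization of $d_r$; both are equivalent by the lemma preceding the proposition.
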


\begin{proof}
Let $\phi:\Cc_1\rightarrow\Cc_2$ be an isometry. Since $\phi^{-1}:\Cc_2\rightarrow\Cc_1$ is also an isometry, it suffices to prove that the weight and the generalized weights of $\Cc_1$ are greater than or equal to the corresponding invariants of $\Cc_2$.

Let $c_1(x),\ldots,c_k(x)$ be a basis of $\Cc_1$ such that $\wt(\Cc_1)=|\supp(\langle c_1(x),\ldots,c_k(x)\rangle_{\F_q})|$. Then $\phi(c_1(x)),\ldots,\phi(c_k(x))$ are a basis of $\Cc_2$ and the restriction of $\phi$ is an $\F_q$-linear isometry of linear block codes between $\langle c_1(x),\ldots,c_k(x)\rangle_{\F_q}$ and $\langle \phi(c_1(x)),\ldots,\phi(c_k(x))\rangle_{\F_q}$, with respect to the Hamming distance. In particular,
$$\wt(\Cc_1)=|\supp(\langle c_1(x),\ldots,c_k(x)\rangle_{\F_q})|=|\supp(\langle \phi(c_1(x)),\ldots,\phi(c_k(x))\rangle_{\F_q})|\geq\wt(\Cc_2).$$ 
Suppose now that $\mathcal{D}\subseteq\Cc_1$ realizes $d_r(\Cc_1)$. Since $\phi$ is an isomorphism of $\F_q[x]$-modules, then $\rk(\phi(\mathcal{D}))=r$. Moreover, $\phi$ induces an isometry between $\mathcal{D}$ and $\phi(\mathcal{D})$, hence
\begin{equation*}
    d_r(\Cc_1)=\wt(\mathcal{D})=\wt(\phi(\mathcal{D}))\geq d_r(\Cc_2).\qedhere
\end{equation*}
\end{proof}

In~\cite{RY}, the authors introduce the family of generalized Hamming weights for convolutional codes. We now recall their definition and we briefly discuss how it relates to the generalized weights that we introduced.

\begin{definition}\label{definition:genwei2}
Let $\Cc$ be a convolutional code. For every positive integer $r$, the r-th \textbf{generalized Hamming weight} of $\Cc$ is
\begin{equation*}
d'_r(\Cc)=\min\{|\supp(U)|:U \text{ is an }\F_q\text{-linear subspace of }\Cc\text{ and }\dim(U)=r\}.
\end{equation*}
\end{definition}

\begin{remark}
It follows directly from the definitions that $d_i'(\Cc)\leq d_i(\Cc)$ for $1\leq i\leq \rk(\Cc)$.
\end{remark}

From now on, we refer to the weights from Definition~\ref{defgenw} as generalized weights and to those from Definition~\ref{definition:genwei2} as generalized Hamming weights. Even though Definition~\ref{definition:genwei2} appears to be similar to our Definition~\ref{defgenw}, the fact that we consider rank $r$ subcodes in place of $r$-dimensional subspaces leads to a different set of invariants. 

In the next examples we exhibit two pairs of non-isometric codes. The codes in the first example can be distinguished using the generalized weights, but not using the generalized Hamming weights, while the codes in the second example can be distinguished using the generalized Hamming weights, but not using the generalized weights. 

\begin{example}
(a) Let $\Cc_1, \Cc_2\in\F_q[x]^3$ be the convolutional codes generated respectively by $(1,0,0),(0,1,1+x)$ and $(1,0,0),(0,0,1)$. Then, $d'_r(\Cc_1)=d'_r(\Cc_2)=r$ for every positive integer $r$. By computing the generalized weights according to Definition~\ref{defgenw}, we can prove that the two codes are not isometric, since $d_1(\Cc_1)=d_1(\Cc_2)=1$, $d_2(\Cc_1)=4$ and $d_2(\Cc_2)=2$.

(b) Let $\Cc_1,\Cc_2\in\F_q[x]^3$ be the convolutional codes generated respectively by $(1,1,1)$ and $(1+x,0,1)$. Then $d_1(\Cc_1)=d_1(\Cc_2)$, that is, the generalized weights of $\Cc_1$ and $\Cc_2$ coincide. However, $d'_2(\Cc_1)=6$ and $d'_2(\Cc_2)=5$, in particular the codes are not isometric, as generalized Hamming weights are invariant under isometry. 
\end{example}
	
Notice that an $(n,k,\delta)$ convolutional code has exactly $k$ generalized weights and an infinite number of generalized Hamming weights. In particular, one can recover the rank of a convolutional code from its generalized weights, but not from its generalized Hamming weights, as the next example shows.

\begin{example}
Let $\Cc\subseteq\F_2[x]^n$ be such that $(1,0,\ldots,0)\in\Cc$. Then $d'_r(\Cc)=r$ for any $r\geq 1$. Clearly, there exist codes of any rank $k\leq n$ which contain the codeword $(1,0,\ldots,0)$.
\end{example}

It may happen that, in order to distinguish two non-isometric codes, one needs to compute an arbitrarily large number of generalized Hamming weights. For instance, in the next example we show that, for fixed $n$ and $k$, there exist non-isometric convolutional codes with the same first $N$ generalized weights for $N$ arbitrary large as $\delta$ goes to infinity. We will use the following simple lemma.

\begin{lemma}\label{lemma:sumomega}
Let $q$ be a prime number. For a polynomial $p(x)\in\F_q[x]\setminus\{0\}$ we have that 
\begin{equation*}
\wt\left(p(x)\sum_{t=0}^{N}x^{q^t}\right)+\wt(p(x))\geq N+2.
\end{equation*}
\end{lemma}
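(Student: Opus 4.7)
The plan is to find at least $N+2-\wt(p)$ positions that lie in the support of $P(x):=p(x)\sum_{t=0}^{N} x^{q^t}$, since this immediately yields $\wt(P)+\wt(p)\geq N+2$. I would write $p(x)=\sum_{i\in I}a_i x^i$ with $I=\supp(p)$ and $w:=|I|=\wt(p)$, and expand
$$P(x)=\sum_{i\in I}\sum_{t=0}^{N} a_i\,x^{i+q^t},$$
so that the coefficient of $x^j$ in $P$ equals $\sum_{(i,t):\,i+q^t=j} a_i$. Whenever a position $j$ has \emph{multiplicity one}---that is, is represented by exactly one pair $(i,t)$ in the expansion---the coefficient reduces to $a_i\neq 0$, so $j\in\supp(P)$. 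It therefore suffices to exhibit $N+2-w$ such positions.

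The natural candidates are $j_t:=i^{\ast}+q^t$ for $t=0,1,\ldots,N$, where $i^{\ast}:=\max(I)$; these are $N+1$ distinct values, each automatically realized by the pair $(i^{\ast},t)$. A competing representation $(i',t')\neq(i^{\ast},t)$ with $i'+q^{t'}=i^{\ast}+q^t$ must have $i'\in I\setminus\{i^{\ast}\}$ (the case $i'=i^{\ast}$ forces $t=t'$, and $i'>i^{\ast}$ is impossible), so $t'>t$ and
$$q^{t}\bigl(q^{t'-t}-1\bigr)=i^{\ast}-i'>0.$$
The crux of the argument is that $q^{t'-t}-1\equiv-1\pmod{q}$, so it is coprime to $q$: hence $t$ must equal the $q$-adic valuation of $i^{\ast}-i'$, and $t'-t$ is then forced by reading off the cofactor. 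Consequently each element of $I\setminus\{i^{\ast}\}$ can ``spoil'' at most one of the indices $t$.

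Since $|I\setminus\{i^{\ast}\}|=w-1$, at most $w-1$ of the positions $j_t$ have multiplicity greater than one, leaving at least $(N+1)-(w-1)=N+2-w$ of them with multiplicity one. These lie in $\supp(P)$ with coefficient $a_{i^{\ast}}\neq 0$, giving $\wt(P)\geq N+2-w$, and adding $\wt(p)=w$ completes the proof. I do not anticipate a substantial obstacle: the only delicate point is the ``one spoiler per $i'$'' count, which is immediate from the coprimality $\gcd(q,q^s-1)=1$, so the hypothesis that $q$ be prime is used only lightly (the argument would go through for arbitrary $q\geq 2$). The degenerate regime $w\geq N+2$, in which the numerical bound $\wt(P)\geq N+2-w$ becomes vacuous, is automatically handled because $\wt(p)=w\geq N+2$ by itself.
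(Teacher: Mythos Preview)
Your argument is correct. Both your proof and the paper's hinge on the same combinatorial fact---that a positive difference $q^{t'}-q^t=q^t(q^{t'-t}-1)$ determines the pair $(t,t')$ via the $q$-adic valuation---but they deploy it differently. The paper counts collisions \emph{globally}: among the $(N+1)\ell$ monomials $a_i x^{k_i+q^t}$, each unordered pair $\{k_{i_1},k_{i_2}\}$ can collide for at most one pair of exponents, so at most $\ell(\ell-1)$ monomials are involved in collisions, giving the stronger intermediate bound $\wt(P)\geq \ell(N+2-\ell)$; one then minimizes $\ell(N+3-\ell)$ over $2\leq\ell\leq N$ to reach $N+2$. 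You instead look only at the $N+1$ ``extremal'' positions $i^\ast+q^t$ and show that each $i'\in I\setminus\{i^\ast\}$ can spoil at most one of them, yielding directly $\wt(P)\geq N+2-w$. Your route is more economical for the lemma as stated and avoids the case split and the quadratic minimization; the paper's route produces a sharper standalone estimate on $\wt(P)$. Your remark that primality of $q$ is inessential applies equally to the paper's proof: both arguments work for any integer $q\geq 2$.
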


\begin{proof}
Let $p(x)=a_1x^{k_1}+\dots+a_{\ell}x^{k_{\ell}}$ be a polynomial with $k_1<\dots<k_{\ell}$ and $\wt(p(x))=\ell>0$. If $\ell\geq N+1$ or $\ell=1$ the statement is trivially true. Suppose $\ell< N+1$. Clearly,
\begin{equation}\label{eqsumomega2}
p(x)\sum_{t=0}^{N}x^{q^t}=\sum_{t=0}^{N}\sum_{i=1}^{\ell}a_ix^{k_i+q^t}.
\end{equation}
Two monomials have the same exponent if and only if there are $i_1,i_2$, $t_1,t_2$ such that $k_{i_1}+q^{t_1}=k_{i_2}+q^{t_2}$. Moreover if $(t_1,t_2)\neq(t_3,t_4)$ with $t_1<t_2$ and $t_3<t_4$ then $q^{t_2}-q^{t_1}\neq q^{t_4}-q^{t_3}$. Therefore, in \eqref{eqsumomega2} there are at most $\ell(\ell-1)/2$ pairs of monomials with the same exponent. Since the number of monomials in the sum is $(N+1)\ell$, we have that
\begin{equation*}
\wt\left(p(x)\sum_{t=0}^{N}x^{q^t}\right)+\ell\geq(N+1)\ell-\ell(\ell-1)+\ell=\ell(N-\ell+3).
\end{equation*}
Finally, since $1<\ell<N+1$, we have that $\ell(N-\ell+3)\geq N+2$.
\end{proof}
	
\begin{example}
Let $q$ be a prime number and let $\Cc_N=\langle(1,1,0),(0,\sum_{t=0}^{N}x^{q^t},1)\rangle_{\F_q[x]}\subseteq\F_q[x]^3$.
We claim that $d'_r(\Cc_N)=2r$ for all $r\in\{1,\dots,N+1\}$ and $d'_{N+2}(\Cc_N)=2(N+1)+1$. In particular, the first $N$ generalized weights of $\Cc_{N-1}$ and $\Cc_{N}$ coincide and $d'_{N+1}(\Cc_{N-1})\neq d'_{N+1}(\Cc_{N})$.
	
Let $U=\langle p_{i,1}(x)(1,1,0)+p_{i,2}(x)(0,\sum_{t=0}^N x^{q^t},1)\mid 1\leq i\leq r\rangle_{\F_q}\Cc_N$ be a linear subspace of dimension $r\leq N+1$. Let $J\subseteq\{1,\dots,r\}$ be a maximal set of indices such that $\{p_{j,1}(x)\}_{j\in J}$ is an $\F_q$-linear independent set. If $|J|=r$, then $d'_r(\Cc_N)\geq 2r$.
If $|J|<r$, we may assume without loss of generality that $p_{i,1}(x)=0$ for every $i\notin J$. It is easy to show that the support of $U$ has cardinality at least $|J|$, when restricted to the first component. The set $\{p_{j,2}(x)\}_{j\notin J}$ is $\F_q$-linearly independent by assumption. Moreover, after replacing the elements of the set with appropriate linear combinations, we may assume that there exists a $\bar j\notin J$ such that the last entry of the support of $U$ has cardinality greater than or equal to $r-|J|-1+\wt(p_{\bar j,2}(x))$. Since the cardinality of the second entry of the support of $U$ is at least $\wt\left(p_{\bar j,2}(x)\sum_{t=0}^{N}x^{q^t}\right)$, then
$$d'_r(\Cc_N)\geq |J|+r-|J|-1+\wt(p_{\bar j,2}(x))+\wt\left(p_{\bar j,2}(x)\sum_{t=0}^{N}x^{q^t}\right).$$
By Lemma~\ref{lemma:sumomega}, we conclude that $$d'_r(\Cc_N)\geq N+1+r\geq2r.$$
Let $V=\langle (x^i,x^i,0)\mid 1\leq i\leq r\rangle_{\F_q}$. Then $V$ is an $r$-dimensional subspace of $\Cc_N$ with $|\supp(V)|=2r$, showing that $d'_r(\Cc_N)=2r$ for $1\leq r\leq N+1$. 

Moreover, $V=\langle (0,\sum_{t=0}^{N}x^{q^t},1), (x^{q^i},x^{q^i},0)\mid \leq i\leq N\rangle_{\F_q}$ is an $(N+2)$-dimensional $\F_q$-linear subspace of $\Cc$, showing that $d'_{N+2}(\Cc_N)\leq 2(N+1)+1$. Since $d'_{N+1}=2(N+1)$, then $d'_{N+2}(\Cc_N)=2(N+1)+1$ by Proposition~\ref{propproperties}.
\end{example}

The next proposition relates the generalized weights of an $(n,k,0)$ convolutional code with the generalized Hamming weights of the linear block code generated by the same elements. The latter were studied by Wei in~\cite{Wei} and we refer to his work for their definition and basic properties. 
	
\begin{proposition}\label{proposition:hammingweights}
Let $\Cc$ be an $(n,k,0)$ convolutional code. Then $\rk(\Cc)=\dim(\Cc[0])$ and
$$d_r(\Cc)=d_r^H(\Cc[0])$$
for $1\leq r\leq\rk(\Cc)$, where $d_r^H(\Cc[0])$ denotes the $r$-th generalized Hamming weight of $\Cc[0]$. In particular
$$d_r(\Cc)\leq n+k-r.$$
In addition, if $D\subseteq \Cc[0]$ is an $r$-dimensional subspace such that $d_r^H(\Cc[0])=|\supp(D)|$, then $\mathcal{D}=\langle D\rangle_{\F_q[x]}\subseteq\Cc$ realizes $d_r(\Cc)$. In particular $$d_{\rk(\Cc)}(\Cc)=\wt(\Cc)=|\supp(\Cc[0])|,$$ where $\supp(\Cc[0])$ denotes the Hamming support of $\Cc[0]$.
\end{proposition}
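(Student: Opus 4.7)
The plan is to reduce everything to the known theory of generalized Hamming weights of linear block codes. This is possible because Proposition~\ref{prop:C[0]} gives us the identification $\Cc = \langle\Cc[0]\rangle_{\F_q[x]}$ together with the equality $\rk(\Cc) = \dim(\Cc[0])$. The central identity $d_r(\Cc) = d_r^H(\Cc[0])$ will be established by proving the two inequalities separately, after which all the remaining assertions fall out quickly.

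For the inequality $d_r(\Cc) \leq d_r^H(\Cc[0])$, I would start with an $r$-dimensional subspace $D\subseteq\Cc[0]$ with $|\supp(D)| = d_r^H(\Cc[0])$. Since $D\subseteq\F_q^n$, any $\F_q$-basis of $D$ consists of constant polynomials and is therefore automatically $\F_q[x]$-linearly independent. Hence $\mathcal{D} = \langle D\rangle_{\F_q[x]}\subseteq\Cc$ is a submodule of rank $r$, and the $\F_q$-span of any such basis coincides with $D$. From Definition~\ref{defn:subcodewt} we then get $\wt(\mathcal{D}) \leq |\supp(D)| = d_r^H(\Cc[0])$, which yields the inequality and simultaneously proves the ``In addition'' clause.

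For the reverse inequality, I would pick $\mathcal{D}\subseteq\Cc$ of rank $r$ realizing $d_r(\Cc)$, together with generators $c_1(x),\ldots,c_r(x)$ whose $\F_q$-span $V = \langle c_1(x),\ldots,c_r(x)\rangle_{\F_q}$ satisfies $|\supp(V)| = \wt(\mathcal{D}) = d_r(\Cc)$. The crucial object is the $\F_q$-subspace $U\subseteq\F_q^n$ spanned by all the coefficient vectors $c_i[t]$. Two observations are needed. First, $U\subseteq\Cc[0]$, because $\Cc = \langle\Cc[0]\rangle_{\F_q[x]}$ forces every coefficient of an element of $\Cc$ to lie in $\Cc[0]$. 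Second, $\dim(U)\geq r$: every element of $\mathcal{D}$ is an $\F_q[x]$-combination of the $c_i(x)$ and so its coefficients are $\F_q$-combinations of the $c_i[t]$, giving $\mathcal{D}\subseteq\langle U\rangle_{\F_q[x]}$, a free $\F_q[x]$-module of rank $\dim(U)$. Passing to an $r$-dimensional subspace of $U$ if necessary, the definition of $d_r^H$ gives $|\supp(U)|\geq d_r^H(\Cc[0])$. On the other hand, $\supp(U)\subseteq\{1,\ldots,n\}$ is exactly the image of $\supp(V)\subseteq\{1,\ldots,n\}\times\N$ under projection onto the first coordinate, so $|\supp(U)|\leq|\supp(V)| = d_r(\Cc)$. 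The dimension-drop step $\mathcal{D}\subseteq\langle U\rangle_{\F_q[x]}\Rightarrow\dim(U)\geq r$ is the main conceptual point, and the place where I would be most careful.

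Once $d_r(\Cc) = d_r^H(\Cc[0])$ is established, the bound $d_r(\Cc)\leq n+k-r$ is immediate from the classical Singleton-type bound of Wei for the generalized Hamming weights of a linear block code of length $n$ and dimension $k$. Finally, $d_{\rk(\Cc)}(\Cc) = d_k^H(\Cc[0]) = |\supp(\Cc[0])|$ by Wei, while $\wt(\Cc) \leq |\supp(\Cc[0])|$ is realized by any $\F_q$-basis of $\Cc[0]$ (which is also an $\F_q[x]$-basis of $\Cc$), and $\wt(\Cc) \geq d_{\rk(\Cc)}(\Cc)$ is part~5 of Proposition~\ref{propproperties}; chaining these yields the last equality.
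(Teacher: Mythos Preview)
Your proposal is correct and follows essentially the same approach as the paper: both prove the two inequalities separately, using for $d_r(\Cc)\leq d_r^H(\Cc[0])$ the module $\langle D\rangle_{\F_q[x]}$ generated by a realizing subspace $D\subseteq\Cc[0]$, and for the reverse inequality the $\F_q$-span $U\subseteq\Cc[0]$ of all coefficient vectors $c_i[t]$ together with the rank comparison $\mathcal{D}\subseteq\langle U\rangle_{\F_q[x]}\Rightarrow\dim(U)\geq r$. The only cosmetic difference is in how the final equality $\wt(\Cc)=|\supp(\Cc[0])|$ is wrapped up: the paper records that for a submodule with a constant basis one has $\wt(\mathcal{D})=|\supp(\langle v_1',\ldots,v_r'\rangle_{\F_q})|$, whereas you invoke Proposition~\ref{propproperties}(5) to sandwich $\wt(\Cc)$ between $d_{\rk(\Cc)}(\Cc)$ and $|\supp(\Cc[0])|$.
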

	
\begin{proof}
Since $\Cc=\langle\Cc[0]\rangle_{\F_q[x]}$ and $\Cc[0]\subseteq\F_q^n$, then $\rk(\Cc)=\dim(\Cc[0])$. 
Fix $1\leq r\leq \rk(\Cc)$ and let $D\subseteq \Cc[0]$ be an $\F_q$-linear subspace such that $\dim(D)=r$ and $d_r^H(\Cc[0])=|\supp(D)|$. Let $\mathcal{D}=\langle D\rangle_{\F_q[x]}\subseteq\Cc$, then $\rk(\mathcal{D})=r$ and $\wt(\mathcal{D})\leq |\supp(D)|=d_r^H(\Cc[0])$. This implies that $$d_r^H(\Cc[0])\geq d_r(\Cc).$$

To prove the reverse inequality, let $\mathcal{D}\subseteq\Cc$ be a rank $r$ subcode and let $v_1,\dots,v_r$ be an $\F_q[x]$-basis of $\mathcal{D}$ such that $\wt(\mathcal{D})=|\supp(\langle v_1,\ldots,v_r\rangle_{\F_q})|$. For $1\leq i\leq r$ write $v_i=\sum_{j=0}^{t_i}v_i[j]x^{j}$. Since $\Cc=\langle\Cc[0]\rangle_{\F_q[x]}$, then $v_i[j]\in\Cc[0]$ for all $i,j$. Let $U=\langle\{u_{j,i}\}_{j,i}\rangle_{\F_q}$. By construction $$\mathcal{D}\subseteq \langle U\rangle_{\F_q[x]}\text{  and  }\wt(\mathcal{D})=|\supp(\langle v_1,\ldots,v_r\rangle_{\F_q})|\geq|\supp(U)|.$$ Since $\dim(U)=\rk(\langle U\rangle_{\F_q[x]})\geq\rk(\mathcal{D})=r$, one can find $U'\subseteq U$ an $\F_q$-linear subspace with $\dim(U')=r$. Then $\rk(\langle U'\rangle_{\F_q[x]})=r$ and $$\wt(\langle U'\rangle_{\F_q[x]})\leq |\supp(U')|\leq|\supp(U)|\leq\wt(\mathcal{D}).$$ 
Observe in addition that, if $\mathcal{D}$ has an $\F_q[x]$-basis $v'_1,\ldots,v'_r$ which consists of elements of $\F_q^n$, then $U=U'=\langle v'_1,\ldots,v'_r\rangle_{\F_q}$, showing that $$\wt(\mathcal{D})=|\supp(\langle v'_1,\ldots,v'_r\rangle_{\F_q})|.$$
Summarizing we have shown that, for every submodule $\mathcal{D}\subseteq\Cc$ of rank $r$, one can find a submodule of $\Cc$ of the form $\langle U'\rangle_{\F_q[x]}$ for some $U'\subseteq \Cc[0]$, with rank $r$ and weight smaller than or equal to the weight of $\mathcal{D}$. Since $|\supp(U')|=\wt(\langle U'\rangle_{\F_q[x]})$, this implies that $d_r^H(\Cc[0])\leq d_r(\Cc)$.
\end{proof} 
	
Notice that the generalized Hamming weights of an $(n,k,0)$ convolutional code may not coincide with the generalized Hamming weights of the linear block code generated by the same elements, as the next example shows.

\begin{example}
Let $\Cc=\langle (1,0,0),(0,1,1)\rangle_{\F_2[x]}\subseteq\F_2^3$ and let $\Cc[0]\subseteq\F_2^3$. The generalized Hamming weights of $\Cc[0]$ are $d_1^H(\Cc[0])=1$ and $d_2^H(\Cc[0])=3$, while the generalized Hamming weights of $\Cc$ are $d'_r(\Cc)=r$ for all $r\geq 1$. 
\end{example}
	
The next result is obtained by combining Proposition~\ref{proposition:hammingweights} and Wei duality for the generalized Hamming weight of a linear block code.

\begin{proposition}
Let $\Cc$ be an $(n,k,0)$ convolutional code. Then $\Cc^{\perp}$ is an $(n,n-k,0)$ convolutional code and its set of generalized weights is $$\{d_r(\Cc^{\perp})\mid 1\leq r\leq n-k\}=\{n+1-d_r(\Cc)\mid 1\leq r\leq k\}.$$ In particular, the generalized weights of $\Cc$ determine those of $\Cc^{\perp}$.
\end{proposition}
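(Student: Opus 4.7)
The plan is to reduce the statement to Wei duality for linear block codes via Proposition~\ref{proposition:hammingweights}. First I would verify that $\Cc^{\perp}$ really is an $(n,n-k,0)$ convolutional code: by Proposition~\ref{prop:C[0]}, $\Cc^{\perp}=\langle\Cc[0]^{\perp}\rangle_{\F_q[x]}$, and since $\Cc[0]^{\perp}\subseteq\F_q^n$ is an $[n,n-k]$ linear block code, a second application of Proposition~\ref{prop:C[0]} (now to $\Cc^{\perp}$) shows that $\Cc^{\perp}$ has internal degree $0$, rank $n-k$, and moreover $\Cc^{\perp}[0]=\Cc^{\perp}\cap\F_q^n=\Cc[0]^{\perp}$.

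Next, I would invoke Proposition~\ref{proposition:hammingweights} twice: applied to $\Cc$ it gives $d_r(\Cc)=d_r^H(\Cc[0])$ for $1\leq r\leq k$, and applied to $\Cc^{\perp}$ (which has just been shown to have internal degree $0$) it gives $d_r(\Cc^{\perp})=d_r^H(\Cc^{\perp}[0])=d_r^H(\Cc[0]^{\perp})$ for $1\leq r\leq n-k$. At this point the problem has been translated entirely to the linear block code $\Cc[0]$ and its dual, where the relation between the two weight hierarchies is precisely the content of Wei duality from~\cite{Wei}. Substituting the two translations into Wei's identity for $\Cc[0]$ and $\Cc[0]^{\perp}$ yields the claimed description of the set of generalized weights of $\Cc^{\perp}$ in terms of those of $\Cc$.

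The only delicate points are bookkeeping: one must check that $\Cc^{\perp}[0]$ coincides with $\Cc[0]^{\perp}$, which is immediate from the definition of the dual together with the fact that $\Cc$ (and hence $\Cc^{\perp}$) has internal degree $0$, and that the index ranges $1\leq r\leq k$ and $1\leq r\leq n-k$ correctly match $\dim\Cc[0]$ and $\dim\Cc[0]^{\perp}$. I do not anticipate a genuine obstacle here: the nontrivial mathematical content has already been carried out in Proposition~\ref{proposition:hammingweights} (which reduces the computation of $d_r$ of an internal-degree-zero code to the generalized Hamming weights of the associated block code) and in Wei's theorem itself; the present proof is essentially a two-step dictionary.
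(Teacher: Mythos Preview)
Your proposal is correct and follows essentially the same route as the paper's own proof: use Proposition~\ref{prop:C[0]} to identify $\Cc^{\perp}$ with $\langle\Cc[0]^{\perp}\rangle_{\F_q[x]}$ as an $(n,n-k,0)$ code, invoke Proposition~\ref{proposition:hammingweights} on both $\Cc$ and $\Cc^{\perp}$ to translate the generalized weights to generalized Hamming weights of the block codes $\Cc[0]$ and $\Cc[0]^{\perp}$, and then apply Wei duality. The extra bookkeeping you flag (that $\Cc^{\perp}[0]=\Cc[0]^{\perp}$ and the index ranges match) is indeed routine and handled implicitly in the paper.
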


\begin{proof}
By Proposition~\ref{prop:C[0]}, $\Cc$ has the form $\Cc=\langle \Cc[0]\rangle_{\F_q[x]}$ and $\dim(\Cc[0])=k$. Moreover, $\Cc^{\perp}=\langle \Cc[0]^{\perp}\rangle_{\F_q[x]}$ is an $(n,n-k,0)$ convolutional code. Wei duality states that $$\{d_r^H(\Cc[0]^{\perp})\mid 1\leq r\leq n-k\}=\{n+1-d_r^H(\Cc[0])\mid 1\leq r\leq k\},$$ see~\cite[Theorem 3]{Wei}. We conclude by Proposition~\ref{proposition:hammingweights}.
\end{proof}

For a catastrophic convolutional code, one may have that $\Cc\subsetneq(\Cc^{\perp})^{\perp}$. In such a situation, one expects to be able to find a code such that $\Cc$ and $(\Cc^{\perp})^{\perp}$ have different generalized weights. This is indeed the case, as the next example shows.

\begin{example}
Let $\Cc=\langle (1+x,0)\rangle_{\F_2[x]}\subseteq\F_2[x]^2$. Then $\Cc^{\perp}=\langle(0,1)\rangle$ and $(\Cc^{\perp})^{\perp}=\langle(1,0)\rangle\supsetneq\Cc$. In addition, $d_1(\Cc)=2$ while $d_1((\Cc^{\perp})^{\perp})=1$. In particular, $\Cc$ and $(\Cc^{\perp})^{\perp}$ have different generalized weights, while having the same dual code $\Cc^{\perp}$.
\end{example}

Noncatastrophic convolutional codes coincide with their double dual. However, no result along the lines of Wei duality holds even when restricting to this class of codes.
More precisely, the next example shows that there exists noncatastrophic convolutional codes with the same generalized weights and whose dual codes have different generalized weights.

\begin{example}
Let $\Cc_1=\langle (1+x,1+x,1,0)\rangle_{\F_q[x]}$ and $\Cc_2=\langle(1+x,1,1,1)\rangle_{\F_q[x]}$ be $(4,1,1)$ noncatastrophic convolutional codes. We have that $d_1(\Cc_1)=d_1(\Cc_2)=5$, while $d_1(\Cc_1^{\perp})=1$ and $d_1(\Cc_2^{\perp})=2$.
\end{example}

Notice that the previous example also shows that Wei duality cannot hold for any set of generalized weights with the property that the first generalized weight is the free distance of the code. More precisely, we have the following.

\begin{remark}
Given any definition of generalized weights for convolutional codes such that the first generalized weight is the free distance of the code, the generalized weights of a code do not determine in general the generalized weights of its dual.
\end{remark}

Although Wei duality does not hold for this type of dual, the are other types of duality that have been considered in the literature.  For example, in~\cite{RST} the authors define the reverse of a convolutional code. 

\begin{definition}\label{defn:reverse}
Let $\Cc$ be an $(n,k,\delta)$ convolutional code and let $\rev:\F_q[x]^n\rightarrow \F_q[x]^n$ be the map given by $\rev(0)=0$ and
$$\rev(c(x))=x^{\deg(c(x))}c\left(\frac{1}{x}\right)$$
if $c(x)\neq 0$. The \textbf{reverse code} of $\Cc$ is $$\rev(\Cc)=\langle\rev(c(x)):c(x)\in\Cc\rangle_{\F_q[x]}.$$
\end{definition}

\begin{remark}
Let $c(x)\in\F_q[x]^n\setminus\{0\}$. The following are equivalent:
\begin{itemize}
\item $\rev(\rev(c(x)))=c(x)$,
\item $\deg(c(x))=\deg(\rev(c(x)))$,
\item $x\nmid c(x)$.
\end{itemize}
In addition, one has $x^d\rev(\rev(c(x)))=c(x)$ for $d=\max\{t\geq 0 : x^t\mid c(x)\}$. 
\end{remark}

For the sake of completeness, we prove that Definition~\ref{defn:reverse} is equivalent to the definition of reverse code given in~\cite{RST}. This implies in particular that if $\Cc$ is an $(n,k,\delta)$ convolutional code, then $\rev(\Cc)$ is an  $(n,k,\delta')$ convolutional code, for some $\delta'$.
	
\begin{proposition}\label{proposition:revcode}
Let $\Cc$ be an $(n,k,\delta)$ convolutional code and let $G$ be a row-reduced generator matrix for $\Cc$. Let $c_1,\dots, c_k$ be the rows of $G$. Then,
\begin{equation*}
\rev(\Cc)=\langle \rev(c_1), \dots, \rev(c_k)\rangle_{\F_q[x]}.
\end{equation*}
In particular, $\rk(\rev(\Cc))=k$.
\end{proposition}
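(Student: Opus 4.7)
The strategy is to establish the two inclusions separately and then deduce the rank statement. The containment $\langle \rev(c_1),\ldots,\rev(c_k)\rangle_{\F_q[x]} \subseteq \rev(\Cc)$ is immediate from Definition~\ref{defn:reverse}: each $c_i \in \Cc$, so each $\rev(c_i) \in \rev(\Cc)$, and $\rev(\Cc)$ is a submodule.

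For the reverse inclusion, the crucial ingredient is the \emph{predictable degree property} of a row-reduced generator matrix, which states that for every $u_1(x),\ldots,u_k(x) \in \F_q[x]$ one has
\begin{equation*}
\deg\left(\sum_{i=1}^k u_i(x)\,c_i(x)\right) \;=\; \max_{1 \leq i \leq k}\bigl\{\deg(u_i(x)) + \delta_i\bigr\},
\end{equation*}
with the convention $\deg(0) = -\infty$, where $\delta_i = \deg(c_i)$ (see e.g.~\cite{Kal}). Given this, I would take an arbitrary $c(x) \in \Cc\setminus\{0\}$, write $c(x) = \sum_i u_i(x)\,c_i(x)$, let $d = \deg(c(x))$, and expand using $\rev(p(x)) = x^{\deg(p)} p(1/x)$ to obtain
\begin{equation*}
\rev(c(x)) \;=\; x^{d}\, c(1/x) \;=\; \sum_{i\,:\, u_i \neq 0} x^{\,d - \deg(u_i) - \delta_i}\, \rev(u_i(x))\, \rev(c_i(x)).
\end{equation*}
The predictable degree property guarantees $d \geq \deg(u_i) + \delta_i$ for each $i$ with $u_i \neq 0$, so each exponent is nonnegative and every summand is a genuine $\F_q[x]$-multiple of a generator. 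This gives the inclusion $\rev(\Cc) \subseteq \langle \rev(c_1),\ldots,\rev(c_k)\rangle_{\F_q[x]}$.

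To conclude $\rk(\rev(\Cc)) = k$, one direction is clear since $\rev(\Cc)$ is generated by $k$ elements. For the other direction, I would show that $\rev(c_1),\ldots,\rev(c_k)$ are $\F_q[x]$-linearly independent. Suppose $\sum_i v_i(x)\rev(c_i(x)) = 0$ with some $v_i \neq 0$. Using $\rev(c_i(x)) = x^{\delta_i} c_i(1/x)$, substituting $x \mapsto 1/x$, and clearing denominators by multiplying through by $x^M$ with $M = \max_{i\,:\, v_i \neq 0}\{\deg(v_i) + \delta_i\}$ produces a relation $\sum_i \alpha_i(x) c_i(x) = 0$ with $\alpha_i(x) \in \F_q[x]$. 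For any index $i_0$ attaining the maximum, $\alpha_{i_0}(x) = \rev(v_{i_0}(x)) \neq 0$, contradicting the $\F_q[x]$-linear independence of $c_1,\ldots,c_k$.

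The main obstacle is the computation in the middle step: it is the row-reducedness (via the predictable degree property) that keeps the exponents $d - \deg(u_i) - \delta_i$ nonnegative, so that the substitution $x \mapsto 1/x$ translates cleanly into a polynomial combination under $\rev$. Without this hypothesis, cancellations in $\sum_i u_i c_i$ could drop the degree below $\max_i(\deg(u_i)+\delta_i)$, producing negative exponents and breaking the argument.
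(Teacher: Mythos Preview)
Your proof is correct and follows essentially the same route as the paper: both use the predictable degree property of a row-reduced matrix (the paper phrases this as $\deg(c)=\max_i\{\deg(u_i)+\deg(c_i)\}$) to show that the substitution $x\mapsto 1/x$ in $c=\sum_i u_i c_i$ yields a genuine $\F_q[x]$-combination of the $\rev(c_i)$. The only difference is cosmetic---you factor the coefficient as $x^{d-\deg(u_i)-\delta_i}\rev(u_i)$ while the paper leaves it as $x^{\deg(c)-\deg(c_i)}u_i(1/x)$---and you supply an explicit linear-independence argument for the rank claim, which the paper simply records as a consequence.
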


\begin{proof}
It is clear from the definition that $\rev(\Cc)\supseteq\langle \rev(c_1), \dots, \rev(c_k)\rangle_{\F_q[x]}$. Therefore, it suffices to show that $\rev(c)\in\langle \rev(c_1), \dots, \rev(c_k)\rangle_{\F_q[x]}$ for all $c\in\Cc$. Let $c\in\Cc$ and let $u_1,\dots,u_k\in\F_q[x]$ such that $c=\sum u_ic_i$. Since $G$ is row-reduced, we have that $\deg(c)=\max_i\{deg(u_i)+\deg(c_i)\}$. Therefore
\begin{equation*}
\rev(c)=x^{\deg(c)}\sum_{i=1}^{k} u_i\left(\frac{1}{x}\right)c_i\left(\frac{1}{x}\right)=\sum_{i=1}^{k} x^{\deg(c)-\deg(c_i)}u_i\left(\frac{1}{x}\right)\rev(c_i).
\end{equation*}
This concludes the proof. Indeed, from $\deg(c)=\max_i\{deg(u_i)+\deg(c_i)\}$ we immediately deduce that $x^{\deg(c)-\deg(c_i)}u_i\left(\frac{1}{x}\right)\in\F_q[x]$. 
\end{proof}

\begin{corollary}\label{corollary:rev}
Let $\Cc$ be an $(n,k,\delta)$ convolutional code. The following hold:
\begin{enumerate}
\item $\rev(\rev(\Cc))\supseteq\Cc$.
\item There exists a positive integer $d$, such that $x^d\rev(\rev(\Cc))\subseteq\Cc$.
\item If $\Cc$ is noncatastrophic, then $\rev(\rev(\Cc))=\Cc$.
\end{enumerate}
\end{corollary}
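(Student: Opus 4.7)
The proof builds the three parts in order, using the identity from the preceding remark, $\rev(\rev(c(x))) = c(x)/x^{\nu(c(x))}$ for $c(x) \neq 0$, where $\nu(c(x)) := \max\{t \geq 0 : x^t \mid c(x)\}$, as the engine.

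For part~1, given $c(x) \in \Cc$ one has $\rev(c(x)) \in \rev(\Cc)$ by definition, hence $\rev(\rev(c(x))) \in \rev(\rev(\Cc))$, and rewriting the identity as $c(x) = x^{\nu(c(x))} \rev(\rev(c(x)))$ exhibits $c(x)$ as an $\F_q[x]$-multiple of an element of $\rev(\rev(\Cc))$.

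Part~2 is the main step. The plan is to first establish the characterization
\begin{equation*}
\rev(\rev(\Cc)) = \{v(x) \in \F_q[x]^n : x^d v(x) \in \Cc \text{ for some } d \geq 0\}
\end{equation*}
and then invoke finite generation. For $\supseteq$, if $x^d v(x) \in \Cc$, then $\rev(\rev(x^d v(x))) \in \rev(\rev(\Cc))$, and by the identity this element equals $v(x)/x^{\nu(v(x))}$; multiplying back by $x^{\nu(v(x))}$ within the $\F_q[x]$-module $\rev(\rev(\Cc))$ recovers $v(x)$. For $\subseteq$, every element of $\rev(\rev(\Cc))$ is by definition a finite $\F_q[x]$-combination of terms $\rev(w(x))$ with $w(x) \in \rev(\Cc)$, and each $w(x)$ is in turn a finite $\F_q[x]$-combination of reverses of elements of $\Cc$; a direct expansion of $\rev(w(x)) = x^{\deg w(x)} w(1/x)$ and matching of exponents shows that some power $x^{N_w} \rev(w(x))$ lands in $\Cc$, and aggregating over the finitely many terms gives the element-wise claim. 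With the characterization in hand, $\rev(\rev(\Cc))$ is a submodule of the Noetherian module $\F_q[x]^n$, hence finitely generated; picking a generating set $v_1(x), \dots, v_k(x)$ with corresponding exponents $d_i$, the choice $d = \max(1, d_1, \dots, d_k)$ is the desired uniform positive integer.

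Finally, part~3 follows by combining parts 1 and 2 with Proposition~\ref{proposition:noncatastrophic}: any $v(x) \in \rev(\rev(\Cc))$ satisfies $x^d v(x) \in \Cc$ with $x^d \neq 0$ by part~2, so noncatastrophicity forces $v(x) \in \Cc$, giving the reverse inclusion. The main obstacle is the passage in part~2 from an element-wise clearing-of-denominators statement to a uniform $d$ that works for the entire module; this is precisely what forces the invocation of finite generation rather than a direct element-by-element argument.
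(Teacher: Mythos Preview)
Your proposal is correct and follows essentially the same blueprint as the paper: the identity $c(x)=x^{\nu(c(x))}\rev(\rev(c(x)))$ for part~1, Noetherianness of $\F_q[x]^n$ for part~2, and Proposition~\ref{proposition:noncatastrophic} for part~3.

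The one organizational difference worth noting is in part~2. The paper argues via the ascending chain $\Cc\subseteq\Cc:x\subseteq\Cc:x^2\subseteq\cdots$, which stabilizes at some $\Cc:x^d$, and then observes that each $\rev(\rev(c))$ with $c\in\Cc$ lies in this stable module. You instead first establish the characterization $\rev(\rev(\Cc))=\{v(x):x^d v(x)\in\Cc\text{ for some }d\}$ and then invoke finite generation of $\rev(\rev(\Cc))$ to extract a uniform exponent. Both arguments are the same Noetherian fact in different clothing. Your version does a bit more than strictly needed (the $\supseteq$ inclusion of the characterization is not required for the corollary), but in exchange your expansion argument for $\subseteq$ makes explicit a step---that every $\rev(w)$ with $w\in\rev(\Cc)$, not merely every $\rev(\rev(c))$ with $c\in\Cc$, satisfies $x^N\rev(w)\in\Cc$---which the paper passes over quickly.
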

	
\begin{proof}
1. Let $G$ be a row-reduced generator matrix for $\Cc$ and let $c_1,\dots, c_k$ be the rows of $G$. For every $1\leq i\leq k$ there exists $t_i\geq 0$ such that $x^{t_i}\rev(\rev(c_i))=c_i$. Therefore, $\rev(\rev(\Cc))\supseteq\Cc$. 

2. Consider the ascending chain of modules 
\begin{equation}\label{eqn:chain}
\Cc\subseteq\Cc:x\subseteq\ldots\subseteq\Cc:x^d\subseteq\ldots
\end{equation}
where $\Cc:x^d=\{c(x)\in\F_q[x]^n\mid x^dc(x)\in\Cc\}$ for $d\geq 0$. Since every submodule of $\F_q[x]^n$ is finitely generated, any ascending chain of submodules of $\F_q[x]^n$ is stationary. This means that there exists a $d$ such that $\Cc:x^d=\Cc:x^e$ for any $e\geq d$. Let $c\in\Cc$, then $c=x^t\rev(\rev(c))$ for some $t\geq 0$, hence $\rev(\rev(c))\in\Cc:x^t\subseteq\Cc:x^d$ if $d\geq t$. Moreover, $\Cc:x^d=\Cc:x^t$ if $t\geq d$. Therefore $\Cc:x^d\supseteq\rev(\rev(\Cc))$, that is $$x^d\rev(\rev(\Cc))\subseteq\Cc$$ for $d\gg 0$.

3. For a noncatastrophic code one has $\{c(x)\in\F_q[x]^n\mid x^dc(x)\in\Cc\}=\Cc$ for all $d\geq 0$ by Proposition~\ref{proposition:noncatastrophic}. Therefore, all the containments in (\ref{eqn:chain}) are equalities. In particular $$\rev(\rev(\Cc))\subseteq\Cc:x^d=\Cc\subseteq\rev(\rev(\Cc))$$
where the first containment follows from 2. and the second from 1.
\end{proof}

The next proposition proves that the generalized weights of $\Cc$ are the same as those of $\rev(\Cc)$.

\begin{proposition}\label{proposition:reverse}
Let $\Cc$ be an $(n,k,\delta)$ convolutional code. Then,
$$d_r(\Cc)=d_r(\rev(\Cc)),$$
for $1\leq r\leq k$.
\end{proposition}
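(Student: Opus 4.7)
The strategy is to prove both $d_r(\rev(\Cc)) \leq d_r(\Cc)$ and $d_r(\Cc) \leq d_r(\rev(\Cc))$ via a single construction based on a ``truncated reversal'' operator that, unlike $\rev$ itself, is $\F_q$-linear. For a fixed integer $D \geq 0$, let $\Phi_D(v) = x^D v(1/x)$ for $v \in \F_q[x]^n$ with $\deg v \leq D$. Writing $v = \sum_{t=0}^D v[t]\, x^t$, we get $\Phi_D(v) = \sum_{s=0}^D v[D-s]\, x^s$, so $\Phi_D$ simply reverses coefficients inside the window $[0,D]$. Three key facts: $\Phi_D$ is $\F_q$-linear and self-inverse on its domain; $\supp(\Phi_D(v)) = \{(j, D-t) : (j,t) \in \supp(v)\}$, so $|\supp(\Phi_D(V))| = |\supp(V)|$ for every $\F_q$-subspace $V$ of the domain; and $\Phi_D(v) = x^{D-\deg v} \rev(v)$, so $\Phi_D(v)$ lies in the $\F_q[x]$-submodule generated by $\rev(v)$.

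To prove $d_r(\rev(\Cc)) \leq d_r(\Cc)$, let $\mathcal{D} \subseteq \Cc$ be a rank-$r$ subcode realizing $d_r(\Cc)$, and let $c_1, \ldots, c_r$ be a basis of $\mathcal{D}$ with $|\supp(\langle c_1, \ldots, c_r\rangle_{\F_q})| = \wt(\mathcal{D})$. Set $D = \max_i \deg c_i$ and consider $\mathcal{D}' = \langle \Phi_D(c_1), \ldots, \Phi_D(c_r)\rangle_{\F_q[x]}$. Since each $\Phi_D(c_i) = x^{D-\deg c_i} \rev(c_i)$ lies in $\rev(\Cc)$, we have $\mathcal{D}' \subseteq \rev(\Cc)$. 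The reverses $\rev(c_1), \ldots, \rev(c_r)$ are $\F_q[x]$-linearly independent (the substitution $x \mapsto 1/x$ and row rescaling connect them to $c_1, \ldots, c_r$), hence so are the $\Phi_D(c_i)$, giving $\rk(\mathcal{D}') = r$. By $\F_q$-linearity of $\Phi_D$, $\langle \Phi_D(c_i)\rangle_{\F_q} = \Phi_D(\langle c_i\rangle_{\F_q})$, and therefore $\wt(\mathcal{D}') \leq |\supp(\langle \Phi_D(c_i)\rangle_{\F_q})| = |\supp(\langle c_i\rangle_{\F_q})| = \wt(\mathcal{D})$, whence $d_r(\rev(\Cc)) \leq d_r(\Cc)$.

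For the reverse inequality, apply the same construction inside $\rev(\Cc)$: starting from a rank-$r$ subcode $\mathcal{E} \subseteq \rev(\Cc)$ realizing $d_r(\rev(\Cc))$, we produce a rank-$r$ submodule $\mathcal{E}' \subseteq \rev(\rev(\Cc))$ with $\wt(\mathcal{E}') \leq \wt(\mathcal{E})$. If $\Cc$ is noncatastrophic, Corollary~\ref{corollary:rev} gives $\rev(\rev(\Cc)) = \Cc$ and we conclude. Otherwise, the same corollary furnishes $d \geq 0$ with $x^d \rev(\rev(\Cc)) \subseteq \Cc$. Then $x^d \mathcal{E}' \subseteq \Cc$ is a rank-$r$ subcode, and multiplying a weight-realizing basis of $\mathcal{E}'$ by $x^d$ yields a basis of $x^d \mathcal{E}'$ whose $\F_q$-span has support merely shifted by $d$ in the time coordinate; hence $\wt(x^d \mathcal{E}') \leq \wt(\mathcal{E}')$, and $d_r(\Cc) \leq \wt(x^d \mathcal{E}') \leq \wt(\mathcal{E}) = d_r(\rev(\Cc))$.

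The main obstacle is that using the obvious candidate basis $\rev(c_1), \ldots, \rev(c_r)$ directly fails: the support of the $\F_q$-span can strictly grow under $\rev$, since different $c_i$ can be shifted by different amounts when their degrees differ. The operator $\Phi_D$ bypasses this by aligning every basis vector to the same window of length $D+1$ before reversing, producing a genuine $\F_q$-linear bijection whose effect on supports is the single coordinate flip $(j,t) \mapsto (j,D-t)$. A secondary subtlety is the catastrophic case, resolved by the $x^d$ correction from Corollary~\ref{corollary:rev}.
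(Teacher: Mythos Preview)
Your proof is correct and follows essentially the same approach as the paper: your operator $\Phi_D$ is precisely the paper's map $c_i \mapsto x^{s-\deg c_i}\rev(c_i)$ with $s=D=\max_i\deg c_i$, and your reverse inequality via $\rev(\rev(\Cc))$ together with the $x^d$-shift from Corollary~\ref{corollary:rev} mirrors the paper's chain $d_r(\Cc)\leq d_r(\rev(\rev(\Cc)))\leq d_r(\rev(\Cc))\leq d_r(\Cc)$. Your explicit emphasis on the $\F_q$-linearity of $\Phi_D$ and its bijective action $(j,t)\mapsto(j,D-t)$ on supports makes the support-preservation step slightly more transparent than in the paper, but the underlying construction is identical.
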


\begin{proof}
Let $\mathcal{D}=\langle c_1,\dots ,c_r\rangle_{\F_q[x]}$ be a submodule of $\Cc$ that realizes $d_r(\Cc)$ and such that $\wt(\mathcal{D})=|\supp(\langle c_1,\dots ,c_r\rangle_{\F_q})|$. Let $s=\max\{\deg(c_i):1\leq i\leq r\}$ and let
$$\mathcal{D}'=\langle x^{s-\deg(c_1)}\rev(c_1),\dots,x^{s-\deg(c_r)}\rev(c_r)\rangle_{\F_q[x]}\subseteq \rev(\Cc).$$ Since $\rk\left(\mathcal{D}'\right)=\rk\left(\rev(\mathcal{D})\right)=r$, then $$d_r(\rev(\Cc))\leq\wt(\mathcal{D}')\leq|\supp(\langle x^{s-\deg(c_1)}\rev(c_1),\dots,x^{s-\deg(c_r)}\rev(c_r)\rangle_{\F_q})|=\wt(\mathcal{D})=d_r(\Cc).$$ To prove the reverse inequality, let $\mathcal{D}\subseteq\rev(\rev(\Cc))$ be a submodule that realizes $d_r(\rev(\rev(\Cc)))$. By Corollary~\ref{corollary:rev} there exists a positive integer $d$ such that $x^{d}\mathcal{D}\subseteq \Cc$. Since $\rk(x^{d}\mathcal{D})=\rk(\mathcal{D})=r$, then $$d_r(\Cc)\leq\wt(x^d\mathcal{D})=\wt(\mathcal{D})=d_r(\rev(\rev(\Cc))).$$
Therefore
\begin{equation*}
d_r(\Cc)\leq d_r(\rev(\rev(\Cc)))\leq d_r(\rev(\Cc))\leq d_r(\Cc).\qedhere
\end{equation*}
\end{proof}

\section{Minimal supports}

In this section, we study codewords of minimal support and submodules that realize the generalized weights of a convolutional code. We show that it is possible to calculate the generalized weights considering only subspaces with special properties. In particular, in Theorem~\ref{theorem:bound} we prove that, in order to compute $d_r(\Cc)$, we may restrict to subspaces generated by vectors, whose degree is bounded by a function of $n,k,r$ and $\delta_1$ only. Moreover, we show that the generalized weights are realized by subspaces generated by codewords of minimal support. 

Some of the results in this section are similar to those obtained in~\cite[Section 3]{GR22} for a large family of support functions and codes over rings. Notice however that the setup of~\cite{GR22} does not apply to our situation, as the Hamming support for convolutional codes is not a support according to the definition from~\cite{GR22}. 

\begin{definition}
Let $\Cc\subseteq\F_q[x]^n$ be a convolutional code. A codeword $c\in\Cc$ is {\bf minimal} if its support is minimal among the supports of the nonzero codewords of $\Cc.$
\end{definition}	

It is easy to show that, for a given code, minimal supports correspond uniquely to minimal codewords, up to a nonzero scalar multiple.

\begin{lemma}
Let $\Cc\subseteq\F_q[x]^n$ be a convolutional code. If two minimal codewords $u(x),v(x)\in\Cc$ have the same support, then there exists $\alpha\in\F_q^*$ such that $u(x)=\alpha v(x)$.  
\end{lemma}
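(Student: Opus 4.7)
The plan is a standard minimality argument: construct a linear combination $u - \alpha v$ whose support is strictly contained in the common support of $u$ and $v$, and then invoke minimality to conclude that the combination must vanish.

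Concretely, since $\supp(u(x)) = \supp(v(x))$ is nonempty (both codewords are nonzero), I can pick any position $(j,t)$ in this common support. Write $u(x) = (p_1(x),\dots,p_n(x))$ and $v(x) = (q_1(x),\dots,q_n(x))$, and let $a, b \in \F_q^*$ be the coefficients of $x^t$ in $p_j(x)$ and $q_j(x)$ respectively; both are nonzero by the choice of $(j,t)$. Set $\alpha = a b^{-1} \in \F_q^*$. By construction, the coefficient of $x^t$ in the $j$-th component of $u(x) - \alpha v(x)$ equals $a - \alpha b = 0$, so $(j,t) \notin \supp(u(x) - \alpha v(x))$.

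On the other hand, the support of $u(x) - \alpha v(x)$ is contained in $\supp(u(x)) \cup \supp(\alpha v(x)) = \supp(u(x))$, since $\supp(\alpha v(x)) = \supp(v(x)) = \supp(u(x))$ for $\alpha \in \F_q^*$. Combined with the previous paragraph, this gives
\[
\supp(u(x) - \alpha v(x)) \subsetneq \supp(u(x)).
\]
Now $u(x) - \alpha v(x) \in \Cc$, so if it were nonzero, its support would be a strict subset of the support of the minimal codeword $u(x)$, contradicting minimality. Therefore $u(x) - \alpha v(x) = 0$, i.e., $u(x) = \alpha v(x)$.

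There is no real obstacle here; the only subtlety to flag is that the inclusion $\supp(\alpha v) = \supp(v)$ uses $\alpha \neq 0$, which is guaranteed by the choice $\alpha = ab^{-1}$ with $a,b \in \F_q^*$. The argument relies only on the definition of minimality of the support and the elementary fact that $\supp(u-\alpha v) \subseteq \supp(u)\cup\supp(v)$ for the Hamming support on $\F_q[x]^n$.
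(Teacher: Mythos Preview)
Your proof is correct and follows exactly the same approach as the paper's: choose $\alpha\in\F_q^*$ so that $\supp(u(x)-\alpha v(x))\subsetneq\supp(u(x))$, then invoke minimality to conclude $u(x)-\alpha v(x)=0$. You simply spell out in detail how such an $\alpha$ is obtained, which the paper leaves implicit.
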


\begin{proof}
If $u(x),v(x)\in\Cc$ have the same support, then there exists $\alpha\in\F_q^*$ such that $\supp(u(x)-\alpha v(x))\subsetneq\supp(u(x))$. By the minimality of the support of $u(x)$, we deduce that $u(x)-\alpha v(x)=0$.
\end{proof}

In this section, we study the subcodes of $\Cc$ which realize its generalized weights. We start by showing that if $\Cc$ is a noncatastrophic convolutional code, then each of its generalized weights is realized by a subspace that contains an element that is not divisible by $x$.

\begin{theorem}
Let $\Cc$ be a noncatastrophic convolutional code of rank $k$. For $1\leq r\leq k$, consider the set
\begin{equation*}
\mathcal{U}_r=\{\mathcal{D}\subseteq\Cc\text{ is a subcode of }\rk(\mathcal{D})=r \mbox{ and }\exists\; c\in \mathcal{D}\text{ with }c[0]\neq 0\}.
\end{equation*}
Then
\begin{equation*}
d_r(\Cc)=\min\{\wt(\mathcal{D})\mid \mathcal{D}\in\mathcal{U}_r\}.
\end{equation*}
\end{theorem}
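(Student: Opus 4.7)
The inequality $d_r(\Cc)\leq\min\{\wt(\mathcal{D}) : \mathcal{D}\in\mathcal{U}_r\}$ is immediate from Definition~\ref{defgenw}, since $\mathcal{U}_r$ is a subset of the set of rank-$r$ subcodes of $\Cc$. The substance of the proof is the reverse inequality. The plan is to start from a rank-$r$ subcode $\mathcal{D}\subseteq\Cc$ realizing $d_r(\Cc)$ and, as long as $\mathcal{D}\notin\mathcal{U}_r$, to manufacture a new rank-$r$ subcode of $\Cc$ of the same weight that is ``closer'' to lying in $\mathcal{U}_r$.

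Concretely, assume every $c\in\mathcal{D}$ satisfies $c[0]=0$, equivalently every element of $\mathcal{D}$ is divisible by $x$ in $\F_q[x]^n$. I then set
$$\mathcal{D}'=\{c'\in\F_q[x]^n : xc'\in\mathcal{D}\}.$$
The noncatastrophicity of $\Cc$, applied via Proposition~\ref{proposition:noncatastrophic}, forces $\mathcal{D}'\subseteq\Cc$. Multiplication by $x$ is an injective $\F_q[x]$-module map on $\F_q[x]^n$ whose restriction to $\mathcal{D}'$ has image exactly $\mathcal{D}$, so $\rk(\mathcal{D}')=\rk(\mathcal{D})=r$. To control the weight, I choose generators $c_1,\ldots,c_r$ of $\mathcal{D}$ realizing $\wt(\mathcal{D})=|\supp(\langle c_1,\ldots,c_r\rangle_{\F_q})|$; the elements $c_i/x$ generate $\mathcal{D}'$, and since multiplication by $x$ acts on supports purely by shifting the second coordinate, one has $|\supp(\langle c_1/x,\ldots,c_r/x\rangle_{\F_q})|=|\supp(\langle c_1,\ldots,c_r\rangle_{\F_q})|$. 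This gives $\wt(\mathcal{D}')\leq\wt(\mathcal{D})=d_r(\Cc)$, and since $d_r(\Cc)$ is a minimum over all rank-$r$ subcodes, equality must hold.

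Finally, I iterate: as long as the current rank-$r$ subcode of $\Cc$ is not in $\mathcal{U}_r$, I replace it by its $x$-divided version. At each step the maximum degree among a generating set drops by exactly $1$, so the process terminates after finitely many steps with a rank-$r$ submodule $\mathcal{D}^{(m)}\subseteq\Cc$ of weight $d_r(\Cc)$ containing some codeword with nonzero constant term, i.e.\ $\mathcal{D}^{(m)}\in\mathcal{U}_r$. The only delicate point in the argument is the containment $\mathcal{D}'\subseteq\Cc$, and this is exactly where noncatastrophicity is indispensable; the remaining verifications are bookkeeping with supports under multiplication by $x$.
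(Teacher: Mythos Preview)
Your proof is correct and follows essentially the same approach as the paper: take a subcode realizing $d_r(\Cc)$, divide out the common factor of $x$ from a set of generators using Proposition~\ref{proposition:noncatastrophic} to remain in $\Cc$, and observe that the support size (hence the weight) is unchanged under this shift. The only cosmetic difference is that the paper divides once by $x^\ell$ with $\ell=\max\{d\geq 0 : x^d\mid c_i \text{ for all } i\}$, whereas you iterate one factor of $x$ at a time; the content is the same.
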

	
\begin{proof}
Let $\mathcal{D}=\langle c_1,\dots, c_r\rangle_{\F_q[x]}\subseteq\Cc$ be a subcode that realizes the $r$-th generalized weight and such that $\wt(\mathcal{D})=|\supp(\langle c_1,\dots, c_r\rangle_{\F_q})|$. Let $\ell=\max\{d\geq 0 : x^d\mid c_i \mbox{ for } 1\leq i\leq k\}$. Since $\Cc$ is noncatastrophic, by Proposition~\ref{proposition:noncatastrophic} we have that $\mathcal{D}'=\langle c_1/x^{\ell},\dots, c_r/x^{\ell}\rangle\in\mathcal{U}_r$. This concludes the proof, since $\wt(\mathcal{D})=|\supp(\langle c_1,\dots, c_r\rangle_{\F_q})|=|\supp(\langle c_1/x^{\ell},\dots, c_r/x^{\ell}\rangle_{\F_q})|\geq\wt(\mathcal{D}')$.
\end{proof}
	
Let $\phi_{d,\delta}:\F_q[x]\rightarrow \F_q[x]_{\leq\delta}$ be the linear map given by $$\phi_{d,\delta}(a_0x+\dots+a_{d-\delta}x^{d-\delta}+\dots a_dx^{d}+\dots+a_nx^n)=a_{d-\delta}+a_{d-\delta+1}x+\dots a_dx^{\delta},$$
where $\F_q[x]_{\leq\delta}$ denotes the set of polynomials of degree at most $\delta$. 
We can extend this map to $\F_q[x]^{k\times n}$ by applying $\phi_{d,\delta}$ to each entry. This yields the map $\Phi_{d,\delta}:\F_q[x]^{k\times n}\rightarrow \F_q[x]^{k\times n}_{\leq\delta}$, given by 
$$\Phi_{d,\delta}((m_{i,j}(x))_{i,j})=(\phi_{d,\delta}(m_{i,j}(x))))_{i,j}.$$
In order to simplify the notation, given a matrix $M\in \F_q[x]^{k\times n}$, we write $M_t=\Phi_{t,t}(M)$.
The next lemma follows directly from the definition.

\begin{lemma}\label{lemma:translation}
Let $M\in\F_q[x]^{k\times n}$ be a matrix with entries in $\F_q[x]$.
\begin{enumerate}
\item If each entry of $M$  is divisible by $x^t$ for some $t\in\mathbb{N}$, then
\begin{equation*}
\Phi_{d,\delta}(x^{-t}M)=\Phi_{d+t,\delta}(M).
\end{equation*}
\item If $d-\delta\leq\deg(M)\leq d$, then
\begin{equation*}
\Phi_{d,\delta}(M)=\Phi_{\deg(M),\delta-d+\deg(M)}(M).
\end{equation*}
\end{enumerate}
\end{lemma}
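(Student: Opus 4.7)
The statement concerns a purely formal property of the coefficient-extraction maps $\phi_{d,\delta}$ and their matrix extensions $\Phi_{d,\delta}$, so the plan is to reduce both parts to an entrywise identity for a single polynomial and then verify each identity by unwinding the definition. Since $\Phi_{d,\delta}$ acts entry-by-entry, it is enough to prove, for any $p(x) = \sum_{i\geq 0} a_i x^i \in \F_q[x]$:
\begin{enumerate}
\item[(1')] if $x^t \mid p(x)$, then $\phi_{d,\delta}(x^{-t}p(x)) = \phi_{d+t,\delta}(p(x))$;
\item[(2')] if $d-\delta\leq \deg(p(x))\leq d$, then $\phi_{d,\delta}(p(x))=\phi_{\deg(p(x)),\delta-d+\deg(p(x))}(p(x))$.
\end{enumerate}
The key observation I would record first is that, directly from the definition, $\phi_{d,\delta}(p(x)) = \sum_{j=0}^{\delta} a_{d-\delta+j}\, x^{j}$, i.e.\ $\phi_{d,\delta}$ simply reads off the window of coefficients indexed from $d-\delta$ to $d$ and re-indexes them starting at $0$.

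For (1'), writing $p(x)=\sum_{i\geq t}a_i x^i$ gives $x^{-t}p(x)=\sum_{j\geq 0}a_{j+t}x^{j}$, so the coefficient of $x^{d-\delta+j}$ in $x^{-t}p(x)$ equals $a_{d-\delta+j+t}$, which is exactly the coefficient of $x^{(d+t)-\delta+j}$ in $p(x)$. Summing over $j=0,\ldots,\delta$ yields the equality of the two polynomials. For (2'), set $e=\deg(p(x))$. Then $a_{d-\delta+j}=0$ for $j>e-d+\delta$, so the sum $\sum_{j=0}^{\delta} a_{d-\delta+j} x^{j}$ truncates to $\sum_{j=0}^{\delta-d+e} a_{d-\delta+j} x^{j}$. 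Rewriting $d-\delta+j = e-(\delta-d+e)+j$ identifies this with $\phi_{e,\delta-d+e}(p(x))$, which is what we wanted.

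Finally, I would close the argument by noting that both sides of each identity in the lemma are defined by applying $\phi$ entrywise to $M$, so the polynomial identities (1') and (2') imply the corresponding matrix identities. I do not expect any real obstacle here: the only mild subtlety is the bookkeeping with the index shifts, and in part (2) one has to check that $\delta-d+\deg(M)\geq 0$, which is guaranteed by the hypothesis $\deg(M)\geq d-\delta$, so that $\phi_{\deg(M),\delta-d+\deg(M)}$ is a well-defined map.
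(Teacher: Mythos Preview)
Your argument is correct and matches the paper's approach exactly: the paper simply states that the lemma ``follows directly from the definition'' and gives no further proof, while you carry out precisely that unwinding.

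One small bookkeeping slip in your reduction for part~2 is worth fixing. In the matrix identity the parameter on the right-hand side is $\deg(M)$, which is the \emph{same} number $e=\deg(M)$ for every entry, whereas your statement~(2') uses $\deg(p(x))$, which varies entry by entry (and may even be smaller than $d-\delta$, e.g.\ for a zero entry). The correct entrywise reduction is: for every polynomial $p$ with $\deg(p)\le e$ and every $e$ with $d-\delta\le e\le d$, one has $\phi_{d,\delta}(p)=\phi_{e,\delta-d+e}(p)$. Your computation already proves this more general statement verbatim, since the only fact you use about $e$ is that $a_i=0$ for $i>e$; so the fix is purely in the formulation of~(2'), not in the proof.
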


The next lemma is crucial for the proof of Theorem~\ref{theorem:bound}.

\begin{lemma}\label{lemma:reduction}
Let $\Cc$ be an $(n,k,\delta)$ convolutional code, $G$ a row-reduced generator matrix for $\Cc$ with maximum degree $\delta_1\geq 1$, $M=(m_{i,j}(x))_{i,j}\in\F_q[x]^{r\times k}$, and $1\leq s_1<s_2\leq d=\max_{i,j} \deg(m_{i,j}(x))$ natural numbers. If $s_2-s_1\geq q^{\delta_1kr}$, then there exist $M'=(m'_{i,j}(x))_{i,j}\in\F_q[x]^{r\times k}$  and a natural number $t<d'=\max_{i,j} \deg(m'_{i,j}(x))$ such that
\begin{enumerate}
\item $s_1\leq t<t+d-d'\leq s_2$.
\item $(M'G)_t=(MG)_t$.
\item $\Phi_{d+\delta_1,d'+\delta_1-t-1}(MG)=\Phi_{d'+\delta_1,d'+\delta_1-t-1}(M'G)$.
\end{enumerate}
\end{lemma}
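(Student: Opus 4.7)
The natural plan is to apply a pigeonhole argument to the ``windows'' of coefficients of $M$, then use the resulting collision to construct $M'$ by collapsing a middle block of $M$.

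First, for each $s \in \{s_1, s_1 + 1, \ldots, s_2\}$, consider the matrix $\Phi_{s, \delta_1 - 1}(M) \in \F_q[x]^{r \times k}$, whose entries all have degree at most $\delta_1 - 1$. There are at most $q^{\delta_1 k r}$ such matrices, but the hypothesis $s_2 - s_1 \geq q^{\delta_1 k r}$ gives $s_2 - s_1 + 1$ choices of $s$, so pigeonhole yields $s_1 \leq s < s' \leq s_2$ with $\Phi_{s, \delta_1 - 1}(M) = \Phi_{s', \delta_1 - 1}(M)$. Set $u = s' - s \geq 1$. I then define $M'$ from $M$ by deleting the block of coefficients at positions $s+1, \ldots, s'$ and shifting the tail down: writing $M = N + x^{s-\delta_1+1} E$ with $\deg N \leq s - \delta_1$ and $E(x) = \sum_{i \geq 0} M_{i+s-\delta_1+1}\, x^i$, I set $E'(x) = \sum_{i \geq 0} M_{i + s - \delta_1 + 1 + u}\, x^i$ and $M' := N + x^{s - \delta_1 + 1} E'$. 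The pigeonhole equality says exactly that the first $\delta_1$ coefficients of $E$ and $E'$ agree, so $E - E' = x^{\delta_1} F$ for some polynomial matrix $F$, and hence $M - M' = x^{s+1} F$.

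Take $t := s$ and $d' := \deg M'$. Since the leading coefficient of $M$ at position $d$ descends to position $d - u$ in $M'$ and remains nonzero, one gets $d' = d - u$. Condition 1 is then immediate: $s_1 \leq s = t$, and $t + d - d' = s + u = s' \leq s_2$, with strict inequality $t < t + d - d'$ because $u \geq 1$. Condition 2 follows from $M - M' = x^{s+1} F$, which gives $(MG)_j = (M'G)_j$ for all $j \leq t$. For condition 3, one checks directly that $(MG)_{s+u+1+j} = (M'G)_{s+1+j}$ for every $j \geq 0$; using $\deg(NG) \leq s$ to kill the $N$-contributions at high positions, and writing $E = (E \bmod x^u) + x^u E'$ together with the bound $\deg((E \bmod x^u)\, G) \leq u + \delta_1 - 1$, both sides collapse to $(E'G)_{\delta_1 + j}$.

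The main obstacle is picking the correct cut point in the construction. A naive collapse starting at position $s$ does not make condition 3 work: one must start the shift at position $s - \delta_1 + 1$, so that the pigeonhole identity translates into divisibility of $E - E'$ by $x^{\delta_1}$, which is precisely what causes the middle error term $(E \bmod x^u)\, G$ to vanish in the relevant degree range and forces both high-degree pieces of $MG$ and $M'G$ to collapse to the same expression $(E'G)_{\delta_1+j}$. Once this alignment is fixed, the three conditions follow from routine bookkeeping on polynomial degrees.
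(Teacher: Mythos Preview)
Your proof is correct and follows the same overall strategy as the paper: apply pigeonhole to $\delta_1$-wide windows indexed by $s\in[s_1,s_2]$, then collapse the block between the two colliding positions to produce $M'$. The one genuine difference is the object to which pigeonhole is applied. You take windows $\Phi_{s,\delta_1-1}(M)$ of $M$ itself, so the bound $q^{\delta_1kr}$ is immediate from counting $r\times k$ matrices of polynomials of degree $<\delta_1$. The paper instead takes the top windows $\Phi_{s+\delta_1,\delta_1-1}(M_sG)$ of $M_sG$ and bounds their number using that the rows lie in $\Cc$. Your collision condition is strictly stronger than the paper's (a collision on $M$ forces one on $M_sG$), and under it your $M'$ coincides with the paper's $M'=M_t+(M-M_{t'})x^{t-t'}$ for $t=s$, $t'=s'$. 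A pleasant side effect of your setup is that condition~3 follows for free from the decomposition $E=(E\bmod x^u)+x^uE'$ together with $\deg((E\bmod x^u)G)\le u+\delta_1-1$, and the pigeonhole identity is only needed to secure condition~2; in the paper it is the other way around. One notational point: you write $M_j$ for the coefficient of $x^j$ in $M$, whereas in the paper $M_t$ denotes the truncation $\Phi_{t,t}(M)$; you should flag this to avoid confusion.
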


\begin{proof}
Consider the set $F=\{\Phi_{s+\delta_1,\delta_1-1}(M_sG):s_1\leq s\leq s_2\}$. Since $s+\delta_1>\delta_1$ and the rows of $M_sG$ are elements of $\Cc$ of degree smaller than or equal to $s+\delta_1$, then
\begin{equation*}
|F|\leq \left|\bigcup_{h=\delta_1}^{\infty}\left\{\Phi_{h,\delta_1-1}\begin{pmatrix}
c_1\\\vdots\\c_r
\end{pmatrix}:c_i\in\Cc\text{ and }\deg(c_i)\leq h\right\}\right|\leq q^{\delta_1kr}.
\end{equation*}
The second inequality follows from observing that there are $r$ rows, each row is a combination of $k$ generators of $\Cc$ and there are $\delta_1$ possible shifts.

Since $s_2-s_1\geq q^{\delta_1kr}$, by the pigeonhole principle there exist $s_1\leq t<t'\leq s_2$ such that $\Phi_{t+\delta_1,\delta_1-1}(M_{t}G)=\Phi_{t'+\delta_1,\delta_1-1}(M_{t'}G)$. Let
\begin{equation}\label{equation:M'}
M'=M_{t}+(M-M_{t'})x^{t-t'}.
\end{equation}
We claim that $M'$ and $t$ satisfy the statement. Since $d\geq t'>t$, it follows from equation \eqref{equation:M'} that $d'=d-t'+t$ and therefore $s_1\leq t<t'=d+t-d'\leq s_2$.
Moreover
\begin{equation*}
(M'G)_t=(M_tG+(M-M_{t'})x^{t-t'}G)_t=(M_tG)_t=(MG)_t.	
\end{equation*}
Finally, 
\begin{equation*}
\begin{split}
&\Phi_{d'+\delta_1,d'+\delta_1-t-1}(M'G)=\Phi_{d'+\delta_1,d'+\delta_1-t-1}((M_{t}+(M-M_{t'})x^{t-t'})G)=\\
&=\Phi_{d'+\delta_1,d'+\delta_1-t-1}(M_{t}G)+\Phi_{d'+\delta_1,d'+\delta_1-t-1}(((M-M_{t'})x^{t-t'})G)=\\
&=\Phi_{t+\delta_1,\delta_1-1}(M_{t}G)+\Phi_{d'+\delta_1+t'-t,d'+\delta_1-t-1}((M-M_{t'})G)=\\
&=\Phi_{t+\delta_1,\delta_1-1}(M_{t}G)+\Phi_{d+\delta_1,d'+\delta_1-t-1}(MG)-\Phi_{t'+\delta_1,\delta_1-1}(M_{t'}G)=\\
&=\Phi_{d+\delta_1,d'+\delta_1-t-1}(MG),
\end{split}
\end{equation*}
where the second and third equalities follow from Lemma~\ref{lemma:translation}, while the last one follows from the fact that $\Phi_{t+\delta_1,\delta_1-1}(M_{t}G)=\Phi_{t'+\delta_1,\delta_1-1}(M_{t'}G)$.
\end{proof}

\begin{remark}
By Lemma~\ref{lemma:reduction} we have that $$|\supp(M'G)|\leq |\supp(MG)|.$$
Indeed
\begin{equation*}
\begin{split}
|\supp(M'G)|&=|\supp((M'G)_t)|+|\supp(\Phi_{d'+\delta_1,d'+\delta_1-t-1}(M'G))|\\
&=|\supp((MG)_t)|+|\supp(\Phi_{d+\delta_1,d'+\delta_1-t-1}(MG))|\\
&\leq |\supp(MG)|.
\end{split}
\end{equation*}

Since $G$ is a generator matrix of $\Cc$, the submodules spanned by the rows of $MG$ and $M'G$ are subcodes of $\Cc$. Suppose that the submodule spanned by the rows of $MG$ realizes $d_r(\Cc)$. Since Lemma~\ref{lemma:reduction} implies that $|\supp(M'G)|\leq |\supp(MG)|$, if we had $\rk(MG)=\rk(M'G)$, then $M'G$ would realize $d_r(\Cc)$, too. However, it may happen that $\rk(MG)\neq\rk(M'G)$, as the next example shows.
\end{remark}

\begin{example}
Let $\Cc$ be an $(3,2,1)$ code in $\F_2[x]^{3}$ with row-reduced generator matrix
\begin{equation*}
G=\begin{pmatrix}
1&0&x\\0&1&0
\end{pmatrix}.
\end{equation*}
Let $M\in\F_2[x]^{2\times 2}$ be the following matrix of degree $d=38$
\begin{equation*}
M=\begin{pmatrix}
x^{38}&x^{17}\\x^{17}&1
\end{pmatrix}. 
\end{equation*}
Let $s_1=17$ and $s_2=37$. Since $s_2-s_1=20\geq 16=2^4=q^{\delta_1kr}$, we can apply Lemma~\ref{lemma:reduction} to $M$ and $G$. Let $M'\in\F_2[x]^{2\times 2}$ be the matrix
\begin{equation*}
M'=\begin{pmatrix}
x^{34}&x^{17}\\x^{17}&1
\end{pmatrix}. 
\end{equation*}
We claim that $M'$ with $t=20$ satisfies the condition in Lemma~\ref{lemma:reduction}. Indeed,
we have that $s_1=17< t=20<24=38+20-34=d+t-d'<37$. Moreover, 
\begin{equation*}
(M'G)_{20}=\begin{pmatrix}
0&x^{17}&0\\x^{17}&1&x^{18}
\end{pmatrix}=(MG)_{20}. 
\end{equation*}
Finally, we have that
\begin{equation*}
\Phi_{35,14}(M'G)_{20}=\begin{pmatrix}
x^{13}&0&x^{14}\\0&0&0
\end{pmatrix}=\Phi_{39,14}(MG). 
\end{equation*}
Notice that $\det(M')=0$ while $\det(M)=x^{38}-x^{34}$, therefore $\rk(MG)\neq\rk(M'G)$. Notice moreover that $(M',20)$ is not the only pair that satisfies the conditions in Lemma~\ref{lemma:reduction} and that other pairs may behave differently. For example, consider the matrix $M''\in\F_2[x]^{2\times 2}$ given by
\begin{equation*}
M''=\begin{pmatrix}
x^{35}&x^{17}\\x^{17}&1
\end{pmatrix}. 
\end{equation*}
One can verify that the pair $(M'',20)$ satisfies the conditions in Lemma~\ref{lemma:reduction} and $\rk(MG)=\rk(M''G)$.
\end{example}

For a matrix $M\in\F_q^{m\times n}$ and $S\subseteq \{1,\dots,k\}$, $L\subseteq\{1,\dots.n\}$ we let $M(S,L)$ denote the submatrix of $M$ consisting of the rows indexed by $S$ and the columns indexed by $L$. We are now ready to prove the main result of this section. It says that, in order to compute generalized weights, it suffices to look at a subcodes generated by codewords of bounded degree. A similar result for the generalized Hamming weights of a convolutional code is stated without proof in~\cite{RY,Y}.

\begin{theorem}\label{theorem:bound}
Let $\Cc$ be an $(n,k,\delta)$ convolutional code with maximum degree $\delta_1$. Then, $d_r(\Cc)$ is realized by a subspace generated by codewords of degree at most $((r+2)q^{\delta_1kr}+1)(n(\delta_1+1)-k+r)+\delta_1$.
\end{theorem}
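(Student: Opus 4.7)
Proof Proposal.

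The plan is to reduce the degree of a matrix representing a subcode realizing $d_r(\Cc)$ via iterated application of Lemma~\ref{lemma:reduction}. Let $\mathcal{D}\subseteq\Cc$ have rank $r$ and $\wt(\mathcal{D})=d_r(\Cc)$, and fix a generating set $c_1,\dots,c_r$ with $|\supp(\langle c_1,\dots,c_r\rangle_{\F_q})|=d_r(\Cc)$. Since $G$ has full row rank, we can write $(c_1^T,\dots,c_r^T)^T=MG$ for some $M\in\F_q[x]^{r\times k}$ of rank $r$. Set $Q:=q^{\delta_1 kr}$ and $W:=n(\delta_1+1)-k+r$. Since each row of $MG$ has degree at most $\deg(M)+\delta_1$, it suffices to exhibit such an $M$ with $\deg(M)\leq((r+2)Q+1)W$.

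Among all admissible pairs $(\mathcal{D},M)$, I would pick one minimizing $d:=\deg(M)$ and suppose for contradiction that $d>((r+2)Q+1)W$. By Proposition~\ref{propproperties}(4), $|\supp(MG)|=d_r(\Cc)\leq W$. Applying Lemma~\ref{lemma:reduction} to $M$ with some window $[s_1,s_2]\subseteq[1,d]$ of length at least $Q$ produces $M'$ with $\deg(M')<d$ and $|\supp(M'G)|\leq|\supp(MG)|$. If $\rk(M')=r$, then the row module of $M'G$ has rank $r$ and weight at most $d_r(\Cc)$, hence equal by definition of $d_r$; so $M'$ is an admissible matrix of strictly smaller degree, contradicting the minimality of $d$.

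The main obstacle is that Lemma~\ref{lemma:reduction} may drop the rank of $M$ below $r$, as exhibited in the example after that lemma. To handle this, I would exploit the flexibility in the choice of window: since $d$ is large, there are roughly $d-Q>(r+1)QW$ overlapping windows of length $Q$ in $[1,d]$, and the task reduces to finding one that yields a rank-preserving $M'$. Using the explicit formula $M'=M_t+(M-M_{t'})x^{t-t'}$ from the proof of Lemma~\ref{lemma:reduction}, a rank drop at a collision $(t,t')$ is equivalent to the existence of a nonzero element in the $\F_q[x]$-rowspan of $M$ whose polynomial structure is sharply localized by $(t,t')$. Such an element contributes nonzero entries to $\supp(MG)$, and the bound $|\supp(MG)|\leq W$ limits how many rank-drop obstructions can coexist. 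The factor $r+2$ in the target bound reflects that at most $r$ $\F_q(x)$-linearly independent obstructions can exist among the rows of an $r$-row matrix, with the additional summand absorbing boundary windows.

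Making this counting precise is the technical heart of the proof and, I expect, the main difficulty: one must extract, from the assumption that every length-$Q$ window in $[1,d]$ forces a rank drop, enough independent linear constraints on $M$ to contradict $\rk(M)=r$ when $d>((r+2)Q+1)W$. This requires a careful interplay between the window-collision structure underlying Lemma~\ref{lemma:reduction} and the sparsity of $MG$ guaranteed by the support bound.
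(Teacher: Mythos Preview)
Your overall plan---pick $M$ of minimal degree, apply Lemma~\ref{lemma:reduction}, and argue that some application must preserve the rank---matches the paper. However, the decisive step you leave open (``making this counting precise'') is \emph{not} carried out in the paper by the support-counting heuristic you sketch, and I do not see how that heuristic would succeed. A rank drop in $M'=M_t+(M-M_{t'})x^{t-t'}$ does not canonically produce a localized codeword contributing to $\supp(MG)$; the example following Lemma~\ref{lemma:reduction} shows that rank can drop even when $MG$ is extremely sparse, so there is no evident bookkeeping that charges each bad window to a support entry.

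The paper's argument is structurally different in two ways. First, the windows are not spread over all of $[1,d]$: using $|\supp(MG)|\leq W$, one finds $s_1>s_2>\cdots>s_{r+3}$ with $s_h-s_{h+1}\geq Q$ lying in a region where the relevant slice of $MG$ vanishes, i.e.\ $\Phi_{s_1,s_1-s_{r+3}}(MG)=0$. This guarantees that each reduction preserves the support \emph{exactly} and leaves the low-degree part $(MG)_{s_{r+3}}$ unchanged, which your ``$\leq$'' conclusion does not. Second, the rank is tracked not via abstract obstructions but via a fixed $r\times r$ minor: one chooses $N\subseteq\{1,\dots,n\}$ with $\det(MG[R,N])\neq 0$ and decomposes this determinant as $\sum_{d,\ell}S(MG,d,\ell)$, where $\ell$ counts how many of the $r$ monomial exponents in a term lie above $s_{r+3}$. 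The effect of the $h$-th reduction on this decomposition is a shift $S\mapsto S\cdot x^{-w_h\ell}$; assuming every $M^{(h)}$ has vanishing minor, one extracts integers $z_h\in\{0,\dots,r\}$ with $z_1>z_2>\cdots>z_{r+2}$, which is impossible since $|\{0,\dots,r\}|=r+1$. This pigeonhole on $\ell$-values, not on support positions, is the missing idea in your proposal and is what produces the factor $r+2$.
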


\begin{proof}
If $\delta_1=0$, $d_r(\Cc)$ is realized by a subspace generated by codewords of degree zero by Proposition~\ref{proposition:hammingweights}. If $\delta_1\geq 1$, let $G$ be a generator matrix for $\Cc$ with maximum degree $\delta_1$. Suppose that $d_r(\Cc)$ is realized by a subspace generated by the rows of $MG$, where $M=(m_{i,j}(x))_{i,j}\in\F_q[x]^{r\times k}$ is a matrix with $\max_{i,j}\deg(m_{i,j}(x))\geq ((r+2)q^{\delta_1kr}+1)(n(\delta_1+1)-k+r)+1$ and such that there exists $N\subseteq\{1,\dots n\}$ with $|N|=r$ and $MG[R,N]$ has non-zero determinant, where $R=\{1,\ldots,r\}$. 
By Proposition~\ref{propproperties} 
$$d_r(\Cc)\leq n(\delta_1+1)-k+r.$$
Therefore, there exist $s_1>s_2>\dots>s_{r+3}$ such that 
$$s_{h}-s_{h+1}\geq q^{\delta_1kr}\text{ and }\Phi_{s_1,s_1-s_{r+3}}(MG)=0.$$
By induction, we define a chain of matrices $M^{(1)},\dots,M^{(r+2)}$ as follows. Applying Lemma~\ref{lemma:translation} to $M,s_1,s_2$, we obtain $M^{(1)}=(m^{(1)}_{i,j}(x))_{i,j}\in\F_q[x]^{r\times k}$. In the same way, applying Lemma~\ref{lemma:translation} to $M^{(h)},s_{h},s_{h+1}$, we obtain a matrix $M^{(h+1)}=(m^{(h+1)}_{i,j}(x))_{i,j}\in\F_q[x]^{r\times k}$. By Lemma~\ref{lemma:translation}, we have
\begin{equation*}
(MG)_{s_{r+3}}=(M^{(1)}G)_{s_{r+3}}=\dots=(M^{(r+2)}G)_{s_{r+3}}
\end{equation*}
and
\begin{equation*}
|\supp(MG)|=|\supp(M^{(1)}G)|=\dots=|\supp(M^{(r+2)}G)|.
\end{equation*}
If there exists $1\leq \bar h\leq r+2$ such that $\det(M^{(\bar h)}G[R,N])\neq 0$, then $d_r(\Cc)$ is also realized by $MG^{(\bar h)}$ and we conclude since the maximum degree of $MG^{(\bar h)}$ is smaller than that of $MG$.
Else, compute the determinant of $MG[R,N]$, indicating the monomial operations without solving them.
We obtain
\begin{equation*}
\det(MG[R,N])=\sum_{u}a_ux^{\alpha_{1,u}}\dots x^{\alpha_{r,u}}.
\end{equation*}
We rewrite this sum as
\begin{equation*}
\det(MG[R,N])=\sum_{d,\ell}S(MG,d,\ell),
\end{equation*}
where with $S(MG,d,\ell)$ we denote the partial sum of those terms $a_ux^{\alpha_{1,u}}\dots x^{\alpha_{r,u}}$ that appears in $\det(MG[R,N])$ such that $\sum \alpha_{h,u}=d$ and the number of exponents that are greater or equal than $s_{r+3}$ is exactly $\ell$. Notice that there exists at least a pair $d,\ell$ such that $S(MG,d,\ell)\neq0$ since $\det(MG[R,N])\neq0$. In the same way we define $S(M^{(h)}G,d,\ell)$ for each $1\leq h\leq r+2$.  By construction, we have that for each $1\leq h\leq r+2$ there exists a natural number $w_h$ such that
\begin{equation}\label{equation:Sdl}
\det(M^{(h)}G[R,N])=\sum_{d,\ell}S(M^{(h)}G,d,\ell)=\sum_{d,\ell}S(MG,d,\ell)x^{-w_h(\ell)}.
\end{equation}
For each $1\leq h\leq r+2$, let $y_h$ be
\begin{equation*}
y_h=\max\{d:\text{ there exists }\ell \text{ such that }S(M^{(h)}G,d,\ell)\neq 0\}
\end{equation*}
and $z_h$ be
\begin{equation*}
z_h=\min\{\ell:S(M^{(h)}G,y_h,\ell)\neq 0\}.
\end{equation*}
We prove now that for $0\leq h\leq r+1$ we have that $z_h>z_{h+1}$. Since $\det(M^{(h+1)}G[R,N])=0$, there exists $\bar\ell>z_{h+1}$ such that $S(M^{(h+1)}G,y_{h+1},\bar \ell)\neq0$.  Then, by equation \eqref{equation:Sdl} we have that
\begin{equation*}
y_h -(w_{h+1}-w_h)\bar\ell\geq y_{h+1}\geq y_h -(w_{h+1}-w_h)z_h.
\end{equation*}
So, we obtain that $z_{h+1}<\bar\ell\leq z_h$. Therefore, we have that $r\geq z_1>z_2>\dots>z_{r+2}\geq 0$. Since this is a contradiction, we conclude that there exists $h\in\{1,\dots,r\}$ such that $\det(M^{(h)}G[R,N])\neq0$.
\end{proof}

Even though we do not expect the bound of Theorem~\ref{theorem:bound} to be sharp, the theorem implies that the generalized weights of a convolutional code can be computed by exhaustive search in a finite number of steps. However, since the upper bound in Theorem~\ref{theorem:bound} is large, it does not lead to a practical algorithm. It would be interesting to better understand under which assumptions and by how much this bound can be improved.

\begin{question}\label{question:boundweight}
Is it possible to sharpen the bound in Theorem~\ref{theorem:bound}?
\end{question}

In this paper we improve the bound of Theorem~\ref{theorem:bound} for two families of codes. Proposition~\ref{proposition:hammingweights} shows that, if $\delta=0$, then $d_r(\Cc)$ is realized by a subspace generated by codewords of degree $0$ for all $r$, while Proposition~\ref{proposition:boundsMDS} improves the bound of Theorem~\ref{theorem:bound} for certain MDS codes.

\bigskip
	
In order to further simplify the computation of the generalized weights, we show that they are realized by subspaces generated by elements of minimal support. This is analogous to what happens for generalized Hamming weights of linear block codes and in fact for a much larger class of codes and supports, as discussed in~\cite[Section 3]{GR22}.

\begin{lemma}\label{lemma:minsupp}
Let $\Cc$ be an $(n,k,\delta)$ convolutional code and let $u_1,\dots,u_r\in\Cc$ be such that $\rk\left(\langle u_1,\dots,u_r\rangle_{\F_q[x]}\right)=r$. Then there exist $u_1',\dots,u_r'\in\Cc$ minimal codewords such that $\supp(u_i')\subseteq\supp(u_i)$ for $1\leq i\leq r$ and $\rk\left(\langle u_1',\dots,u_r'\rangle_{\F_q[x]}\right)=r$.
\end{lemma}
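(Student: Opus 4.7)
The plan is to pick a ``most efficient'' witness by a minimization argument and then show via a support-swap trick that each component of the optimal witness must in fact be a minimal codeword of $\Cc$.

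First I would consider the set $\mathcal{S}$ of all $r$-tuples $(v_1,\dots,v_r)\in\Cc^r$ of nonzero codewords such that $\supp(v_i)\subseteq\supp(u_i)$ for every $i$ and $\rk(\langle v_1,\dots,v_r\rangle_{\F_q[x]})=r$. This set is nonempty because it contains $(u_1,\dots,u_r)$, and it is finite because each $\supp(u_i)$ is a finite subset of $\{1,\dots,n\}\times\N$, so the $\F_q$-vector space of codewords supported on $\supp(u_i)$ is finite. I would then pick a tuple $(u_1',\dots,u_r')\in\mathcal{S}$ minimizing the total weight $\sum_{i=1}^{r}|\supp(v_i)|$, and claim that each $u_i'$ is a minimal codeword of $\Cc$.

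To prove the claim, I would argue by contradiction: suppose some $u_i'$ is not minimal, so there exists a nonzero $w\in\Cc$ with $\supp(w)\subsetneq\supp(u_i')\subseteq\supp(u_i)$. Split into two cases according to whether the tuple obtained by replacing $u_i'$ with $w$ is still $\F_q[x]$-independent. In the easy case, $(u_1',\dots,u_{i-1}',w,u_{i+1}',\dots,u_r')$ already lies in $\mathcal{S}$ and has strictly smaller total weight since $|\supp(w)|<|\supp(u_i')|$, contradicting minimality.

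The harder case, which is the main obstacle, is when the replacement drops the rank. Then $w=\sum_{j\neq i}p_j(x)u_j'$ for some $p_j(x)\in\F_q[x]$, so one cannot simply substitute $w$ for $u_i'$. To get around this, I would pick any position $(j_0,k_0)\in\supp(w)$; it automatically lies in $\supp(u_i')$ because $\supp(w)\subsetneq\supp(u_i')$, so both $u_i'[(j_0,k_0)]$ and $w[(j_0,k_0)]$ are nonzero in $\F_q$ and $\alpha=u_i'[(j_0,k_0)]/w[(j_0,k_0)]\in\F_q^*$ is well defined. Set $u_i''=u_i'-\alpha w$. Then $\supp(u_i'')\subseteq\supp(u_i')\cup\supp(w)=\supp(u_i')$ but $(j_0,k_0)\notin\supp(u_i'')$, so $\supp(u_i'')\subsetneq\supp(u_i')$. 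Moreover $u_i''\neq 0$, else $u_i'=\alpha w\in\langle u_j':j\neq i\rangle_{\F_q[x]}$, contradicting independence; and replacing $u_i'$ by $u_i''$ is a unimodular change of basis, so the new tuple still has rank $r$, lies in $\mathcal{S}$ with strictly smaller total weight, again contradicting minimality. This completes the plan; the key insight is the rescaling by $\alpha$, which is what converts ``$w$ belongs to the span of the others'' into a genuine support reduction of $u_i'$ without breaking independence.
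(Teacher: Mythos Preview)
Your minimization argument is sound and organizes the proof differently from the paper. The paper works one index at a time: it picks a minimal codeword $e$ with $\supp(e)\subseteq\supp(u_1)$ and either sets $u_1'=e$ (if rank is preserved) or replaces $u_1$ by $u_1-\alpha e$ and iterates; then it moves on to $u_2,\dots,u_r$. You instead minimize $\sum_i|\supp(v_i)|$ over all admissible tuples at once and derive a contradiction. Both rest on the same support-reduction trick.

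One technical slip to fix in your ``harder case'': from the rank drop you only get that $w$ lies in the $\F_q(x)$-span of $\{u_j':j\neq i\}$, not in their $\F_q[x]$-span (e.g.\ $1$ and $x$ span a rank-one module, but $1\notin\langle x\rangle_{\F_q[x]}$). So the sentence ``$w=\sum_{j\neq i}p_j(x)u_j'$ for some $p_j(x)\in\F_q[x]$'' and the phrase ``unimodular change of basis'' are not justified. The conclusion is unaffected: since $w$ is an $\F_q(x)$-combination of $\{u_j':j\neq i\}$, the set $\{u_i''\}\cup\{u_j':j\neq i\}$ and the set $\{u_1',\dots,u_r'\}$ have the same $\F_q(x)$-span, hence the same $\F_q[x]$-rank $r$. (The paper is careful here, writing $p_1(x)e=\sum_{i\ge2}p_i(x)u_i$ with $p_1\neq0$ rather than $e=\sum p_iu_i$.) Likewise, $u_i''\neq0$ follows directly from $\supp(w)\subsetneq\supp(u_i')$, independently of any span claim. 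With these adjustments your proof is complete.
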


\begin{proof}
Let $e\in\Cc$ be a minimal codeword with $\supp(e)\subseteq \supp(u_1)$. If $$\rk(\langle e,u_2,\dots,u_r\rangle_{\F_q[x]})=r,$$ then let $u_1'=e$. Else, we claim that there exists $\alpha\in\F_q^*$ such that $\supp(u_1-\alpha e)\subsetneq \supp(u_1)$ and $$\rk(\langle u_1-\alpha e,u_2,\dots,u_r\rangle_{\F_q[x]})=r.$$ In fact, if  $\rk(\langle u_1-\alpha e,u_2,\dots,u_r\rangle_{\F_q[x]})<r$, then there exist $p_1(x),\dots,p_r(x),q_1(x),\dots,q_r(x)\in\F_q[x]$ such that $p_1(x),q_1(x)\neq 0$ and
$$p_1(x)e=\sum_{i=2}^rp_i(x)u_i\text{ and } q_1(x)(u_1-\alpha e)=\sum_{i=2}^rq_i(x)u_i.$$
Therefore 
$$p_1(x)q_1(x)u_1=\sum_{i=2}^r (p_1(x)q_i(x)+\alpha q_1(x)p_i(x))u_i,$$
contradicting the assumption that $\rk\left(\langle u_1,\ldots,u_r\rangle_{\F_q[x]}\right)=r$. If $u_1-\alpha e$ is a minimal codeword, let $u_1'=u_1-\alpha e$, otherwise we repeat this process. Notice that, since at each step the support becomes strictly smaller, we find a minimal codeword in a finite number of steps. Proceeding in the same way for $u_2,\dots, u_r$ we find minimal codewords $u_1',\dots,u_r'\in\Cc$ with the desired properties.
\end{proof}

The next theorem shows that, when computing generalized weights, we may restrict to subcodes generated by minimal codewords with the property that the support of each of them is not contained in the union of the supports of the others.

\begin{theorem}\label{theorem:minsuppdisjoint}
Let $\Cc$ be an $(n,k,\delta)$ convolutional code, let $1\leq r\leq k$. Then there exist $r$ minimal codewords $u_1,\dots,u_r\in\Cc$ such that $\langle u_1,\ldots,u_r\rangle_{\F_q[x]}$ realizes $d_r(\Cc)$ and $\supp(u_i)\nsubseteq \bigcup_{j\neq i}\supp(u_j)$ for $1\leq i\leq r$.
\end{theorem}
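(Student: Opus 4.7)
The plan is to start from a tuple of minimal codewords realizing $d_r(\Cc)$, obtained by combining a weight-attaining basis with Lemma~\ref{lemma:minsupp}, and then to enforce the support-independence property via an extremal argument built around the scalar-correction trick used in the proof of Proposition~\ref{propproperties}(2).

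First, choose a rank-$r$ subcode $\mathcal{D}\subseteq\Cc$ realizing $d_r(\Cc)$ together with a basis $v_1,\ldots,v_r$ of $\mathcal{D}$ such that $\wt(\mathcal{D})=|\supp(\langle v_1,\ldots,v_r\rangle_{\F_q})|=d_r(\Cc)$. By Lemma~\ref{lemma:minsupp} we obtain minimal codewords $u_1,\ldots,u_r\in\Cc$ with $\supp(u_i)\subseteq\supp(v_i)$ and $\rk(\langle u_1,\ldots,u_r\rangle_{\F_q[x]})=r$. Since the supports only shrink and the rank is preserved, $\langle u_1,\ldots,u_r\rangle_{\F_q[x]}$ still realizes $d_r(\Cc)$ and moreover $|\bigcup_i\supp(u_i)|=d_r(\Cc)$. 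Among all tuples of $r$ minimal codewords of $\Cc$ with rank $r$ and $|\bigcup_i\supp(u_i)|=d_r(\Cc)$, I would then select one extremal for a functional combining the number of redundant indices $|\{i:\supp(u_i)\subseteq\bigcup_{j\neq i}\supp(u_j)\}|$ with a tie-breaker such as total support size.

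For this extremal tuple, assume for contradiction that some index is redundant, say $\supp(u_1)\subseteq\bigcup_{j\geq 2}\supp(u_j)$. Fix $s_0\in\supp(u_1)$ and, for each $j\geq 2$, set $c_j=-u_j[s_0]/u_1[s_0]\in\F_q$ and $\tilde u_j=u_j+c_j u_1$, so that $s_0\notin\supp(\tilde u_j)$ for every $j\geq 2$ while $(u_1,\tilde u_2,\ldots,\tilde u_r)$ still generates the same rank-$r$ module. Applying Lemma~\ref{lemma:minsupp} to this new tuple yields minimal codewords $u_1,u_2',\ldots,u_r'$ with $\supp(u_j')\subseteq\supp(\tilde u_j)$ (the first entry stays as $u_1$, up to a scalar, because $u_1$ is already minimal). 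By construction, $s_0\in\supp(u_1)\setminus\bigcup_{j\geq 2}\supp(u_j')$, so index~$1$ is no longer redundant in the modified tuple.

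The main obstacle is to show that this modification strictly improves the extremal functional, producing the contradiction needed to close the proof. While index $1$ has become non-redundant, the re-minimization of the $\tilde u_j$ through Lemma~\ref{lemma:minsupp} can in principle introduce new redundancies at other indices, because $u_j'$ may have strictly smaller support than $\tilde u_j$ and thus fit inside the union $\bigcup_{i\neq j}\supp(u_i')$. The delicate part is therefore to pick the extremal functional, and possibly the position $s_0$ (for instance, one lying in as few of the original $\supp(u_j)$ as possible), so that the elimination of an old redundancy cannot be fully offset by the appearance of new ones; a lex-ordered functional that compares first the redundancy count and then an auxiliary support-size quantity is the natural candidate to make this accounting work.
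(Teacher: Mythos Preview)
Your proposal has a genuine gap that you yourself identify: the extremal argument is not completed. You describe a modification that removes the redundancy at index~$1$, but you never show that the total redundancy count (or any lex-ordered refinement of it) strictly decreases, and you admit that re-minimizing the $\tilde u_j$ via Lemma~\ref{lemma:minsupp} may create new redundancies elsewhere. As stated, the argument is a sketch of a strategy whose crucial step is left open.

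The paper avoids this difficulty entirely by reversing the order of your two operations. Instead of first passing to minimal codewords and then trying to repair support-independence, the paper first performs Gaussian elimination on the original (not necessarily minimal) basis $u_1,\ldots,u_r$ to arrange $\supp(u_i)\nsubseteq\bigcup_{j\neq i}\supp(u_j)$, and only then applies Lemma~\ref{lemma:minsupp} to obtain minimal $u_i'$ with $\supp(u_i')\subseteq\supp(u_i)$. The point is that since $\langle u_1',\ldots,u_r'\rangle_{\F_q[x]}$ has rank $r$ and its support union is contained in that of the $u_i$, minimality of $d_r(\Cc)$ forces $\bigcup_j\supp(u_j')=\bigcup_j\supp(u_j)$. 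Now for each $i$ the element of $\supp(u_i)$ lying outside $\bigcup_{j\neq i}\supp(u_j)$ must belong to $\bigcup_j\supp(u_j')$, hence to $\supp(u_i')$ (since $\bigcup_{j\neq i}\supp(u_j')\subseteq\bigcup_{j\neq i}\supp(u_j)$), and so $\supp(u_i')\nsubseteq\bigcup_{j\neq i}\supp(u_j')$. No extremal bookkeeping is needed: the support-independence transfers automatically because the total support is pinned at $d_r(\Cc)$.
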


\begin{proof}
Let $\mathcal{U}=\langle u_1,\dots,u_r\rangle_{\F_q[x]}\subseteq\Cc$ realize $d_r(\Cc)=|\supp(\{u_1,\dots,u_r\})|$. Up to performing Gaussian elimination, we can assume that $\supp(u_i)\nsubseteq \bigcup_{j\neq i}\supp(u_j)$ for $1\leq i\leq r$. By Lemma~\ref{lemma:minsupp} there exist $u_1',\dots,u_r'$ elements of minimal support such that $\supp(u_i')\subseteq \supp(u_i)$ for $1\leq i\leq r$ and $\rk\left(\mathcal{U}'\right)=r$, where $\mathcal{U}'=\langle u_1',\dots,u_r'\rangle_{\F_q[x]}$. By construction, 
\begin{equation}\label{eqn:wtU}
\wt(\mathcal{U})=|\supp(\{u_1,\dots,u_r\})|\geq|\supp(\{u_1',\dots,u_r'\})|\geq\wt(\mathcal{U}').
\end{equation} 
Since $d_r(\Cc)=\wt(\mathcal{U})$ and $\rk\left(\mathcal{U}'\right)=r$, we conclude that $\wt(\mathcal{U})=\wt(\mathcal{U}')$ and $\mathcal{U}'$ realizes $d_r(\Cc)$. In addition, $\supp(\{u_1,\ldots,u_r\})=\supp(\{u'_1,\ldots,u'_r\})$ by (\ref{eqn:wtU}).
Therefore
\begin{equation*}
\bigcup_{j\neq i}\supp(u_j)\subsetneq \bigcup_{j=1}^r\supp(u_j)=\bigcup_{j=1}^r\supp(u_j').
\end{equation*}
    Since by construction $\bigcup_{j\neq i}\supp(u_j')\subseteq\bigcup_{j\neq i}\supp(u_j)$, we conclude that $\supp(u_i')\nsubseteq\bigcup_{j\neq i}\supp(u_j')$.
\end{proof}

\section{MDS convolutional codes}

This section focuses on the generalized weights of MDS and MDP convolutional codes. Proposition~\ref{propproperties} yields lower bounds for the generalized weights of these two classes of codes. Indeed, for an MDS code $\Cc$ we know that \begin{equation}\label{eqn:MDSlowerbound}
d_i(\Cc)\geq  d_1(\Cc)+i-1=\df(\Cc)+i-1=(n-k)\left(\left\lfloor \frac{\delta}{k}\right\rfloor+1\right)+\delta+i.   
\end{equation}
Similarly, for an MDP code $\Cc$ we have that
\begin{equation*}
\begin{split}
d_i(\Cc)&\geq d_1(\Cc)+i-1=\df(\Cc)+i-1\geq d_{L}^c(\Cc)+i-1\\
&=(n-k)\left(\left\lfloor\frac{\delta}{k}\right\rfloor+\left\lfloor \frac{\delta}{n-k}\right\rfloor+1\right)+i
\end{split}
\end{equation*}
	
Proposition~\ref{propproperties} also provides an upper bound on the generalized weights of any convolutional code.	
In this section, we show that this bound can be sharpened when $\Cc$ is MDS or MDP. We start by showing that, if $\Cc$ is an MDS or MDP convolutional code, then the row degrees of a row-reduced matrix of $\Cc$ can only take certain values.
	
\begin{lemma}\label{propineq}
Let $\Cc$ be an $(n,k,\delta)$ a convolutional code, $G(x)=(p_{i,j}(x))_{i,j}$ be a row-reduced generator matrix for $\Cc$ and $\delta_i=\max_{j}\deg(p_{i,j}(x))$ for $1\leq i\leq k$. \begin{enumerate}
\item If $\Cc$ is MDP, then $\left\lfloor \frac{\delta}{k}\right\rfloor\leq\delta_i\leq \left\lfloor \frac{\delta}{k}\right\rfloor+ k-a$, where $\delta=k\left\lceil\frac{\delta}{k}\right\rceil-a$ and $0\leq a<k$.
\item If $\Cc$ is MDS, then
$\left\lfloor \frac{\delta}{k}\right\rfloor\leq \delta_i\leq \left\lfloor \frac{\delta}{k}\right\rfloor+1$.
\item If $\Cc$ is MDS or MDP and $k\mid \delta$, then $\delta_i=\frac{\delta}{k}$.
\end{enumerate}
\end{lemma}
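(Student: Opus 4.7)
The proof plan rests on two standard consequences of $G(x)$ being row-reduced with rows $r_1,\ldots,r_k$ sorted so that $\delta_1\geq\cdots\geq\delta_k$. First, $\sum_{i=1}^k\delta_i=\delta$. Second, for every $i$ the submatrix with rows $r_i,\ldots,r_k$ is still row-reduced, since its leading coefficient matrix consists of a subset of the rows of the full-rank leading coefficient matrix of $G(x)$. Hence $\Cc_i:=\langle r_i,\ldots,r_k\rangle_{\F_q[x]}$ is a convolutional subcode of $\Cc$ of rank $k-i+1$ and internal degree $\sum_{j=i}^k\delta_j$. Combined with the elementary estimate $\wt(c)\leq n(\deg(c)+1)$, these yield the master inequality $\df(\Cc)\leq\wt(r_i)\leq n(\delta_i+1)$ for every $i$.

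For the lower bound $\delta_i\geq q:=\lfloor\delta/k\rfloor$ it is enough to treat $i=k$. In the MDS case, substituting the Singleton equality $\df(\Cc)=(n-k)(q+1)+\delta+1$ and $\delta=kq+s$ with $0\leq s<k$ into the master inequality gives $n(\delta_k-q)\geq s+1-k$; assuming $\delta_k\leq q-1$ then forces $n\leq k-s-1<k$, contradicting $n\geq k$. In the MDP case (where $\Cc$ is noncatastrophic), dividing a minimum-weight codeword by a suitable power of $x$ via Proposition~\ref{proposition:noncatastrophic} shows $\df(\Cc)\geq d_L^c(\Cc)=(n-k)(L+1)+1$, with $L=q+\lfloor\delta/(n-k)\rfloor$. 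The identity $(n-k)(\lfloor\delta/(n-k)\rfloor+1)>\delta\geq kq$ then gives $(n-k)(L+1)>nq$, and the assumption $\delta_k\leq q-1$ produces the contradiction $nq\geq\df(\Cc)\geq(n-k)(L+1)+1>nq$.

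The MDP upper bound is then purely arithmetic: $\sum_i\delta_i=\delta$ together with $\delta_i\geq q$ gives $\delta_1\leq\delta-(k-1)q=q+(\delta-kq)$, and unpacking $\delta=k\lceil\delta/k\rceil-a$ shows $\delta-k\lfloor\delta/k\rfloor=k-a$, hence $\delta_1\leq q+k-a$. The sharper MDS upper bound $\delta_1\leq q+1$ does not follow from this when $s\geq 2$, and here I will apply Theorem~\ref{singletonbound} to $\Cc_2$: since $\df(\Cc_2)\geq\df(\Cc)$ and $\Cc_2$ has rank $k-1$ and internal degree $\delta-\delta_1$,
\begin{equation*}
(n-k)(q+1)+\delta_1\leq(n-k+1)\!\left(\left\lfloor\tfrac{\delta-\delta_1}{k-1}\right\rfloor+1\right).
\end{equation*}
Assuming $\delta_1\geq q+2$, one checks $\delta-\delta_1\leq(k-1)q+s-2<(k-1)(q+1)$ (using $s<k$), so $\lfloor(\delta-\delta_1)/(k-1)\rfloor\leq q$ and the inequality collapses to $\delta_1\leq q+1$, a contradiction.

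Part 3 is then immediate from the lower bound: when $k\mid\delta$, $\delta_i\geq\delta/k$ for every $i$ and $\sum_i\delta_i=\delta$ compels equality throughout. The most delicate ingredient is the MDP lower bound, which relies on the noncatastrophic reduction in order to invoke $\df\geq d_L^c$; the remaining steps are straightforward applications of the Singleton bound of Theorem~\ref{singletonbound} to $\Cc$ itself and, for the sharp MDS upper bound, to the proper subcode $\Cc_2$.
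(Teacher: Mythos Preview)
Your proposal is correct and follows essentially the same approach as the paper's proof: the lower bound comes from comparing $\wt(r_i)\leq n(\delta_i+1)$ against the free distance, the MDP upper bound is the arithmetic consequence of $\sum_i\delta_i=\delta$ and $\delta_i\geq\lfloor\delta/k\rfloor$, the MDS upper bound comes from applying the Singleton bound (Theorem~\ref{singletonbound}) to the rank-$(k-1)$ subcode obtained by deleting the row of largest degree, and part~3 follows from the lower bound together with $\sum_i\delta_i=\delta$. One minor remark: your identity $\delta-k\lfloor\delta/k\rfloor=k-a$ only holds when $k\nmid\delta$ (if $k\mid\delta$ the left side is $0$ while $k-a=k$), but since you have already shown $\delta_1\leq q+(\delta-kq)$ the desired inequality $\delta_1\leq q+k-a$ follows in either case.
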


\begin{proof}
Suppose by contradiction that there exists an index $i$ such that $\delta_i<\left\lfloor \frac{\delta}{k}\right\rfloor$.
If $k=n$ and $\Cc$ is MDS, then $$\wt(p_{i,1}(x),\dots,p_{i,n}(x))\leq n\left\lfloor \frac{\delta}{n}\right\rfloor<\delta+1=\df(\Cc).$$
If $k<n$, then
\begin{equation*}
\begin{split}
&\wt(p_{i,1}(x),\dots,p_{i,n}(x))\leq n\left\lfloor \frac{\delta}{k}\right\rfloor\\
&=n\left(\left\lfloor \frac{\delta}{k}\right\rfloor+1\right)-n=(n-k)\left(\left\lfloor \frac{\delta}{k}\right\rfloor+1\right)+k\left\lfloor \frac{\delta}{k}\right\rfloor+k-n\\
&<(n-k)\left(\left\lfloor \frac{\delta}{k}\right\rfloor+1\right)+(n-k)\left\lfloor \frac{\delta}{n-k}\right\rfloor+1.
\end{split}
\end{equation*}
If $\Cc$ is MDP, then 
$$(n-k)\left(\left\lfloor \frac{\delta}{k}\right\rfloor+1\right)+(n-k)\left\lfloor \frac{\delta}{n-k}\right\rfloor+1=d_L^c(\Cc)\leq \df(\Cc).$$ If $\Cc$ is MDS, then
$$(n-k)\left(\left\lfloor \frac{\delta}{k}\right\rfloor+1\right)+(n-k)\left\lfloor \frac{\delta}{n-k}\right\rfloor+1\leq\df(\Cc).$$
In all cases, the weight of the $i$-th row is strictly smaller than the free distance of the code, a contradiction. This proves that $\delta_i\geq\left\lfloor \frac{\delta}{k}\right\rfloor$ for $1\leq i\leq k$.

If $k\mid\delta$, then $\delta_i= \frac{\delta}{k}$, since $\sum \delta_i=\delta$ and $\delta_i\geq\frac{\delta}{k}$. 
Else, since $\delta_i\geq\left\lfloor \frac{\delta}{k}\right\rfloor$ and $\sum \delta_i=\delta$, then $\delta_i\leq \left\lfloor \frac{\delta}{k}\right\rfloor+k-a$. 

Finally, let $\Cc$ be MDS and suppose that there exists an index $\ell$ such that $\delta_{\ell}>\left\lfloor \frac{\delta}{k}\right\rfloor+1$. Consider the submatrix $G_{\ell}(x)$ obtained from $G(x)$ by deleting the $\ell$-th row and let $\Cc_{\ell}$ be the code associated to $G_{\ell}(x)$. Then $G_{\ell}(x)$ is row-reduced and $\Cc_{\ell}\subseteq \Cc$. By Proposition~\ref{propproperties} we have that $\df(\Cc_{\ell})\geq \df(\Cc)$. On the other hand, by Theorem~\ref{singletonbound} 
\begin{equation*}
\begin{split}
\df(\Cc_{\ell})&\leq (n-k+1)\left(\left\lfloor\frac{\delta-\left\lfloor\frac{\delta}{k}\right\rfloor-2}{k-1}\right\rfloor+1\right)+\delta-\left\lfloor\frac{\delta}{k}\right\rfloor-2+1\\
&=(n-k+1)\left(\left\lfloor\frac{\delta}{k}\right\rfloor+1\right)+\delta-\left\lfloor\frac{\delta}{k}\right\rfloor-1\\
&\leq (n-k)\left(\left\lfloor\frac{\delta}{k}\right\rfloor+1\right)+\delta=\df(\Cc)-1,
\end{split}
\end{equation*}
a contradiction.
\end{proof}

As a consequence of Lemma~\ref{propineq}, we obtain an upper bound for the generalized weights of an MDS code, that improves the bound 4. from Proposition~\ref{propproperties} whenever $k\nmid\delta$.

\begin{proposition}\label{propineq2}
Let $\Cc$ be an $(n,k,\delta)$ convolutional code and write $\delta=k\left\lceil\frac{\delta}{k}\right\rceil-a$ with $0\leq a<k$. 
\begin{enumerate}
\item If $\Cc$ is MDS, then
$$d_k(\Cc)\leq n\left(\left\lceil\frac{\delta}{k}\right\rceil+1 \right)-a.$$
\item If $\Cc$ is MDP, then 
$$d_k(\Cc)\leq n\left(\left\lfloor\frac{\delta}{k}\right\rfloor+k-a+1 \right)-k+1.$$
\end{enumerate}
\end{proposition}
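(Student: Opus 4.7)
The plan is to apply Proposition~\ref{propproperties} item 5, namely $d_k(\Cc)\leq\wt(\Cc)$, and then exhibit an explicit $\F_q[x]$-basis of $\Cc$ whose $\F_q$-span of generators has union of supports within the claimed bound. Start from a row-reduced generator matrix $G$ with rows $c_1,\ldots,c_k$ of non-increasing degrees $\delta_1\geq\cdots\geq\delta_k$, and let $s_1=\#\{i:\delta_i=\delta_1\}$. By Lemma~\ref{propineq}, the MDS (respectively MDP) hypothesis restricts these degrees to a short range; combined with $\sum\delta_i=\delta$, this will eventually pin down $(\delta_1,s_1)$ at the extremal configuration matching the claim.

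The core combinatorial step is to modify the top $s_1$ rows by adding $\F_q$-linear combinations of $x^{\delta_1-\delta_j}c_j$ with $j>s_1$, in order to reshape their leading coefficients at time $\delta_1$. Let $V=\langle c_i[\delta_1]:i\leq s_1\rangle_{\F_q}$ and $W=\langle c_j[\delta_j]:j>s_1\rangle_{\F_q}$; row-reducedness forces $\dim(V+W)=k$ with $V\cap W=0$. I will produce a complement $V'$ of $W$ inside $V+W$ of dimension $s_1$ and support of size at most $n-(k-s_1)$, as follows. Since $\dim W=k-s_1$, there exists a coordinate set $T\subseteq\{1,\ldots,n\}$ with $|T|=k-s_1$ on which the projection $W\to\F_q^T$ is an isomorphism; define $V'=\{v\in V+W:v|_T=0\}$. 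A dimension count gives $\dim V'=s_1$, while $V'\cap W=0$ because $W$ injects into $\F_q^T$, and $\supp V'\subseteq T^c$ so $|\supp V'|\leq n-(k-s_1)$.

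With $V'$ in hand, realize it as the graph of a linear map $f:V\to W$ and set $v_i:=c_i+\sum_{j>s_1}\mu_{i,j}\,x^{\delta_1-\delta_j}c_j$ for $i\leq s_1$, where the constants $\mu_{i,j}\in\F_q$ encode $f$ in the basis $\{c_j[\delta_j]:j>s_1\}$ of $W$. Then $v_i[\delta_1]=c_i[\delta_1]+f(c_i[\delta_1])\in V'$, and the change-of-basis matrix from $(c_1,\ldots,c_k)$ to $(v_1,\ldots,v_{s_1},c_{s_1+1},\ldots,c_k)$ is block upper-triangular with identity diagonal blocks, hence unimodular; so the new tuple is an $\F_q[x]$-basis of $\Cc$. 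Each $v_i$ has degree exactly $\delta_1$ and the remaining rows have degree strictly smaller, so the union of supports sits inside $\{1,\ldots,n\}\times\{0,\ldots,\delta_1\}$: each time slot $t<\delta_1$ contributes at most $n$ positions, and at $t=\delta_1$ only the $v_i$'s contribute, with combined support equal to $\supp V'$ of size at most $n-(k-s_1)$. Summing yields
\[
\wt(\Cc)\leq n(\delta_1+1)-k+s_1.
\]

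To finish, feed in the constraints from Lemma~\ref{propineq}. In the MDS case with $a>0$, the bounds $\lfloor\delta/k\rfloor\leq\delta_i\leq\lfloor\delta/k\rfloor+1$ together with $\sum\delta_i=\delta$ force $\delta_1=\lceil\delta/k\rceil$ and $s_1=k-a$, giving precisely $n(\lceil\delta/k\rceil+1)-a$; the case $a=0$ is covered since then $W=0$, $s_1=k$, and the bound is $n(\delta/k+1)$. In the MDP case, the maximum $\delta_1=\lfloor\delta/k\rfloor+k-a$ forces $s_1=1$ (the sum-constraint requires every other row to have the minimum degree $\lfloor\delta/k\rfloor$), yielding exactly $n(\lfloor\delta/k\rfloor+k-a+1)-k+1$; when $\delta_1$ is strictly smaller, the bound $n(\delta_1+1)-k+s_1$ is at most this quantity, using $k<n$ for MDP codes. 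The main subtlety I anticipate is checking that the modified generators actually have leading coefficient in $V'$ (for this one needs $V'$ to be a complement of $W$, not an arbitrary subspace meeting Wei's block bound) and that the shorter rows $c_{s_1+1},\ldots,c_k$ contribute nothing at time $\delta_1$, which is immediate because their degrees are strictly smaller.
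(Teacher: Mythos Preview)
Your proof is correct and follows essentially the same approach as the paper: modify the highest-degree rows of a row-reduced generator matrix by adding $\F_q$-multiples of shifts of the lower-degree rows so as to shrink the support at the top time slot, then count. Your version is a bit more carefully argued---you explicitly construct the coordinate set $T$ on which $W$ projects isomorphically, whereas the paper asserts without justification that suitable $\alpha_{i,j}$ exist to clear ``the last $a$ entries''---and your uniform bound $d_k(\Cc)\le n(\delta_1+1)-k+s_1$ also makes the MDP case (which the paper leaves as ``similar'') fully explicit.
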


\begin{proof}
1. If $a=0$, then the bound follows from Proposition~\ref{propproperties} and Lemma~\ref{propineq}. So, suppose that $a>0$. Let $G(x)$ be a row-reduced generator matrix for $\Cc$ and let $c_1(x),\dots,c_k(x)$ be the rows of $G(x)$. Since $\Cc$ is MDS, then $\deg(c_1(x))=\dots=\deg(c_{k-a}(x))=\left\lceil\frac{\delta}{k}\right\rceil$ and $\deg(c_{k-a+1}(x))=\dots=\deg(c_k)=\left\lfloor\frac{\delta}{k}\right\rfloor$. For $1\leq i\leq k-a$, there exist $\alpha_{i,1},\dots,\alpha_{i,a}\in\F_q$  such that the last $a$ entries of $\tilde c_i(x)=c_i(x)+\sum_{j=1}^a\alpha_{i,j}xc_{k-a+j}(x)$ have degree smaller or equal than $\left\lfloor\frac{\delta}{k}\right\rfloor$. Since 
$$\rk\left(\langle \tilde c_1(x),\dots,\tilde c_{k-a}(x),c_{k-a+1}(x),\dots,c_k(x)\rangle_{\F_q[x]}\right)=k,$$
we obtain that 
\begin{equation*}
d_k(\Cc)\leq |\supp(c_1(x),\dots,\tilde c_{k-a}(x),c_{k-a+1}(x),\dots,c_k(x))|\leq n\left(\left\lceil\frac{\delta}{k}\right\rceil+1\right)-a.
\end{equation*}

2. The proof is similar to that of part 1.
\end{proof}

The next corollary is a rewriting of the bound for MDS codes from Proposition~\ref{propineq2}.

\begin{corollary}\label{corollary:bound}
Let $\Cc$ be an $(n,k,\delta)$ MDS convolutional code.
\begin{enumerate}
\item If $k\mid\delta$, then
$d_k(\Cc)\leq (n-k)\left(\frac{\delta}{k}+1\right)+\delta +k$.
\item If $k\nmid \delta$, then $d_k(\Cc)\leq(n-k)\left(\left\lfloor \frac{\delta}{k}\right\rfloor+1\right)+\delta +n$.
\item If $k=n$, then $d_k(\Cc)\leq \delta+k$.
\end{enumerate}
\end{corollary}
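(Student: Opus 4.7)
The plan is to derive each of the three cases by specializing the MDS bound from Proposition~\ref{propineq2}, namely $d_k(\Cc)\leq n(\lceil\delta/k\rceil+1)-a$ with $\delta=k\lceil\delta/k\rceil-a$ and $0\leq a<k$, and then performing a direct algebraic manipulation to put it in the claimed form. Since the bound is already established, there is no essential obstacle; the corollary is merely a convenient rewriting that isolates the structurally different regimes $k\mid\delta$, $k\nmid\delta$, and $k=n$.

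For part 1, I would set $a=0$ and $\lceil\delta/k\rceil=\delta/k$ in the bound from Proposition~\ref{propineq2}, obtaining $d_k(\Cc)\leq n(\delta/k+1)$, and then verify the identity
\[
n\left(\tfrac{\delta}{k}+1\right)=(n-k)\left(\tfrac{\delta}{k}+1\right)+\delta+k
\]
by expanding the right-hand side to $(n-k)\delta/k+(n-k)+\delta+k=n\delta/k+n$. For part 2, I would use $\lceil\delta/k\rceil=\lfloor\delta/k\rfloor+1$ and the relation $a=k(\lfloor\delta/k\rfloor+1)-\delta$, substitute into $n(\lceil\delta/k\rceil+1)-a$, and collect terms to obtain
\[
n(\lfloor\delta/k\rfloor+2)-k(\lfloor\delta/k\rfloor+1)+\delta=(n-k)(\lfloor\delta/k\rfloor+1)+\delta+n.
\]

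For part 3, when $k=n$, the bound from Proposition~\ref{propineq2} reads $d_k(\Cc)\leq k(\lceil\delta/k\rceil+1)-a=k\lceil\delta/k\rceil+k-a$, and applying the identity $\delta=k\lceil\delta/k\rceil-a$ immediately gives $d_k(\Cc)\leq\delta+k$. No case of the corollary requires any new code-theoretic input beyond Proposition~\ref{propineq2}; the only care needed is to correctly distinguish the two subcases according to whether $k$ divides $\delta$, since $\lceil\delta/k\rceil$ and $\lfloor\delta/k\rfloor$ then differ. The hardest part, if any, is simply to present the arithmetic cleanly, which is entirely routine.
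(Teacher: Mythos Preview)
Your proposal is correct and follows essentially the same approach as the paper: each part is obtained by specializing the bound $d_k(\Cc)\leq n(\lceil\delta/k\rceil+1)-a$ from Proposition~\ref{propineq2} and rewriting it algebraically. The only minor difference is that for part~3 the paper simply remarks that it follows from part~2 (setting $k=n$), whereas you derive it directly from Proposition~\ref{propineq2}; your version is slightly cleaner since it handles both divisibility subcases uniformly.
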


A consequence of Corollary~\ref{corollary:bound} is that some, and in some cases all, of the generalized weights of an $(n,k,\delta)$ MDS code are determined by the code parameters.
In particular, the generalized weights of an MDS code such that $k\mid\delta$ meet bound 4. from Proposition~\ref{propproperties}.	
	
\begin{proposition}\label{prop:MDSbounds}
Let $\Cc$ be an $(n,k,\delta)$ MDS convolutional code.
\begin{enumerate}
\item If $k=n$, then $d_r(\Cc)=\delta+r$, for $1\leq r\leq k$.
\item If $k\mid\delta$, then
$$d_r(\Cc)=n\left( \frac{\delta}{k}+1\right)-k+r,$$
for $1\leq r\leq k$. 
\item If $\delta=k\left\lceil\frac{\delta}{k}\right\rceil-a$ with $0<a<k$, then 
$$d_r(\Cc)=(n-k)\left(\left\lfloor \frac{\delta}{k}\right\rfloor+1\right)+\delta +r,$$
for $1\leq r\leq a$.
\end{enumerate}
\end{proposition}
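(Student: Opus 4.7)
The plan is to sandwich $d_r(\Cc)$ between the lower bound (\ref{eqn:MDSlowerbound}) already recorded for MDS codes and a matching upper bound. In all three cases, the upper bound will be obtained by combining the strict monotonicity $d_r(\Cc) < d_{r+1}(\Cc)$ from Proposition~\ref{propproperties}(2) with an explicit bound on $d_k(\Cc)$ (in parts 1 and 2) or on $d_a(\Cc)$ (in part 3).

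For part 1, when $k = n$ the bound (\ref{eqn:MDSlowerbound}) reads $d_r(\Cc) \geq \delta + r$. For the matching upper bound, Corollary~\ref{corollary:bound}(3) gives $d_k(\Cc) \leq \delta + k$, and applying strict monotonicity $(k-r)$ times yields $d_r(\Cc) \leq d_k(\Cc) - (k-r) \leq \delta + r$. For part 2, the bound (\ref{eqn:MDSlowerbound}) simplifies to $d_r(\Cc) \geq n\bigl(\tfrac{\delta}{k}+1\bigr) - k + r$, while Corollary~\ref{corollary:bound}(1) provides $d_k(\Cc) \leq n\bigl(\tfrac{\delta}{k}+1\bigr)$, and strict monotonicity propagates this down to $d_r(\Cc) \leq n\bigl(\tfrac{\delta}{k}+1\bigr) - k + r$.

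For part 3, the lower bound (\ref{eqn:MDSlowerbound}) already matches the target formula, so only the upper bound requires genuine work. The key input is Lemma~\ref{propineq}(2): every row-reduced generator matrix $G(x)$ of $\Cc$ has all row degrees in $\{\lfloor\delta/k\rfloor,\lfloor\delta/k\rfloor+1\}$, and the condition $\sum_i \delta_i = \delta = k(\lfloor\delta/k\rfloor+1) - a$ forces \emph{exactly} $a$ of the rows to have degree $\lfloor\delta/k\rfloor$. Let $c_1(x),\ldots,c_a(x)$ denote these rows. Being part of a basis, they are $\F_q[x]$-linearly independent, so $\mathcal{D} = \langle c_1,\ldots,c_a\rangle_{\F_q[x]}$ is a rank-$a$ subcode of $\Cc$. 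The support of the $\F_q$-span $\langle c_1,\ldots,c_a\rangle_{\F_q}$ is contained in $\{1,\ldots,n\} \times \{0,\ldots,\lfloor\delta/k\rfloor\}$, a rectangle of cardinality $n(\lfloor\delta/k\rfloor+1)$, hence
\[
d_a(\Cc) \;\leq\; \wt(\mathcal{D}) \;\leq\; n(\lfloor\delta/k\rfloor+1) \;=\; (n-k)(\lfloor\delta/k\rfloor+1) + \delta + a,
\]
where the last equality uses $\delta + a = k(\lfloor\delta/k\rfloor+1)$. Strict monotonicity then yields $d_r(\Cc) \leq d_a(\Cc) - (a-r) \leq (n-k)(\lfloor\delta/k\rfloor+1) + \delta + r$ for every $1 \leq r \leq a$, matching the lower bound.

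No step looks delicate once the preceding material is in place: the only conceptual input beyond the lower bound is the identification of exactly $a$ low-degree rows in any row-reduced generator matrix, which furnishes a concrete rank-$a$ subcode meeting the upper bound. The remainder is a routine sandwich argument driven by the strict increase of the generalized weights.
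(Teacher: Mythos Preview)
Your argument is correct and, for parts 1 and 2, coincides with the paper's: both reduce to the inequality $d_k(\Cc)\le\df(\Cc)+k-1$ supplied by Corollary~\ref{corollary:bound} and then invoke the strict monotonicity of Proposition~\ref{propproperties}.

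In part 3 there is a small but genuine difference. Both you and the paper isolate the $a$ rows of degree $\lfloor\delta/k\rfloor$ in a row-reduced generator matrix and form the rank-$a$ subcode $\Cc'$ they generate. The paper then observes that $\Cc'$ is itself an MDS $(n,a,a\lfloor\delta/k\rfloor)$ convolutional code (note $a\mid a\lfloor\delta/k\rfloor$), so part 2 applies and gives $d_r(\Cc')=n(\lfloor\delta/k\rfloor+1)-a+r$ for every $r\le a$; combined with $d_r(\Cc)\le d_r(\Cc')$ from Proposition~\ref{propproperties}(3), this closes the sandwich for each $r$ individually. You instead bound only $d_a(\Cc)\le\wt(\Cc')\le n(\lfloor\delta/k\rfloor+1)$ by a direct support count and then descend to smaller $r$ via monotonicity. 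Your route is slightly more elementary, since it avoids checking that $\Cc'$ is MDS; the paper's route is more structured, reusing part 2 and producing the exact value of each $d_r(\Cc')$ rather than just an upper bound on $d_a$.
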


\begin{proof}
The equalities in 1. and 2. are equivalent to $d_k(\Cc)\leq\df(\Cc)+k-1$. The claims now follow from Proposition~\ref{propproperties}, more precisely from the fact that the generalized weights are increasing.

3. Let $G(x)$ be a row-reduced matrix for $\Cc$. Let $\Cc'\subseteq\Cc$ the subcode generated by the $a$ rows of $G(x)$ of degree $\left\lfloor\frac{\delta}{k}\right\rfloor$. Then $\Cc'$ is an MDS $\left(n,a,a\left\lfloor\frac{\delta}{k}\right\rfloor\right)$ convolutional code. Hence
$$(n-k)\left(\left\lfloor \frac{\delta}{k}\right\rfloor+1\right)+\delta +r\leq d_r(\Cc)\leq d_r(\Cc')=n\left(\left\lfloor\frac{\delta}{k}\right\rfloor+1\right)-a+r$$ for $1\leq r\leq a$,
where the first inequality follows from (\ref{eqn:MDSlowerbound}), the second from Proposition~\ref{propproperties}, and the equality from part 2. Since the first and last quantities agree, the thesis follows.
\end{proof}

Proposition~\ref{prop:MDSbounds} shows in particular that, if $k\nmid\delta$, then the first $a=k\left\lceil\frac{\delta}{k}\right\rceil-\delta$ generalized weights of an $(n,k,\delta)$ MDS convolutional code are determined by its parameters.
We now show that the other generalized weights are not in general determined by the parameters of the code. In the next example, we exhibit two MDS codes with parameters $(3,2,1)$ with different second generalized weight. In particular, by Proposition~\ref{proposition:isometric} this provides an example of MDS codes with the same parameters $(n,k,\delta)$, which are not isometric.

\begin{example}
(a) Let $\mathrm{char}(\F_q)\neq2,3$ and let $\Cc_1\subseteq\F_q[x]^3$ with be the code with generator matrix
$$\begin{pmatrix}
2x& x+1& x+1\\1&1&2
\end{pmatrix}.$$
The free distance of $\Cc_1$ is $d_{\mathrm{free}}(\Cc_1)=d_1(\Cc_1)= 3$, as shown in~\cite[Example 3.2]{JP20}. Since $$|\supp(\langle (2x,x+1, x+1),(x,x,2x)\rangle_{\F_q}|=5,$$ we have that $d_2(\Cc_1)\leq 5$. We claim that $d_2(\Cc_1)=5$. Assume by contradiction that there are $c_1,c_2\in\Cc_1$ such that $|\supp(\langle c_1,c_2\rangle_{\F_q})|\leq 4$ and $\rk(\langle c_1,c_2 \rangle_{\F_q[x]})=2$. Then there exists $c\in \langle c_1,c_2\rangle_{\F_q}$ such that $\wt(c)=3$ and $c$ has a zero entry. Let $p(x),q(x)\in\F_q[x]$ such that 
$$c=(p(x)+2xq(x),p(x)+q(x)(x+1),2p(x)+q(x)(x+1)).$$
Since $c$ has a zero entry, then $p(x),q(x)\neq0$. Moreover:
\begin{itemize}
\item If $p(x)+2xq(x)=0$, then 
$$\wt(c)=\wt(q(x)(-x+1))+\wt(q(x)(1-3x))=4.$$
\item If $p(x)+(x+1)q(x)=0$, then 
$$\wt(c)=\wt(q(x)(x-1))+\wt(-q(x)(x+1))=4.$$
\item If $2p(x)+q(x)(x+1)=0$, then
$$\wt(c)=\wt\left(\frac{1}{2}q(x)(3x-1)\right)+\wt\left(\frac{1}{2}q(x)(x+1) \right)=4.$$
\end{itemize}
Therefore $\wt(c)=4$, which yields a contradiction. We conclude that $d_2(\Cc_1)=5$.

\bigskip
	 	 
(b) Let $\mathrm{char}(\F_q)\neq2,3$ and let $\Cc_2\in\F_q[x]^3$ be the code with generator matrix
$$\begin{pmatrix}
2x& x+1& 0\\1&1&2
\end{pmatrix}.$$
It is easy to check that $d_1(\Cc_2)=3$. 
Moreover 
$$|\supp(\langle(2x,x+1,0),(x,x,2x)\rangle_{\F_q})|=4,$$ hence $d_2(\Cc_2)=4$ by Proposition \ref{propproperties}.
\end{example}

Next we produce a new upper bound for the last $k-a$ generalized weights of an MDS convolutional code, in the case when they are not determined by the parameters of the code.
To an $(n,k,\delta)$ convolutional code $\Cc=\langle c_1,\dots,c_k\rangle_{\F_q[x]}$ we associate the linear block code $\Cc[0]=\langle c_1[0],\dots,c_k[0]\rangle_{\F_q}$. Notice that, if $\Cc$ is MDS, then $\dim(\Cc[0])=k$. In the next proposition, we establish a relation between the last $k-a$ generalized weights of $\Cc$ and the first $k-a$ generalized Hamming weights of $\Cc[0]$. 

\begin{proposition}\label{proposition:MDSweightsbound}
Let $\Cc$ be an $(n,k,\delta)$ MDS convolutional code. If  $\delta=k\left\lceil\frac{\delta}{k}\right\rceil-a$ with $0<a<k$, then
\begin{equation*}
d_{a+r}(\Cc)\leq (n-k)\left(\left\lfloor\frac{\delta}{k}\right\rfloor+1\right)+\delta +a+\min\left\{d^{H}_r(\Cc[0]),d^{H}_r(\rev(\Cc)[0])\right\}
\end{equation*}
for $1\leq r\leq k-a$.
\end{proposition}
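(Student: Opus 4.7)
Using the identity $(n-k)(\lfloor\delta/k\rfloor+1)+\delta+a=n(m+1)$ with $m=\lfloor\delta/k\rfloor$ (which follows from $\delta=k(m+1)-a$), the claim reduces to $d_{a+r}(\Cc)\leq n(m+1)+\min\{d_r^H(\Cc[0]),d_r^H(\rev(\Cc)[0])\}$. Since $\Cc$ is MDS it is noncatastrophic, so $\rev(\rev(\Cc))=\Cc$ by Corollary~\ref{corollary:rev}; moreover, by Proposition~\ref{proposition:revcode} together with $d_{a+r}(\Cc)=d_{a+r}(\rev(\Cc))$ of Proposition~\ref{proposition:reverse}, $\rev(\Cc)$ is MDS $(n,k,\delta)$ as well. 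Hence it is enough to prove one of the two bounds, say $d_{a+r}(\Cc)\leq n(m+1)+d_r^H(\rev(\Cc)[0])$, and the other follows by applying the same argument to $\rev(\Cc)$.

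To construct a witnessing subcode, take a row-reduced generator matrix whose rows, by Lemma~\ref{propineq}, decompose as $c_1,\ldots,c_{k-a}$ of degree $m+1$ and $c_{k-a+1},\ldots,c_k$ of degree $m$. Let $\Cc'=\langle c_{k-a+1},\ldots,c_k\rangle_{\F_q[x]}$, which by the proof of Proposition~\ref{prop:MDSbounds} is MDS $(n,a,am)$ and whose $\F_q$-span of these low-degree rows realizes $d_a(\Cc)=n(m+1)$ with support equal to the whole block $\{1,\ldots,n\}\times\{0,\ldots,m\}$. Pick an $r$-dimensional subspace $E\subseteq\rev(\Cc)[0]$ with $|\supp(E)|=d_r^H(\rev(\Cc)[0])$, expand a basis as $w_j=\sum_{i=1}^{k-a}\nu_{j,i}c_i[m+1]+\sum_{l=1}^{a}\eta_{j,l}c_{k-a+l}[m]$, and form the lifts $u_j=\sum_{i}\nu_{j,i}c_i+\sum_{l}\eta_{j,l}xc_{k-a+l}\in\Cc$, which satisfy $\deg u_j=m+1$ and $u_j[m+1]=w_j$. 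Set $\mathcal{D}=\langle c_{k-a+1},\ldots,c_k,u_1,\ldots,u_r\rangle_{\F_q[x]}$. With this generating set, the $\F_q$-span has support inside $(\{1,\ldots,n\}\times\{0,\ldots,m\})\cup(\supp(E)\times\{m+1\})$: every coefficient of $u_j$ at $t\leq m$ is absorbed by the block, and the $t=m+1$ contribution is exactly $\supp(u_j[m+1])=\supp(w_j)$. Thus $|\supp(\cdot)|\leq n(m+1)+d_r^H(\rev(\Cc)[0])$.

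The rank condition $\rk(\mathcal{D})=a+r$ is the key technical point. Reducing modulo $\Cc'$ gives $u_j\equiv\sum_{i=1}^{k-a}\nu_{j,i}c_i$, so the rank equals $a+r$ precisely when the projection $E\to\rev(\Cc)[0]/\rev(\Cc')[0]$ is injective, i.e.\ when $E\cap\rev(\Cc')[0]=0$. I would handle this via a weight comparison exploiting the MDS structure of $\Cc'$: since $\Cc'$ is MDS with all row degrees equal to $m$ (Lemma~\ref{propineq}, part 3), the associated block code $\rev(\Cc')[0]$ is $[n,a]$ MDS of minimum distance $n-a+1$, whereas the Singleton-type bound for generalized Hamming weights gives $\wt(w)\leq|\supp(E)|=d_r^H(\rev(\Cc)[0])\leq n-k+r\leq n-a$ for every $w\in E$, using the hypothesis $r\leq k-a$. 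The strict comparison $\wt(w)<n-a+1$ forces $E\cap\rev(\Cc')[0]=0$, so $\rk(\mathcal{D})=a+r$ and the bound $d_{a+r}(\Cc)\leq n(m+1)+d_r^H(\rev(\Cc)[0])$ follows. The most delicate step is confirming that $\rev(\Cc')[0]$ is really $[n,a]$ MDS; this should be extractable from the fact that $\Cc'$ is an MDS convolutional code whose row-reduced matrix has all rows of the same degree, but it is the point where the argument needs the most care.
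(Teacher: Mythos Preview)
Your argument is correct and is essentially the mirror image of the paper's. The paper establishes the bound $d_{a+r}(\Cc)\leq n(m+1)+d_r^H(\Cc[0])$ directly by lifting a realizing subspace $U\subseteq\Cc[0]$ to codewords $c'_1,\ldots,c'_r$ of degree $\leq m+1$ with prescribed degree-$0$ coefficients, adjoining $xc_1,\ldots,xc_a$ (the $x$-shifted low-degree rows), and then deduces the $\rev(\Cc)[0]$ bound via Proposition~\ref{proposition:reverse}; you instead lift from the top-degree slice and reverse at the end. Both routes hinge on the same technical step: the degree-$0$ (resp.\ top-degree) slice of the low-degree subcode $\Cc'$ has minimum Hamming weight at least $n-a+1$, which forces the chosen $r$-dimensional subspace (whose support has size at most $n-k+r\leq n-a$) to meet that slice trivially, giving the rank. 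Regarding the point you flag as delicate, you do not actually need that $\rev(\Cc')[0]$ is an $[n,a]$ MDS block code; the weight bound alone suffices and follows in one line from the free distance of $\Cc$: for nonzero $c\in\langle c_{k-a+1},\ldots,c_k\rangle_{\F_q}$ one has $\wt(c[m])\geq\wt(c)-nm\geq\df(\Cc)-nm=n-a+1$. The paper carries out exactly this computation with $c[0]$ in place of $c[m]$.
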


\begin{proof}
We start by showing that 
$$d_{a+r}(\Cc)\leq (n-k)\left(\left\lfloor\frac{\delta}{k}\right\rfloor+1\right)+\delta +a+d^{H}_r(\Cc[0]).$$
Since $\Cc$ is MDS, by Lemma \ref{propineq} there exist $c_1,\dots, c_k$ such that $\Cc=\langle c_1,\dots, c_k\rangle_{\F_q[x]}$, $\deg(c_1)=\dots=\deg(c_{a})=\left\lfloor\frac{\delta}{k}\right\rfloor$ and any element of $\langle c_{a+1},\dots,c_k\rangle_{\F_q[x]}$ has degree bigger than or equal to $\left\lfloor\frac{\delta}{k}\right\rfloor+1$.
If $c\in\langle c_1,\dots, c_a\rangle_{\F_q}$, then
$$wt(c[0])\geq \df(\Cc)-n\left\lfloor\frac{\delta}{k}\right\rfloor=n-a+1.$$
Since $d_{k-a}^H(\Cc[0])\leq n-a$, if $U\subseteq\Cc[0]$ realizes $d_r^H(\Cc[0])$, then $U\cap\langle c_1[0],\dots, c_a[0]\rangle_{\F_q}=0$. Therefore, there exist $c'_1,\dots,c'_r\in\Cc$ with $\deg(c'_1),\dots,\deg(c'_r)=\left\lceil\frac{\delta}{k}\right\rceil$ such that $U=\langle c'_1[0],\dots, c'_r[0]\rangle_{\F_q}$.
Let $\mathcal{D}=\langle c'_1,\dots,c'_r,xc_1,\dots,xc_a\rangle_{\F_q[x]}$. Then  $$\rk\left(\mathcal{D}\right)=\rk\left(\langle c'_1,\dots,c'_r,c_1,\dots,c_a\rangle_{\F_q[x]}\right)\geq\dim\left(\langle c'_1[0],\dots,c'_r[0],c_1[0],\dots,c_a[0]\rangle_{\F_q}\right)=a+r.$$ Therefore $\rk(\mathcal{D})=a+r$ and 
\begin{equation*}
d_{a+r}(\Cc)\leq\wt(\mathcal{D})\leq|\supp(\{ c'_1,\dots,c'_r,xc_1,\dots,xc_a\})|\leq(n-k)\left\lceil\frac{\delta}{k}\right\rceil+\delta +a+d^{r}_H(\Cc[0]).
\end{equation*}
In fact, $c'_1[0],\dots,c'_r[0]$ contribute $d^{r}_H(\Cc[0])$ and the rest at most $n\left\lceil\frac{\delta}{k}\right\rceil=(n-k)\left\lceil\frac{\delta}{k}\right\rceil+\delta +a$.
Since the same inequality holds for $\rev(\Cc)$, we conclude by Proposition~\ref{proposition:reverse}.
\end{proof}
	
\begin{question}
Is the bound in Proposition~\ref{proposition:MDSweightsbound} sharp, for any choice of the code parameters?
\end{question}

A positive answer to this question would complete the classification of the generalized weights of MDS codes. Moreover, it would imply that the generalized weights of an MDS code are realized by subspaces generated by codewords of degree at most $\left\lceil\frac{\delta}{k}\right\rceil$, thus answering Question~\ref{question:boundweight} for this specific class of codes. We conclude this section with a result in this direction.

\begin{proposition}\label{proposition:boundsMDS}
Let $\Cc$ be an $(n,k,\delta)$ sMDS convolutional code such that $\rev(\Cc)$ is MDP. Then, $d_r(\Cc)$ is realized by a subcode generated by codewords of degree smaller than $\left\lfloor\frac{\delta}{k}\right\rfloor+\left\lceil\frac{\delta}{n-k}\right\rceil+1$. 
\end{proposition}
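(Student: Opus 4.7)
The plan is to combine Theorem~\ref{theorem:minsuppdisjoint} with the structural constraints imposed by the sMDS property of $\Cc$ and the MDP property of $\rev(\Cc)$ to bound the degrees of the generators of a subcode that realizes $d_r(\Cc)$. Setting $M := \lfloor \delta/k\rfloor + \lceil \delta/(n-k) \rceil$ and $L := \lfloor \delta/k\rfloor + \lfloor \delta/(n-k)\rfloor$, the target is to show that each generator can be chosen of degree at most $M$.

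I would first apply Theorem~\ref{theorem:minsuppdisjoint} to obtain a subcode $\mathcal{D} = \langle u_1, \ldots, u_r\rangle_{\F_q[x]}$ realizing $d_r(\Cc)$ in which each $u_i$ is a minimal codeword with $\supp(u_i) \nsubseteq \bigcup_{j \neq i}\supp(u_j)$. Since an sMDS code is in particular noncatastrophic, Proposition~\ref{proposition:noncatastrophic} allows me to replace each $u_i$ by $u_i/x^{e_i}$, where $x^{e_i}$ is the largest power of $x$ dividing $u_i$. This replacement preserves minimality and $|\supp(u_i)|$, does not affect $\wt(\mathcal{D})$, and arranges that $u_i[0]\neq 0$ for every $i$.

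Next, I would argue that each $u_i$ must then satisfy $\deg u_i \leq M$. From $u_i[0]\neq 0$ and the sMDS identity $d_M^c(\Cc) = \df(\Cc)$, the initial truncation satisfies $\wt((u_i)_{[0,M]}) \geq \df(\Cc)$. Since also $\rev(u_i)\in\rev(\Cc)$ with $\rev(u_i)[0] = u_i[\deg u_i]\neq 0$, the MDP hypothesis on $\rev(\Cc)$ yields $\wt(\rev(u_i)_{[0,L]}) \geq (n-k)(L+1)+1$; translating back to $u_i$, this is a lower bound on the weight of the top $L+1$ coefficients of $u_i$. Assuming $\deg u_i > M$, I would use these two lower bounds in combination with the non-covering condition on the $\supp(u_i)$ to exhibit a replacement of $u_i$ by a lower-degree codeword (e.g.\ a minimum-weight codeword of degree $\leq M$ supplied by the sMDS property, suitably recombined with the other $u_j$) that keeps $\rk(\mathcal{D})=r$ while strictly shrinking $|\supp(\mathcal{D})|$, contradicting that $\mathcal{D}$ realizes $d_r(\Cc)$.

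The main obstacle is that the crude concatenation of the sMDS and MDP-reverse truncation bounds would only force $\deg u_i \leq M+L$, not $\deg u_i \leq M$. Closing this gap is the delicate part: it requires a careful analysis of the overlap between the ranges $[0,M]$ and $[\deg u_i - L, \deg u_i]$, together with the non-covering property of the supports, so that whenever $\deg u_i > M$ the extra coefficients of $u_i$ between position $M+1$ and $\deg u_i - L - 1$ can be shown to be redundant. This allows the construction of a rank-$r$ subcode of $\mathcal{D}$ with generators of degree at most $M$ and with support no larger than that of $\mathcal{D}$, which completes the argument.
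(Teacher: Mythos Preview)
Your setup is right—Theorem~\ref{theorem:minsuppdisjoint} together with the sMDS and reverse-MDP constraints is exactly how the paper begins. But your sketch is incomplete, and the ``delicate part'' you anticipate does not exist. Once a generator $u_i$ with $u_i[0]\neq 0$ has $\deg u_i\geq M+1$, combining $\wt\bigl((u_i)_{[0,M]}\bigr)\geq d^c_M(\Cc)=\df(\Cc)$ with $\wt(u_i[\deg u_i])\geq d^c_0(\rev(\Cc))=n-k+1$ (these two contributions lie in disjoint degree positions) already gives $\wt(u_i)\geq\df(\Cc)+n-k+1$. The non-covering condition from Theorem~\ref{theorem:minsuppdisjoint} then yields
\[
d_r(\Cc)=\bigl|\supp\{u_1,\ldots,u_r\}\bigr|\;\geq\;\wt(u_i)+(r-1)\;\geq\;\df(\Cc)+n-k+r,
\]
and this directly contradicts the upper bound $d_r(\Cc)\leq d_k(\Cc)-(k-r)$ obtained from Corollary~\ref{corollary:bound} combined with the strict monotonicity in Proposition~\ref{propproperties}. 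No replacement construction is needed, and there is no ``$M$ versus $M+L$'' gap: the ingredient you are missing is precisely the appeal to the MDS upper bound on $d_k(\Cc)$ from Corollary~\ref{corollary:bound}, which closes the argument in one line.

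A secondary issue: shifting each $u_i$ individually by a different power of $x$ does \emph{not} in general preserve $\wt(\mathcal{D})$ (the supports may overlap differently after independent shifts), nor does it preserve the non-covering property $\supp(u_i)\nsubseteq\bigcup_{j\neq i}\supp(u_j)$. So even within your framework that normalization step would need repair before the rest of the argument could proceed.
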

	
\begin{proof}
By Theorem~\ref{theorem:minsuppdisjoint} we have that $d_r(\Cc)$ is realized by a subcode $\mathcal{D}$ generated by minimal codewords $c_1,\dots,c_r$ such that $\supp(c_i)\subsetneq\bigcup_{j\neq i} \supp(c_j)$ for $1\leq i\leq r$. If there exists $i$ such that $c_i$ has degree greater or equal to $\left\lfloor\frac{\delta}{k}\right\rfloor+\left\lceil\frac{\delta}{n-k}\right\rceil+1$, then 
$$\wt(c_i)\geq \df(\Cc)+n-k+1,$$
since $\Cc$ is sMDS and its reverse is MDP. We conclude that
$$\wt(\mathcal{D})\geq \df(\Cc)+n-k+r=(n-k)\left(\left\lfloor\frac{\delta}{k}\right\rfloor+1\right)+\delta+n+1+r-k.$$
This contradicts bound 1. or 2. in Corollary~\ref{corollary:bound}, since $\mathcal{D}$ realizes $d_r(\Cc)$.
\end{proof}

\section{Optimal anticodes}

In this section we prove an anticode bound for convolutional codes and define optimal convolutional anticodes as the codes that meet the anticode bound. We give a complete classification of optimal anticodes and we compute their generalized weights. We start by recalling the anticode bound for linear block codes with the Hamming metric.

\begin{theorem}\label{theorem:anticodeboundHamming}
Let $\Cc\subseteq \F_q^n$ be an $\F_q$-linear code. Then
$$\dim(\Cc)\leq\mathrm{maxwt}_H(\Cc).$$
\end{theorem}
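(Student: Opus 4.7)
The plan is to embed $\Cc$ into an $\F_q$-vector space of dimension exactly $\mathrm{maxwt}_H(\Cc)$ by restricting to the support of a maximum-weight codeword. First I would choose $c^* \in \Cc$ with $\wt_H(c^*) = w := \mathrm{maxwt}_H(\Cc)$, set $S = \supp(c^*) \subseteq \{1,\ldots,n\}$ (so that $|S| = w$), and consider the $\F_q$-linear projection $\pi_S : \Cc \to \F_q^S$ that sends each codeword to its restriction to the coordinates indexed by $S$.

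The central step is to show that $\pi_S$ is injective. Suppose $c \in \ker(\pi_S)$, so that $c$ vanishes at every position in $S$. Then $\supp(c)$ is disjoint from $S = \supp(c^*)$, so the supports of $c$ and $c^*$ do not interact and
$$\wt_H(c^* + c) = \wt_H(c^*) + \wt_H(c) = w + \wt_H(c).$$
Since $c^* + c \in \Cc$ and $w$ is the maximum weight in $\Cc$, we must have $\wt_H(c) = 0$, i.e., $c = 0$. Injectivity of $\pi_S$ then gives
$$\dim(\Cc) \leq \dim_{\F_q}(\F_q^S) = |S| = w = \mathrm{maxwt}_H(\Cc).$$

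I do not anticipate any real obstacle. The whole argument rests on the single observation that a nonzero codeword vanishing on the support of a maximum-weight codeword could be added to it without cancellation, producing a strictly larger weight and contradicting maximality. This is essentially the only content of the proof, and the bound follows immediately from the resulting dimension inequality.
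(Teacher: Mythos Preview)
Your argument is correct: choosing a maximum-weight codeword $c^*$, projecting onto its support, and using the disjoint-support additivity of Hamming weight to prove injectivity is a clean and complete proof of the bound.

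Note, however, that the paper does not actually prove this statement; it is presented as a known fact about linear block codes (``We start by recalling the anticode bound for linear block codes with the Hamming metric'') and is only used as a black box in the proof of Theorem~\ref{theorem:anticodebound}. So there is no paper proof to compare against---your proposal simply supplies a valid proof where the paper gives none.
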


A convolutional code contains codewords of arbitrarily large weight. However, any finite dimensional subspace contains a finite number of codewords, therefore one can define its maximum weight in the usual way.

\begin{definition}
Let $\Cc\subseteq\F_q[x]^n$ be an $(n,k,\delta)$ convolutional code and let $U\subseteq\Cc$ be a finite dimensional vector space. The {\bf maximum weight} of $U$ is
$$\maxwt(U)=\max\{\wt(u)\mid u\in U\}.$$
The {\bf maximum weight} of $\Cc$ is $$\maxwt(\Cc)=\min\left\{\maxwt(U):U \text{ is an }\F_q\text{-linear subspace of }\Cc\text{ and }\rk\left(\langle U\rangle_{\F_q[x]}\right)=\rk(\Cc)\right\}.$$
\end{definition}

\begin{remark}
For an $(n,k,\delta)$ convolutional code $\Cc$ one has
$$\maxwt(\Cc)=\min\left\{\maxwt(U):U \text{ is an }\F_q\text{-linear subspace of }\Cc\text{ and }\dim(U)=k=\rk\left(\langle U\rangle_{\F_q[x]}\right)\right\}.$$
In fact, every $\F_q$-linear subspace $V\subseteq\Cc$ which generates a subcode of rank $k$ contains a $k$-dimensional $\F_q$-linear subspace $U\subseteq V$ such that $U$ generates a subcode of rank $k$ and $\wt(U)\leq\wt(V)$.
\end{remark}

As a consequence of Theorem~\ref{theorem:anticodeboundHamming} we obtain a bound for convolutional codes that we call anticode bound, in analogy with the Hamming metric case and the rank-metric case.

\begin{theorem}[Anticode bound]\label{theorem:anticodebound}
Let $\Cc\subseteq\F_q[x]^n$ be an $(n,k,\delta)$ convolutional code. Then
\begin{equation*}
\rk(\Cc)\leq\maxwt(\Cc).
\end{equation*}
\end{theorem}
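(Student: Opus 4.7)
The plan is to reduce the claim to the Hamming anticode bound of Theorem~\ref{theorem:anticodeboundHamming} by embedding a suitably chosen finite-dimensional $\F_q$-subspace of $\Cc$ into an $\F_q$-linear block code of the same dimension and in a weight-preserving way.

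First, by the remark following the definition of $\maxwt(\Cc)$, I would fix an $\F_q$-linear subspace $U \subseteq \Cc$ with $\dim(U) = k = \rk(\Cc) = \rk(\langle U\rangle_{\F_q[x]})$ that attains the minimum, that is, such that $\maxwt(U) = \maxwt(\Cc)$. (Such a $U$ exists because any $V\subseteq\Cc$ with $\rk(\langle V\rangle_{\F_q[x]})=k$ contains a $k$-dimensional subspace $U\subseteq V$ with $\rk(\langle U\rangle_{\F_q[x]})=k$ and $\maxwt(U)\le\maxwt(V)$.)

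Next I would exhibit $U$ as a linear block code. Let $d = \max\{\deg(u(x)) : u(x) \in U\}$, which is finite since $U$ is finite-dimensional, and set $N = n(d+1)$. The coefficient map
$$\iota : \F_q[x]^n_{\leq d} \longrightarrow \F_q^N, \qquad \sum_{t=0}^{d} c[t]x^t \longmapsto (c[0],c[1],\dots,c[d])$$
is an $\F_q$-linear isomorphism onto its image, and by the very definition of $\wt$ together with $|\supp(c(x))|=\wt(c(x))$, one has $\wt(c(x)) = \wt_H(\iota(c(x)))$ for every $c(x) \in \F_q[x]^n_{\leq d}$. Therefore $\iota(U)$ is an $\F_q$-linear block code in $\F_q^N$ with $\dim(\iota(U)) = \dim(U) = k$ and $\maxwt_H(\iota(U)) = \maxwt(U)$.

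Finally, applying Theorem~\ref{theorem:anticodeboundHamming} to the block code $\iota(U)\subseteq \F_q^N$ gives
$$k = \dim(\iota(U)) \leq \maxwt_H(\iota(U)) = \maxwt(U) = \maxwt(\Cc),$$
which is the desired inequality. There is no real obstacle here, as the argument is just a reduction to the classical Hamming anticode bound; the only point that requires a bit of care is the reduction from an arbitrary $\F_q$-subspace with full rank after $\F_q[x]$-extension to a $k$-dimensional one that witnesses $\maxwt(\Cc)$, which is exactly the content of the remark preceding the theorem.
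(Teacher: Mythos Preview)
Your proposal is correct and follows essentially the same approach as the paper: embed a finite-dimensional $\F_q$-subspace $U$ of $\Cc$ into $\F_q^{n(d+1)}$ via the coefficient map in a weight-preserving way, and apply the classical Hamming anticode bound. The only cosmetic difference is that the paper proves $k\leq\maxwt(U)$ for \emph{every} finite-dimensional $U$ with $\rk(\langle U\rangle_{\F_q[x]})=k$ and then takes the minimum, whereas you first select a $k$-dimensional minimizer $U$ (via the preceding remark) and argue for that specific $U$; both variants are equivalent.
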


\begin{proof}
Let $U$ be an $\F_q$-linear subspace of $\Cc$ such that $\rk(\langle U\rangle_{\F_q[x]})=k$ and let $d=\max\{\deg(c):c\in U\}$. There exists an $\F_q$-linear injective homomorphism $\varphi:U\rightarrow \F_q^{n(d+1)}$ such that $\wt(c)=\wt_H(\phi(c))$ for all $c\in\Cc$. Since $\dim(U)=\dim(\varphi(U))$ and $\maxwt(U)=\maxwt_H(\varphi(U))$, by Theorem~\ref{theorem:anticodeboundHamming} we conclude that
\begin{equation*}
\rk(\Cc)=\rk\left(\langle U\rangle_{\F_q[x]}\right)\leq\dim(U)\leq\mathrm{maxwt}_H(\varphi(U))=\maxwt(U).
\end{equation*}
Since the inequality holds for every $\F_q$-linear subspace $U\subseteq\Cc$, we conclude.
\end{proof}
	
\begin{definition}
A convolutional code $\Aa$ is an ${\bf optimal\ (convolutional)\ anticode}$ if $$\rk(\Aa)=\maxwt(\Aa).$$
\end{definition}

The next proposition shows that two optimal anticodes with the same rank have the same generalized weights, when $q\neq2$.

\begin{proposition}\label{proposition:weightsoptimal1}
Let $q\neq2$ and let $\Aa\subseteq \F_q[x]^n$ be an optimal anticode. Then $d_r(\Aa)=r$ for $1\leq r\leq\rk(\Aa)$.
\end{proposition}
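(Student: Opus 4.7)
The plan is to combine the strict monotonicity of the generalized weights with a sharp upper bound on $d_k(\Aa)$, where $k=\rk(\Aa)$. By Proposition~\ref{propproperties} the $d_r(\Aa)$ are strictly increasing in $r$ and $d_1(\Aa)\geq 1$, so $d_r(\Aa)\geq r$ for every $1\leq r\leq k$. It is therefore enough to prove $d_k(\Aa)\leq k$: together with $d_1<d_2<\cdots<d_k\leq k$ and $d_r\geq r$, this forces $d_r(\Aa)=r$ for every $r$.

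To bound $d_k(\Aa)$ I would use the definition of optimal anticode together with the remark preceding the proposition: there exists an $\F_q$-linear subspace $U\subseteq\Aa$ with $\dim_{\F_q}(U)=k$, $\rk(\langle U\rangle_{\F_q[x]})=k$, and $\maxwt(U)=k$. Any $\F_q$-basis $u_1,\dots,u_k$ of $U$ is then also $\F_q[x]$-linearly independent, hence an $\F_q[x]$-basis of the rank-$k$ subcode $\mathcal{D}=\langle U\rangle_{\F_q[x]}\subseteq\Aa$. Consequently
\[
 d_k(\Aa)\leq \wt(\mathcal{D})\leq |\supp(\langle u_1,\dots,u_k\rangle_{\F_q})|=|\supp(U)|.
\]
The whole proposition therefore reduces to the following claim: \emph{$|\supp(U)|=k$}.

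To prove the claim, embed $U$ as an $\F_q$-linear block code in $\F_q^N$ by reading off the polynomial coefficients (pick any degree $d$ exceeding the degrees occurring in $U$ and set $N=n(d+1)$); this embedding preserves $\F_q$-dimension, weight, and support. Restricting to $\supp(U)$ I may assume $\supp(U)=\{1,\dots,N\}$, and must show $N=k$. Since $\dim_{\F_q}(U)=k$, there are coordinates $i_1,\dots,i_k$ for which the projection $\Psi\colon U\to\F_q^k$, $u\mapsto (u[i_1],\dots,u[i_k])$, is an $\F_q$-isomorphism. For each $v\in(\F_q^*)^k$ the preimage $\Psi^{-1}(v)$ is nonzero at $i_1,\dots,i_k$, hence has weight $\geq k$; since $\maxwt(U)=k$, its weight is exactly $k$, so $\Psi^{-1}(v)[i]=0$ for every $i\notin\{i_1,\dots,i_k\}$. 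For any such extra index $i$ the $\F_q$-linear functional $L(v):=\Psi^{-1}(v)[i]=\sum_j a_j v_j$ therefore vanishes on $(\F_q^*)^k$.

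The main obstacle—and the only place where $q\neq 2$ enters—is showing that a linear functional on $\F_q^k$ which vanishes on $(\F_q^*)^k$ must be identically zero. Fixing arbitrary $v_2,\dots,v_k\in\F_q^*$ and letting $v_1$ range over $\F_q^*$, the identity $a_1 v_1=-\sum_{j\geq 2}a_j v_j$ would require $a_1 v_1$ to be constant in $v_1$; since $|\F_q^*|\geq 2$ when $q\neq 2$, this forces $a_1=0$, and by symmetry all $a_j=0$. Thus $L\equiv 0$ contradicts $i\in\supp(U)$, so $N=k$. Over $\F_2$ the set $(\F_2^*)^k$ is a single point and the argument collapses, which is consistent with the hypothesis. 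This yields $|\supp(U)|=k$, hence $d_k(\Aa)\leq k$, and the proposition follows.
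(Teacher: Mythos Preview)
Your proof is correct and follows the same strategy as the paper's: both reduce to showing that an $\F_q$-linear subspace $U$ with $\dim U=\maxwt(U)=k$ has $|\supp(U)|=k$, then conclude via the strict monotonicity of the generalized weights. The only difference is that the paper obtains $|\supp(U)|=k$ by invoking the known structure of optimal anticodes in the Hamming metric (for $q\neq 2$, a Hamming optimal anticode satisfies $|\supp|=\maxwt$), whereas you prove this fact from scratch with your linear-functional argument; both are valid, and your self-contained argument makes explicit exactly where $q\neq 2$ is used.
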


\begin{proof}
Since $\Aa$ is an optimal anticode, there exists an $\F_q$-linear subspace $U$ such that $\rk(\Aa)=\rk\left( \langle U\rangle_{\F_q[x]}\right)=\maxwt(U)$.
Let $\varphi$ be as in the proof of Theorem~\ref{theorem:anticodebound}. By Theorem~\ref{theorem:anticodeboundHamming}  
$$\rk\left( \langle U\rangle_{\F_q[x]}\right)\leq\dim(\varphi(U))\leq\maxwt_H(\varphi(U))=\maxwt(U).$$
Then $\varphi(U)$ is an optimal anticode in the Hamming metric and $|\supp(\varphi(U))|=\maxwt_H(\varphi(U))$, since $q\neq 2$. Since $|\supp(U)|=|\supp(\varphi(U))|$, we have that
\begin{equation*}
|\supp(U)|=\maxwt(U)=\dim(U)=\rk(\Aa).
\end{equation*}
This implies that $d_{\rk(\Aa)}(\Aa)\leq\rk(\Aa)$. Since $1\leq d_1(\Aa)<\ldots<d_{\rk(\Aa)}(\Aa)\leq\rk(\Aa)$ by Proposition~\ref{propproperties}, we conclude.
\end{proof}

The assumption that $q\neq2$ in the previous proposition is necessary, as the next example shows
	
\begin{example}\label{ex:OACq=2}
Let $\Aa=\langle (1,1,0),(1,0,1)\rangle_{\F_2[x]}$. $\mathcal{A}$ is an optimal anticode, since $\rk(\mathcal{A})=2=\maxwt_H(\langle (1,1,0),(1,0,1)\rangle_{\F_2})\geq\maxwt(\mathcal{A})$. However, $d_1(\Aa)=2$ and $d_2(\Aa)=3$ by Proposition \ref{proposition:hammingweights}.  
\end{example}

The next result is the converse of Proposition~\ref{proposition:weightsoptimal1}. Notice that, in this case, we do not need to assume that $q\neq2$.

\begin{proposition}\label{proposition:weightsoptimal2}
Let $\Aa\subseteq\F_q[x]^n$ be an $(n,k,\delta)$ convolutional code. If $d_{k}(\Aa)=k$, then $\Aa$ is an optimal anticode.
\end{proposition}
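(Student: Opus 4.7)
The plan is to prove $\Aa$ is optimal by showing $\maxwt(\Aa) \leq k$, which combined with the anticode bound (Theorem~\ref{theorem:anticodebound}) forces $\maxwt(\Aa) = k = \rk(\Aa)$.

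First I would unpack the hypothesis $d_k(\Aa) = k$ via the equivalent formulation (part 4. of the earlier lemma describing the generalized weights): there exists an $\F_q$-linear subspace $U \subseteq \Aa$ with $\rk(\langle U \rangle_{\F_q[x]}) = k$ and $|\supp(U)| = k$. Since $\dim(U) \geq \rk(\langle U\rangle_{\F_q[x]}) = k$, by passing to a suitable $k$-dimensional subspace (and using the remark following the definition of $\maxwt$, which says one may restrict to $k$-dimensional subspaces generating a rank-$k$ submodule), I may assume $\dim(U) = k$ while preserving both conditions; note the restriction can only decrease the support.

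Next, for every codeword $u \in U$ we have $\supp(u) \subseteq \supp(U)$, so $\wt(u) = |\supp(u)| \leq |\supp(U)| = k$. Hence $\maxwt(U) \leq k$, and by the definition of $\maxwt(\Aa)$ as a minimum over admissible subspaces, $\maxwt(\Aa) \leq \maxwt(U) \leq k$. The anticode bound gives the reverse inequality $\maxwt(\Aa) \geq \rk(\Aa) = k$, so $\rk(\Aa) = \maxwt(\Aa)$ and $\Aa$ is an optimal anticode.

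The argument is essentially a direct unpacking of definitions, so there is no real obstacle; the only point requiring a moment of care is the reduction to a $k$-dimensional witness subspace $U$, but this is handled cleanly by the remark in the text, since restricting to any $k$-dimensional $\F_q$-subspace $U' \subseteq U$ with $\rk(\langle U'\rangle_{\F_q[x]}) = k$ satisfies $\supp(U') \subseteq \supp(U)$, hence $|\supp(U')| \leq k$, which is all that the argument needs.
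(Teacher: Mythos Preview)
Your proof is correct and follows essentially the same route as the paper: pick an $\F_q$-linear subspace $U\subseteq\Aa$ witnessing $d_k(\Aa)=k$, observe $\maxwt(U)\leq|\supp(U)|=k$, and conclude via the anticode bound. The paper additionally invokes Theorem~\ref{theorem:anticodeboundHamming} to close the chain $\maxwt(U)\leq|\supp(U)|=k\leq\dim(U)\leq\maxwt(U)$ and deduce $\maxwt(U)=k$ exactly, whereas you only use the trivial inequality $\maxwt(U)\leq|\supp(U)|$; your reduction to $\dim(U)=k$ is in fact automatic here, since $|\supp(U)|=k$ forces $\dim(U)\leq k$ while $\rk(\langle U\rangle_{\F_q[x]})=k$ forces $\dim(U)\geq k$.
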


\begin{proof}
Let $U$ be an $\F_q$-linear subspace of $\Aa$ such that $d_{k}(\Aa)=|\supp(U)|$ and $\rk\left(\langle U\rangle_{\F_q[x]}\right)=k$. By Theorem~\ref{theorem:anticodeboundHamming}
\begin{equation*}
\maxwt(U)\leq|\supp(U)|=k\leq \dim(U)\leq\maxwt(U).
\end{equation*}
Therefore $\maxwt(U)=k\geq\maxwt(\Aa)$, so $\Aa$ is an optimal anticode by Theorem~\ref{theorem:anticodebound}.
\end{proof}

Because of the generalized-weight-preserving correspondence between $(n,k,0)$ convolutional codes and $(n,k)$ linear block codes, the next result is not surprising.

\begin{proposition}
An $(n,k,0)$ convolutional code $\Cc\subseteq\F_q[x]^n$ is an optimal anticode if and only if $\Cc[0]\subseteq\F_q^n$ is an optimal anticode with respect to the Hamming metric.
\end{proposition}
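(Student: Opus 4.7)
My plan is to reduce the biconditional to the single identity $\maxwt(\Cc) = \maxwt_H(\Cc[0])$. By Proposition~\ref{prop:C[0]}, for an $(n,k,0)$ code we have $\rk(\Cc) = \dim(\Cc[0]) = k$, so the convolutional anticode condition $\rk(\Cc)=\maxwt(\Cc)$ and the Hamming anticode condition $\dim(\Cc[0])=\maxwt_H(\Cc[0])$ both amount to $k = \maxwt(\Cc) = \maxwt_H(\Cc[0])$ once the identity is established.

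The inequality $\maxwt(\Cc) \leq \maxwt_H(\Cc[0])$ is immediate. Since $\Cc = \langle\Cc[0]\rangle_{\F_q[x]}$ by Proposition~\ref{prop:C[0]}, the $\F_q$-subspace $U = \Cc[0]$ satisfies $\dim_{\F_q} U = k = \rk_{\F_q[x]}\langle U\rangle_{\F_q[x]}$ and is therefore admissible in the definition of $\maxwt(\Cc)$. Because $\Cc[0]\subseteq\F_q^n$, the convolutional weight on $U$ coincides with the Hamming weight, so $\maxwt(U) = \maxwt_H(\Cc[0])$.

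For the reverse inequality $\maxwt(\Cc) \geq \maxwt_H(\Cc[0])$, I would fix any admissible $U\subseteq \Cc$ and a codeword $c^*\in\Cc[0]$ with $\wt_H(c^*)=\maxwt_H(\Cc[0])$, and produce $u\in U$ with $\wt(u)\geq \wt_H(c^*)$. Writing a basis of $U$ in a fixed $\F_q$-basis of $\Cc[0]$ yields a matrix $M(x)\in\F_q[x]^{k\times k}$ with $\det M(x)\neq 0$ by the rank condition. For each $\alpha\in\F_q$ the evaluation $\mathrm{ev}_\alpha\colon U\to \Cc[0]$, $u\mapsto u(\alpha)$, is well defined since $\Cc(\alpha)\subseteq\Cc[0]$, and satisfies $\wt_H(u(\alpha))\leq\wt(u)$ coordinatewise; whenever $\det M(\alpha)\neq 0$ it is an $\F_q$-isomorphism onto $\Cc[0]$, so pre-imaging $c^*$ produces the desired $u$.

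The main obstacle is the case where $\det M(x)$ vanishes on all of $\F_q$, which can occur when $\deg\det M\geq q$. I would circumvent this by first normalizing $U$: divide every element by the largest power of $x$ that is a common factor of all elements of $U$. This operation is legitimate because $\Cc$ is noncatastrophic by Proposition~\ref{proposition:noncatastrophic} (being generated over $\F_q[x]$ by constants in $\F_q^n$), and it leaves $\dim U$, $\rk\langle U\rangle_{\F_q[x]}$ and $\maxwt(U)$ unchanged. If $\det M$ still vanishes on every element of $\F_q$ after normalization, I would descend a solution from $\bar\F_q$ back to $\F_q$: pick $\alpha\in\bar\F_q$ with $\det M(\alpha)\neq 0$, obtain a preimage $\tilde u\in U\otimes_{\F_q}\bar\F_q$ with $\tilde u(\alpha)=c^*$, and take the sum of its Galois conjugates to produce an element of $U$ whose evaluation at $\alpha$ equals a nonzero multiple of $c^*$, hence of weight at least $\wt_H(c^*)$.
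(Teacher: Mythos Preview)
Your reduction to the identity $\maxwt(\Cc)=\maxwt_H(\Cc[0])$ is a clean strategy, and the inequality $\maxwt(\Cc)\leq\maxwt_H(\Cc[0])$ via $U=\Cc[0]$ is fine. The evaluation idea for the reverse inequality is also sound \emph{when} some $\alpha\in\F_q$ satisfies $\det M(\alpha)\neq 0$. The gap is in your treatment of the remaining case.

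The Galois descent step does not do what you claim. Take $q=2$, $k=2$, $M(x)=\begin{pmatrix} x & 0\\ 0 & x+1\end{pmatrix}$, so $\det M(x)=x^2+x$ vanishes on all of $\F_2$. With $\alpha\in\F_4\setminus\F_2$ a root of $x^2+x+1$ and $c^*=c_1+c_2$, the preimage is $\tilde u=\alpha^{-1}u_1+(\alpha+1)^{-1}u_2=\alpha^2 u_1+\alpha u_2$. Its Frobenius conjugate is $\alpha u_1+\alpha^2 u_2$, and the sum is $v=u_1+u_2$. But $v(\alpha)=\alpha c_1+(\alpha+1)c_2$, which is \emph{not} a scalar multiple of $c_1+c_2$. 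So the asserted conclusion ``$v(\alpha)$ is a nonzero multiple of $c^*$'' is false in general. There is also no guarantee that $v\neq 0$: the coefficients of $v$ in the basis $(u_i)$ are the traces $\mathrm{Tr}_{\F_q(\alpha)/\F_q}(\beta_i)$, which can all vanish. The preliminary normalization (dividing out a common power of $x$) does nothing to avoid this situation, since $\det M$ can vanish on $\F_q$ without $x$ dividing any row of $M$.

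For comparison, the paper does not attempt to prove the full identity $\maxwt(\Cc)=\maxwt_H(\Cc[0])$. It proves the two implications separately: the direction $\Cc[0]$ optimal $\Rightarrow$ $\Cc$ optimal is exactly your easy inequality; for the converse it invokes Proposition~\ref{proposition:hammingweights} and Proposition~\ref{proposition:weightsoptimal1} when $q\neq 2$, and for $q=2$ gives a direct combinatorial argument based on Theorem~\ref{theorem:minsuppdisjoint} (finding generators of weight at most $2$ and analyzing their coefficient vectors in $\Cc[0]$). Your evaluation argument, as it stands, breaks precisely in the small-field regime that the paper singles out for special treatment.
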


\begin{proof}
If $\Cc[0]\subseteq\F_q^n$ is an optimal anticode with respect to the Hamming metric, then 
$$\maxwt(\Cc)\leq\maxwt(\Cc[0])=\dim(\Cc[0])=k,$$
where the last equality follows from Proposition~\ref{prop:C[0]}. Therefore, $\Cc$ is an optimal anticode.

Conversely, suppose that $\Cc$ is an optimal anticode. If $q\neq2$, by Proposition~\ref{proposition:hammingweights} and Proposition~\ref{proposition:weightsoptimal1} we have that $$\dim(\Cc[0])=k=d_k(\Cc)=d_k^H(\Cc[0])=\lvert\supp(\Cc[0])\rvert\geq\maxwt(\Cc[0]).$$ Therefore $\dim(\Cc[0])=\maxwt(\Cc[0])$ by Theorem~\ref{theorem:anticodebound} and $\Cc[0]$ is an optimal anticode. Let $q=2$. Since $\Cc$ is an optimal anticode and by Theorem~\ref{theorem:minsuppdisjoint}, there exists $U=\langle c_1(x),\dots,c_k(x)\rangle_{\F_q}$ such that $k=\rk\left(\langle U\rangle_{\F_q[x]}\right)=\dim(U)=\maxwt(U)$ and $\supp(c_j)\nsubseteq\bigcup_{i\neq j}\supp(c_i(x))$ for $1\leq j\leq k$. This implies that $|\supp(\sum_{i\neq j}c_i(x))|\geq k-1$ for $1\leq j\leq k$. Moreover,  $|\supp(\sum c_i(x))|\geq k$, therefore equality must hold. It follows that every element of $\supp(c_j)$, except for the one that belongs to no $\supp(c_i)$ for $i\neq j$, must belong to $\supp(\sum_{i\neq j} c_i(x))$. This implies that $\wt(c_j(x))\leq 2$ for $1\leq j\leq k$. Let
$$V=\{c_i[0],\dots,c_i[\deg(c_i)]: 1\leq i\leq k\}.$$
Since $c_i(x)\in\Cc$ for all $i$, then $c_i[j]\in\Cc[0]$ for $0\leq j\leq \deg(c_i)$. Therefore $V\subseteq\Cc[0]$. Moreover, $$k=\rk(\langle U\rangle_{\F_q[x]})\leq\rk(\langle V\rangle_{\F_q[x]})\leq\rk(\Cc)=k$$
from which $$k=\rk(\langle V\rangle_{\F_q[x]})\leq\dim(\langle V\rangle_{\F_q})\leq\dim(\Cc[0])=k.$$
It follows that $V=\Cc[0]$. Therefore, there exist $a_1,\dots,a_k\in \langle V\rangle_{\F_q}$ that satisfy the following conditions:
\begin{itemize}
    \item $\langle a_1,\dots,a_k\rangle_{\F_q}=\Cc[0]$,
    \item $\wt(a_1)=\dots=\wt(a_r)=1$ and $d_1^H(\langle a_{r+1},\dots,a_k\rangle_{\F_q})=2$ for some $1\leq r\leq k$,
    \item $\bigcup_{i=1}^r\supp(a_i)\cap\bigcup_{i=r+1}^k\supp(a_i)=\emptyset$,
    \item for $r+1\leq j\leq k$, there exist $1\leq \bar j\leq k$ and $0\leq \hat j\leq \deg(c_{\bar j})$ such that $a_j=c_{\bar j}\left[\hat j\right]$.
\end{itemize} 
Since $\wt(a_j)=2$ for $r+1\leq j\leq k$, up to permuting $c_1(x),\ldots,c_k(x)$, we may assume without loss of generality that $c_j(x)=c_j[\deg(c_j)]x^{\deg(c_j)}$ for $r+1\leq j\leq k$. In particular, $a_j=c_j[\deg(c_j)]$. Fix $1\leq i\leq r$. Suppose that  $\supp(\langle \{c_i[0],\dots,c_i[\deg(c_i)]\}\rangle)\subseteq\supp(\langle a_{r+1},\ldots,a_k\rangle)$, then $\langle c_i(x),c_{r+1}(x),\dots, c_k(x)\rangle_{\F_q[x]}\subseteq \langle a_{r+1},\dots,a_{k}\rangle_{\F_q[x]}$, but this is a contradiction since $\rk(\langle c_i(x),$ $c_{r+1}(x),\dots, c_k(x)\rangle_{\F_q[x]})=k-r+1$ while $\rk(\langle a_{r+1},\dots,a_{k}\rangle_{\F_q[x]})=k-r$. If instead we assume that $\lvert\supp(\langle \{c_i[0],\dots,c_i[\deg(c_i)]\}\rangle)\cap\supp(\langle a_{r+1},\ldots,a_k\rangle)\rvert=1$, we again find a contradiction since $d_1^H(\langle a_{r+1},\dots,a_k\rangle_{\F_q})=2$. Therefore $\supp(\langle \{c_i[0],\dots,c_i[\deg(c_i)]: 1\leq i\leq r\}\rangle)\subseteq\supp(\langle a_1,\ldots,a_r\rangle)$, hence $\bigcup_{i=1}^r\supp(c_i(x))\cap\bigcup_{i=r+1}^k\supp(c_i(x))=\emptyset$.

Suppose by contradiction that $\maxwt(\Cc[0])\geq k+1$. Then there exists $I\subseteq\{1,\dots,k\}$ such that
$$k+1\leq \wt\left(\sum_{i\in I}a_i\right)=\wt\left(\sum_{i\in S}a_i\right)+\wt\left(\sum_{i\in I\setminus S}a_i\right),$$
where $S=I\cap\{r+1,\dots,k\}$. Since $\wt(\sum_{i\in I\setminus S}a_i)=\lvert I\setminus S\rvert\leq r$, then $\wt(\sum_{i\in S}a_i)\geq k-r+1$. Therefore
\begin{equation*}
    \wt\left(\sum_{i=1}^rc_i(x)+\sum_{i\in S}c_i(x)\right)=\wt\left(\sum_{i=1}^rc_i(x)\right)+\wt\left(\sum_{i\in S}c_i(x)\right)\geq r+\wt\left(\sum_{i\in S}a_i\right)\geq k+1,
\end{equation*}
contradicting the assumption that $\Cc$ is an optimal anticode of rank $k$. Therefore we conclude that $\maxwt(\Cc[0])=k$, that is, $\Cc[0]$ is an optimal anticode.
\end{proof}	
	
The rest of the section is devoted to classifying optimal anticodes. We start by introducing the concept of elementary optimal anticode.

\begin{definition}
A code $\Aa\subseteq\F_q[x]^n$ with $\rk(\Aa)=k$ is an \textbf{elementary optimal anticode} if there exist a set $J=\{j_1,\dots,j_k\}$ with $1\leq j_1<\dots<j_k\leq n$ and non-negative integers $a_1,\dots,a_k$ such that $\Aa=\langle c_1(x),\dots,c_k(x)\rangle_{\F_q[x]}$, where for $1\leq i\leq k$ the only nonzero entry of $c_i(x)$ is $x^{a_i}$ in position $j_i$.
\end{definition}
	
\begin{lemma}\label{lemma:elementary}
Every elementary optimal anticode is an optimal anticode.
\end{lemma}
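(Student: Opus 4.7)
The plan is to show $\maxwt(\Aa) \leq \rk(\Aa) = k$, so that combined with the anticode bound (Theorem~\ref{theorem:anticodebound}) we get equality, which is the definition of an optimal anticode.

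To do this, I would exhibit the natural candidate for the witnessing $\F_q$-linear subspace: take $U = \langle c_1(x), \dots, c_k(x)\rangle_{\F_q}$, where $c_1(x),\dots,c_k(x)$ are the generators from the definition of elementary optimal anticode. Then $\langle U \rangle_{\F_q[x]} = \Aa$, so in particular $\rk(\langle U \rangle_{\F_q[x]}) = k = \rk(\Aa)$, which makes $U$ admissible in the minimum defining $\maxwt(\Aa)$.

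The key observation is that the supports $\supp(c_i) = \{(j_i, a_i)\}$ are pairwise disjoint because the positions $j_1 < j_2 < \cdots < j_k$ are distinct. Consequently, for any $\F_q$-linear combination $c(x) = \sum_{i=1}^k \alpha_i c_i(x)$ with $\alpha_i \in \F_q$, we have
\begin{equation*}
\wt(c(x)) = \#\{i : \alpha_i \neq 0\} \leq k.
\end{equation*}
Hence $\maxwt(U) \leq k$, which gives $\maxwt(\Aa) \leq \maxwt(U) \leq k = \rk(\Aa)$. Combining with the reverse inequality from Theorem~\ref{theorem:anticodebound} yields $\rk(\Aa) = \maxwt(\Aa)$, so $\Aa$ is an optimal anticode.

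There is no real obstacle here: once one recognizes that the specified generators have pairwise disjoint singleton supports, the weight of every $\F_q$-linear combination is immediately bounded by $k$, and the anticode bound closes the argument. The only small subtlety is confirming that $U$ is a legitimate competitor in the definition of $\maxwt(\Aa)$, i.e.\ that $\rk(\langle U \rangle_{\F_q[x]}) = \rk(\Aa)$, which is automatic since $U$ contains a full $\F_q[x]$-basis of $\Aa$.
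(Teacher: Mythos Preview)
Your proof is correct and follows exactly the same approach as the paper: take $U=\langle c_1(x),\dots,c_k(x)\rangle_{\F_q}$, observe that $\maxwt(U)=k$ because the generators have pairwise disjoint singleton supports, and conclude via the anticode bound. The paper's version is more terse (it simply asserts $\maxwt(U)=k$), but the argument is the same.
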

	
\begin{proof}
Let $\Aa$ be an elementary optimal anticode. By definition $\Aa=\langle c_1(x),\dots,c_k(x)\rangle_{\F_q[x]}$, where for $1\leq i\leq k$ the only nonzero entry of $c_i(x)$ is $x^{a_i}$ in position $j_i$. Let $U=\langle c_1(x),\dots,c_k(x)\rangle_{\F_q}$. Then, $\maxwt(\Aa)\leq\maxwt(U)=k=\rk(\Aa)$. We conclude by Theorem~\ref{theorem:anticodebound}.
\end{proof}

\begin{theorem}[Characterization of optimal anticodes]\label{theorem:classification}
Let $q\neq 2$ and let $\Aa\subseteq\F_q[x]^n$ be a code. Then:
$\Aa$ is an optimal anticode if and only if there exists an elementary optimal anticode $\Aa'$ such that $\rk(\Aa')=\rk(\Aa)$ and $\Aa'\subseteq \Aa$.
\end{theorem}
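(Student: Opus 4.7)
The plan is to prove the two implications separately. For $(\Leftarrow)$, assume $\Aa' \subseteq \Aa$ is an elementary optimal anticode of rank $k = \rk(\Aa)$, with generators $u_1(x), \ldots, u_k(x)$ whose only nonzero entries are $x^{a_i}$ in distinct positions $j_1 < \cdots < j_k$. Taking $U = \langle u_1(x), \ldots, u_k(x)\rangle_{\F_q} \subseteq \Aa$, the generators have pairwise disjoint Hamming supports, so $\dim(U) = k$ and $\maxwt(U) = k$; moreover $\rk(\langle U \rangle_{\F_q[x]}) = \rk(\Aa') = k$. Hence $\maxwt(\Aa) \leq k$, and combined with Theorem~\ref{theorem:anticodebound} this yields $\maxwt(\Aa) = k = \rk(\Aa)$, so $\Aa$ is an optimal anticode.

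For $(\Rightarrow)$, suppose $\Aa$ is an optimal anticode of rank $k$. By the definition of $\maxwt$, there exists an $\F_q$-subspace $U \subseteq \Aa$ with $\rk(\langle U \rangle_{\F_q[x]}) = k$ and $\maxwt(U) = k$. Let $d = \max\{\deg(c) : c \in U\}$ and consider the weight-preserving $\F_q$-linear injection $\varphi\colon U \to \F_q^{n(d+1)}$, $c(x) \mapsto (c[0], c[1], \ldots, c[d])$, from the proof of Theorem~\ref{theorem:anticodebound}. That proof gives $k \leq \dim(U) \leq \mathrm{maxwt}_H(\varphi(U)) = \maxwt(U) = k$, so $\dim(U) = k$ and $\varphi(U) \subseteq \F_q^{n(d+1)}$ is a Hamming-metric optimal anticode of dimension $k$. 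The key fact I will invoke is that, for $q \neq 2$, every such Hamming optimal anticode is spanned by coordinate unit vectors; briefly, after Gaussian elimination and a coordinate permutation one obtains a basis $v_\ell = e_\ell + \sum_{j > k} c_{\ell,j}\, e_j$, and if some $c_{\ell,j}$ were nonzero then the assumption $q \neq 2$ would let us pick $(\alpha_1, \ldots, \alpha_k) \in (\F_q^\times)^k$ with $\sum_\ell \alpha_\ell c_{\ell,j} \neq 0$, producing $\sum_\ell \alpha_\ell v_\ell$ of weight at least $k+1$, a contradiction.

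Applying this classification to $\varphi(U)$, and identifying $\{1, \ldots, n(d+1)\}$ with $\{1, \ldots, n\} \times \{0, \ldots, d\}$, there are $k$ pairwise distinct pairs $(j_\ell, a_\ell)$ such that $\varphi(U)$ is the span of the corresponding coordinate unit vectors. Pulling back under $\varphi$, $U$ equals the $\F_q$-span of $k$ monomials $u_\ell(x) \in \Aa$, each with $x^{a_\ell}$ in position $j_\ell$ and zeros elsewhere. Setting $\Aa' = \langle U\rangle_{\F_q[x]} \subseteq \Aa$, its rank equals the number of distinct values in $\{j_1, \ldots, j_k\}$, since any two such monomials supported in the same position $j$ span a rank-$1$ submodule of $\F_q[x]^n$. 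Because $\rk(\Aa') = \rk(\langle U\rangle_{\F_q[x]}) = k$, the $j_\ell$ must all be distinct, so by definition $\Aa'$ is an elementary optimal anticode of rank $k$ contained in $\Aa$. The main obstacle is the Hamming-metric classification step, which is also the only place where the hypothesis $q \neq 2$ is used; the rest of the argument combines the framework from the proof of Theorem~\ref{theorem:anticodebound} with a straightforward rank computation.
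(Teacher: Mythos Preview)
Your proof is correct and rests on the same key fact as the paper's---the classification of Hamming optimal anticodes for $q\neq 2$---but the two proofs are packaged differently.

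For $(\Leftarrow)$, you argue directly from the definition of $\maxwt$, while the paper routes through generalized weights: from $d_{\rk(\Aa)}(\Aa)\leq d_{\rk(\Aa')}(\Aa')=\rk(\Aa')=\rk(\Aa)$ (Proposition~\ref{propproperties}, Lemma~\ref{lemma:elementary}, Proposition~\ref{proposition:weightsoptimal1}) it concludes via Proposition~\ref{proposition:weightsoptimal2}. For $(\Rightarrow)$, the paper again invokes Proposition~\ref{proposition:weightsoptimal1} to get $d_{\rk(\Aa)}(\Aa)=\rk(\Aa)$, and hence a subspace $U$ with $|\supp(U)|=\dim(U)=\rk(\langle U\rangle_{\F_q[x]})=\rk(\Aa)$; you instead redo that step explicitly by embedding $U$ into $\F_q^{n(d+1)}$ and classifying the resulting Hamming optimal anticode. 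One point you spell out that the paper leaves implicit is why the coordinate positions $j_1,\ldots,j_k$ are pairwise distinct (needed for $\Aa'$ to be \emph{elementary}): your rank argument, using that monomials in the same coordinate span a rank-one submodule, is exactly what justifies the paper's concluding sentence. In short, your proof is more self-contained, while the paper's exploits the previously established Propositions~\ref{proposition:weightsoptimal1} and~\ref{proposition:weightsoptimal2}; the underlying mathematics is the same.
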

	
\begin{proof}
$\Leftarrow)$ If $\Aa$ contains an elementary optimal anticode $\Aa'$ of the same rank, then by Proposition~\ref{propproperties}, Lemma~\ref{lemma:elementary}, and Proposition \ref{proposition:weightsoptimal1}
$$d_{\rk(\Aa)}(\Aa)\leq d_{\rk(\Aa')}(\Aa')=\rk(\Aa')=\rk(\Aa).$$
Therefore $\Aa$ is an optimal anticode by Proposition~\ref{proposition:weightsoptimal2}.

$\Rightarrow)$ Assume that $\Aa$ is an optimal anticode. By Proposition~\ref{proposition:weightsoptimal1} we have that $d_{\rk(\Aa)}(\Aa)=\rk(\Aa)$. Then, there exists $U$ such that $|\supp(U)|=\dim(U)=\rk\left( \langle U\rangle_{\F_q[x]}\right)=\rk(\Aa)$. Therefore, $U$ is generated by elements of rank 1 which are supported on different entries. The generators of $U$ as an $\F_q$-linear space generate the elementary optimal anticode $\Aa'$ as an $\F_q[x]$-module.
\end{proof} 

Notice that not all optimal anticodes are elementary, as the next example shows.

\begin{example}
Let $\Aa=\langle (1,x),(x,0)\rangle_{\F_q[x]}$ be a code. It is easy to show that $\Aa$ is not an elementary optimal anticode. However $\Aa'=\langle(x,0),(0,x^2)\rangle_{\F_q[x]}\subseteq \Aa$ is an elementary optimal anticode with generalized weights $d_1(\Aa')=1$ and $d_2(\Aa')=2$. By Proposition \ref{propproperties}, $\Aa$ has the same generalized weights as $\Aa'$, so $\Aa$ is an optimal anticode by Proposition \ref{proposition:weightsoptimal2}.	
\end{example}

We conclude this section with a proof that the dual of an optimal anticode is an optimal anticode, provided that $q\neq 2$.

\begin{lemma}\label{lemma:supportoptimal}
Let $q\neq 2$. Every optimal anticode $\Aa\subseteq\F_q[x]^n$ with $\rk(\Aa)=r$ is contained in a code generated over $\F_q[x]$ by $r$ vectors of the standard basis of $\F_q^n$.
\end{lemma}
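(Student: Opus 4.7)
The plan is as follows. The hypothesis $q\neq 2$ lets me invoke Theorem~\ref{theorem:classification}: since $\Aa$ is an optimal anticode of rank $r$, it contains an elementary optimal anticode $\Aa'$ with $\rk(\Aa')=r$. By the definition of elementary optimal anticode, there exist indices $1\leq j_1<\cdots<j_r\leq n$ and non-negative integers $a_1,\dots,a_r$ such that
$$\Aa'=\langle x^{a_1}e_{j_1},\dots,x^{a_r}e_{j_r}\rangle_{\F_q[x]},$$
where $e_1,\dots,e_n$ denotes the standard basis of $\F_q[x]^n$. The goal is then to prove that $\Aa$ is contained in the coordinate submodule $E=\langle e_{j_1},\dots,e_{j_r}\rangle_{\F_q[x]}\subseteq \F_q[x]^n$.

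The remainder of the argument is a short exercise in module theory over the PID $\F_q[x]$. I would consider the short exact sequence
$$0\to \Aa\cap E\to \Aa\to \Aa/(\Aa\cap E)\to 0$$
of finitely generated $\F_q[x]$-modules. Since $\Aa'\subseteq \Aa\cap E\subseteq \Aa$ and all three modules are free (as submodules of the free module $\F_q[x]^n$), their ranks satisfy $r=\rk(\Aa')\leq\rk(\Aa\cap E)\leq\rk(\Aa)=r$, so $\rk(\Aa\cap E)=r$. Additivity of rank in short exact sequences of finitely generated modules over a PID then forces $\rk(\Aa/(\Aa\cap E))=0$.

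On the other hand, the inclusion $\Aa\hookrightarrow \F_q[x]^n$ induces an injection $\Aa/(\Aa\cap E)\hookrightarrow \F_q[x]^n/E\cong \F_q[x]^{n-r}$, showing that $\Aa/(\Aa\cap E)$ is torsion-free. A finitely generated torsion-free module of rank zero must vanish, so $\Aa=\Aa\cap E\subseteq E$, as required. The only substantive ingredient is the appeal to Theorem~\ref{theorem:classification} to extract an elementary anticode of matching rank; once that is in place, rank arithmetic over $\F_q[x]$ handles the rest, and no serious obstacle remains.
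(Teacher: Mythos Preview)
Your proof is correct. Both your argument and the paper's start by invoking Theorem~\ref{theorem:classification} to obtain an elementary optimal anticode $\Aa'\subseteq\Aa$ of rank $r$ supported on positions $j_1,\dots,j_r$, and then argue that $\Aa$ must be contained in $E=\langle e_{j_1},\dots,e_{j_r}\rangle_{\F_q[x]}$. The paper proceeds by direct contradiction: if some $c\in\Aa$ had a nonzero entry outside $\{j_1,\dots,j_r\}$, then $\Aa'+\langle c\rangle_{\F_q[x]}\subseteq\Aa$ would have rank $r+1$, which is impossible. Your route is the module-theoretic repackaging of the same idea: you note that $\Aa/(\Aa\cap E)$ embeds in the free module $\F_q[x]^n/E$, has rank zero by additivity, and therefore vanishes. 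The underlying rank obstruction is identical; the paper's version is slightly more hands-on, while yours trades the explicit contradiction for standard structure facts over a PID.
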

	
\begin{proof}
Since $\Aa$ is an optimal anticode, by Theorem~\ref{theorem:classification} there exists an elementary optimal anticode $\Aa'\subseteq \Aa$ such that $\rk(\Aa')=\rk(\Aa)=r$. We may assume without loss of generality that $\Aa'$ is maximal with respect to inclusion among the codes with those properties. By definition $\Aa'$ has a system of generators consisting of $r$ vectors of weight 1, let $i_1,\ldots,i_r$ be the positions of the nonzero entries in the generators of $\Aa'$. Suppose that there exists $c\in\Aa\setminus\mathcal{Aa}'$ which has a nonzero entry in a position different from $i_1,\ldots,i_r$ and let $\mathcal{B}=\Aa'+\langle c\rangle_{\F_q[x]}$. Then $\mathcal{B}\subseteq\Aa$ and $\rk(\mathcal{B})=\rk(\Aa)+1$. This is a contradiction, showing that $\Aa\subseteq\langle e_{i_1},\dots,e_{i_r}\rangle_{\F_q}[x]$, where $e_i$ denotes the $i$-th standard basis vector of $\F_q^n$.
\end{proof}

\begin{corollary}\label{corollary:dualoptimal}
Let $q\neq 2$. The dual code $\Aa^{\perp}$ of an optimal anticode $\Aa$ is an elementary optimal anticode generated by vectors of the standard basis of $\F_q^n$. 
\end{corollary}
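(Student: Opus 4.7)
The plan is to compute $\Aa^{\perp}$ explicitly by sandwiching it between two submodules generated by standard basis vectors, using Lemma~\ref{lemma:supportoptimal} as the main input.

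First, I would invoke Lemma~\ref{lemma:supportoptimal} to obtain indices $i_1,\dots,i_r \in \{1,\dots,n\}$ (where $r=\rk(\Aa)$) such that
$$\Aa \subseteq \langle e_{i_1},\dots,e_{i_r}\rangle_{\F_q[x]},$$
where $e_j$ denotes the $j$-th vector of the standard basis of $\F_q^n$. A close reading of the proof of Lemma~\ref{lemma:supportoptimal} shows a little more: $\Aa$ actually contains an elementary optimal anticode $\Aa'$ whose generators are, up to relabeling, of the form $x^{a_s}e_{i_s}$ for $s=1,\dots,r$ and some non-negative integers $a_s$.

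Next, I would prove the two inclusions. For $\Aa^{\perp} \supseteq \langle e_j : j\notin\{i_1,\dots,i_r\}\rangle_{\F_q[x]}$, note that every element of $\langle e_{i_1},\dots,e_{i_r}\rangle_{\F_q[x]}$ is supported on the coordinates $i_1,\dots,i_r$, so it is orthogonal (with respect to the standard inner product) to any $e_j$ with $j\notin\{i_1,\dots,i_r\}$; since $\Aa$ is contained in the former, such $e_j$ belong to $\Aa^{\perp}$. For the opposite inclusion, let $d(x)=(d_1(x),\dots,d_n(x)) \in \Aa^{\perp}$. Then for every generator $x^{a_s}e_{i_s}$ of $\Aa'\subseteq\Aa$ we have $x^{a_s}d_{i_s}(x)=d(x)(x^{a_s}e_{i_s})^{T}=0$, forcing $d_{i_s}(x)=0$ for each $s=1,\dots,r$. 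Hence $d(x)\in\langle e_j : j\notin\{i_1,\dots,i_r\}\rangle_{\F_q[x]}$.

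Combining the two inclusions yields
$$\Aa^{\perp}=\langle e_j : j\notin\{i_1,\dots,i_r\}\rangle_{\F_q[x]},$$
which is by definition an elementary optimal anticode generated by vectors of the standard basis of $\F_q^n$ (with all exponents $a_j$ equal to zero). There is no serious obstacle: the argument is essentially a direct consequence of Lemma~\ref{lemma:supportoptimal} together with the elementary observation that orthogonality against $x^{a_s}e_{i_s}$ simply kills the $i_s$-th coordinate. The only place where the hypothesis $q\neq 2$ is used is implicitly, through Lemma~\ref{lemma:supportoptimal}.
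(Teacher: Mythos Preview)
Your proof is correct and follows essentially the same route as the paper: both use Lemma~\ref{lemma:supportoptimal} to place $\Aa$ inside $\langle e_{i_1},\dots,e_{i_r}\rangle_{\F_q[x]}$ and then identify $\Aa^{\perp}$ with $\langle e_j : j\notin\{i_1,\dots,i_r\}\rangle_{\F_q[x]}$. The only minor difference is in the reverse inclusion: the paper argues that $\rk(\Aa^{\perp})=n-r$ and that no module strictly containing $\langle e_j : j\in J\rangle_{\F_q[x]}$ can have rank $n-r$, whereas you test orthogonality against the generators $x^{a_s}e_{i_s}$ of $\Aa'\subseteq\Aa$ directly---both arguments are equally short and valid.
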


\begin{proof}
Let $\rk(\Aa)=r$, then $\rk(\Aa^{\perp})=n-r$. 
By Lemma~\ref{lemma:supportoptimal}, $\Aa$ is generated by $c_1,\ldots,c_r$ with $\wt(c_j)=1$ for $1\leq j\leq r$. Let $i_j\in\{1,\ldots,n\}$ be the position of the nonzero entry of $c_j$, $1\leq j\leq r$ and let $J=\{1,\dots,n\}\setminus \{i_1,\dots,i_r\}$. Consider the elementary optimal anticode $\mathcal{B}=\langle \{e_j\}_{j\in J}\rangle_{\F_q[x]}$. It is easy to check that $\mathcal{B}\subseteq\Aa^{\perp}$. Moreover, any module that properly contains $\mathcal{B}$ has rank larger than $n-r$. We conclude that $\Aa^{\perp}=\mathcal{B}$.
\end{proof}

\begin{corollary}\label{cor:noncatOAC}
Let $q\neq 2$. An optimal anticode $\Aa\subseteq\F_q[x]^n$ of rank $r$ is noncatastrophic if and only if is generated  by $r$ vectors of the standard basis of $\F_q^n$.
\end{corollary}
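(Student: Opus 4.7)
The plan is to reduce both directions to results already established in this section, chiefly Corollary~\ref{corollary:dualoptimal}, together with the double-dual property of noncatastrophic codes recalled in Section~2.

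For the easy (``if'') direction, suppose $\Aa=\langle e_{i_1},\ldots,e_{i_r}\rangle_{\F_q[x]}$ for some $1\leq i_1<\cdots<i_r\leq n$. Then $\Aa$ has a generator matrix $G\in\F_q[x]^{r\times n}$ whose $j$-th row is the standard basis vector $e_{i_j}$. The $n\times r$ matrix $H$ whose columns are $e_{i_1},\ldots,e_{i_r}$ (viewed in $\F_q^n$) satisfies $GH=\mathrm{Id}_{r\times r}$ and has entries in $\F_q\subseteq\F_q[x]$. By condition~3 of Theorem~\ref{theorem:leftprime}, $G$ is left-prime, hence $\Aa$ is noncatastrophic.

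For the converse (``only if''), assume that $\Aa$ is noncatastrophic and optimal of rank $r$. The key observation is that Corollary~\ref{corollary:dualoptimal} pins down $\Aa^\perp$ completely: there exist indices $i_1,\ldots,i_r$ such that, setting $J=\{1,\dots,n\}\setminus\{i_1,\ldots,i_r\}$, one has $\Aa^\perp=\langle e_j : j\in J\rangle_{\F_q[x]}$. Computing the dual of this explicit elementary optimal anticode directly from the definition gives
\[
(\Aa^\perp)^\perp=\langle e_{i_1},\ldots,e_{i_r}\rangle_{\F_q[x]}.
\]
Since $\Aa$ is noncatastrophic, the property $(\Cc^\perp)^\perp=\Cc$ recalled in Section~2 applies, yielding $\Aa=(\Aa^\perp)^\perp=\langle e_{i_1},\ldots,e_{i_r}\rangle_{\F_q[x]}$, as desired.

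There is no real obstacle here: the classification work has already been done in Lemma~\ref{lemma:supportoptimal} and Corollary~\ref{corollary:dualoptimal}, and the corollary is essentially a clean repackaging via biduality. The only minor point to verify carefully is the elementary computation that the dual of $\langle e_j:j\in J\rangle_{\F_q[x]}$ is indeed $\langle e_i:i\notin J\rangle_{\F_q[x]}$, which follows immediately from the definition of $\Cc^\perp$ since two standard basis vectors $e_i,e_j\in\F_q[x]^n$ are orthogonal with respect to the standard pairing precisely when $i\neq j$.
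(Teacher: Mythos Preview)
Your proof is correct and follows essentially the same approach as the paper: both directions hinge on Corollary~\ref{corollary:dualoptimal} together with the biduality $(\Aa^\perp)^\perp=\Aa$ for noncatastrophic codes. The only cosmetic difference is that for the ``if'' direction you invoke the right-inverse criterion of Theorem~\ref{theorem:leftprime}, whereas the paper appeals to Proposition~\ref{proposition:noncatastrophic}; both justifications are immediate.
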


\begin{proof}
A code generated by $r$ vectors of the standard basis of $\F_q^n$ is an elementary optimal anticode and it is noncatastrophic by Proposition~\ref{proposition:noncatastrophic}.
Conversely, let $\Aa$ be a noncatastrophic optimal anticode. Then $\Aa=(\Aa^{\perp})^{\perp}$ and we conclude by Corollary~\ref{corollary:dualoptimal}.
\end{proof}

The conclusions of Lemma~\ref{lemma:supportoptimal}, Corollary~\ref{corollary:dualoptimal}, and Corollary~\ref{cor:noncatOAC} do not hold over $\F_2$, as the next example shows.

\begin{example}
Let $\Aa=\langle(1,1,0),(1,0,1)\rangle_{\F_2[x]}$ be the optimal anticode from Example~\ref{ex:OACq=2}. It is clear that $\Aa$ is not contained in any subcode of $\F_2[x]^3$ generated by two vectors of the standard basis of $\F_2^3$. Moroever, it is easy to show that $\Aa^{\perp}=\langle(1,1,1)\rangle_{\F_2[x]}$, in particular $\Aa^{\perp}$ is not an elementary optimal anticode. Finally, $\Aa$ is noncatastrophic by Proposition~\ref{proposition:noncatastrophic}, but it does not contain any vector of weight 1.
\end{example}

\bibliographystyle{plain}	
\bibliography{convgenwbib}
\end{document}